\definecolor{revisioncolor}{RGB}{200,0,0}
\newcommand{\rev}[1]{#1}
\newenvironment{reve}{}{}
\newcommand{\A}{\mathcal{A}}
\newcommand{\D}{\mathcal{D}}
\newcommand{\E}{\mathcal{E}}
\newcommand{\F}{\mathcal{F}}
\newcommand{\I}{\mathcal{I}}
\newcommand{\2}{\mathbf{2}}
\newcommand{\twobar}{\overline{2}}
\newcommand{\threebar}{\overline{\mathbf{3}}}
\newcommand{\B}{\mathbb{B}}
\newcommand{\N}{\mathbb{N}}
\newcommand{\R}{\mathbb{R}}
\newcommand{\Rbar}{\overline{\mathbb{R}}}
\newcommand{\qedclaim}{\hfill $\diamond$ \medskip}
\theoremstyle{definition}
\newtheorem{definition}{Definition}
\newtheorem{example}{Example}
\theoremstyle{plain}
\newtheorem{property}{Property}
\newtheorem{lemma}{Lemma}
\newtheorem{theorem}{Theorem}
\newtheorem{corollary}{Corollary}
\newtheorem{conjecture}{Conjecture}
\newtheorem{claim}{Claim}
\newtheorem{remark}{Remark}
\newtheorem{fact}{Fact}
\DeclareMathOperator*{\argmin}{argmin}
\newenvironment{proofclaim}{\noindent{\em Proof of the claim.}}{\qedclaim}
\title{New Perspectives on Semiring Applications to Dynamic Programming}
\author{
  Ambroise Baril\thanks{Université de Lorraine, CNRS, LORIA, F-54000 Nancy, France} \and
  Miguel Couceiro\thanks{Université de Lorraine, CNRS, LORIA, F-54000 Nancy, France; 
  INESC-ID, Instituto Superior Técnico, Universidade de Lisboa, Portugal} \and
  Victor Lagerkvist\thanks{Linköping University, Sweden}
}
\date{}
\begin{document}

\maketitle

\begin{abstract}
Semiring algebras have been shown to provide a suitable language to formalize many noteworthy combinatorial problems. For instance, the \textsc{Shortest-Path} problem can be seen as a special case of the \textsc{Algebraic-Path} problem when applied to the tropical semiring. The application of semirings typically makes it possible to solve extended problems without increasing the computational complexity.
In this article we further exploit the idea of using semiring algebras to address and tackle several extensions of classical computational problems by dynamic programming. 

We consider a general approach which allows us to define a semiring extension of {\em any} problem with a reasonable notion of a certificate (e.g., an {\sf NP} problem).
This allows us to consider cost variants of these combinatorial problems, as well as their counting extensions where the goal is to determine how many solutions a given problem admits. The approach makes no particular assumptions (such as idempotence) on the semiring structure. We also propose a new associative algebraic operation on semirings, called $\Delta$-product, which enables our dynamic programming algorithms to count the number of solutions of minimal costs. We illustrate the advantages of our framework on two well-known but computationally very different {\sf NP}-hard problems, namely, \textsc{Connected-Dominating-Set} problems and finite-domain \textsc{Constraint Satisfaction Problems} ({\sc Csp}s). In particular, we prove fixed parameter tractability ({\sf FPT}) with respect to clique-width and tree-width of the input. \rev{This also allows us to count solutions of minimal cost, which is an overlooked problem in the literature.}
\end{abstract}

\paragraph{Keywords.}
Semiring, Dynamic Programming, Fixed Parameter Tractability, Constraint Satisfaction Problems, Connected Dominating Set.

\section{Introduction}
\label{sec:intro}
In this article we investigate {\em semiring extensions} of computational problems. First, we take an algebraic viewpoint and define a novel semiring operation that increases the range of semiring algorithms. Second, we apply our semiring framework algorithmically and construct fixed-parameter tractable algorithms for semiring extensions of {\sc Connected-Dominating-Set} and the {\sc Constraint Satisfaction Problem} ({\sc Csp}). 

\subsection{Background}

A {\em semiring} is an algebra $\A=(A,+,\times,0_A,1_A)$ where $(A,+,0_A)$ is a commutative monoid, $(A,\times,1_A)$ is a monoid, $\times$ distributes over $+$, and where $0_A$ is absorbing with respect to $\times$. Well-known examples include the Boolean semiring $(\{\bot, \top\},\vee,\land,\bot,\top)$, the integer semiring $(\N,+,\times,0,1)$, and the tropical semiring $(\R \cup \{\infty\},\min,+,\infty,0)$.
Semirings have proven to be versatile in extending various types of algorithms. A classic example is the Floyd-Warshall algorithm used for computing the minimal path between any two nodes in a weighted graph~\cite{floyd1968,warshall1962}.
Hence, in this problem we are interested in computing the minimum value of all sums of weights corresponding to paths in the given graph, and from the semiring perspective we are then simply evaluating an expression with respect to the tropical semiring.
Extending the basic Floyd-Warshall algorithm to an arbitrary semiring is relatively easy and increases the expressive strength tremendously (see, e.g., Lehmann~\cite{lehmann1977} for a general treatment). 
For example, if we use the Boolean semiring instead of the tropical semiring then the Floyd-Warshall algorithm can be used to compute the transitive closure of a given graph. Generally, when a problem is generalized by a semiring then the resulting problem is called a {\em semiring extension}. This problem has attracted significant attention and is in its most general form known as the {\em algebraic path }problem, whose roots can be traced back to Kleene's algorithm for converting regular expressions to finite automata~\cite{kleene1956}. For a comprehensive discussion of the literature, see e.g.\ Mohri~\cite{mohri2002}. However, semiring extensions are by no means limited to graph problems. Notable formalisms include the {\em semiring constraint satisfaction }problem ({\sc Scsp}) by Bistarelli et al.~\cite{bistarelli1997semiring} which greatly generalizes various forms of fuzzy reasoning where constraints are allowed to be soft (an arbitrary semiring value) rather than crisp (true or false). This problem is in turn subsumed by the {\em sum-of-product CSP} problem which is currently of central importance in artificial intelligence~\cite{bacchus2009solving,eiter2023semiring}.

In this article we are interested in semiring extensions and take a very general approach and define a semiring extension over {\em any} computational problem in {\sf NP}.
Thus, for any problem $\mathbf{\Pi}$ in {\sf NP} and any reasonable notion of a solution space (e.g., the set of certificates) and any semiring we consider the problem of computing a {\em join/union expression} of the set of solutions of a given instance of $\mathbf{\Pi}$. We refer the reader to Section \ref{sec:semirings} for a definition of a join/union expression but remark that the basic idea is to compute an  expression which in combination with a semiring can be used to compute the desired property. This strictly generalizes the problem of {\em counting} the number of solutions (typically denoted $\#\mathbf{\Pi}$) and the problem of finding a solution of minimal cost ({\sc Cost}-$\mathbf{\Pi}$). 
Let us consider two examples. First, the well-known problem of determining whether an input graph $G$ admits a homomorphism to the template graph $H$ is known as the {\em $H$-{\sc Coloring}} problem. Here, a natural certificate of a yes-instance is simply the homomorphism itself. Then $\#H$-{\sc Coloring} is the problem of counting the number of homomorphisms, while in the {\sc Cost}-$H$-{\sc Coloring} each assignment of a variable to a value is associated with a weight, and the goal is to find the homomorphism which minimizes the total sum of weights. Notice that the decision problem $H$-{\sc Coloring} asks for an element of the Boolean semiring $\2=\{\bot,\top\}$ ($\top$ if a homomorphism exists and $\bot$ otherwise), and that the counting extension $\#H$-{\sc Coloring} expects an answer in the natural semiring $(\N,+,\times,0,1)$. The cost version {\sc Cost}-$H$-{\sc Coloring} can be modeled using the tropical semiring $\R_{\min}=(\Rbar,\min,+,\infty,0)$. Moreover, the well studied \textsc{List-$k$-Coloring} problem, where each vertex  is restricted to a list of ``allowed colors'' \cite{jensen2011graph,thomassen19953}, can also be seen as an extension of \textsc{$k$-Coloring}, even if it also lies in the boolean semiring. The semiring extension of $H$-{\sc Coloring} then generalizes  all of these problems under a single umbrella, potentially allowing one to benefit from a single algorithm. For a second example, the {\sc Dominating-Set} problem is the problem of deciding whether a given graph contains a subset of $k$ vertices such that every other vertex has at least one neighboor in the subset. In the {\sc Connected-Dominating-Set} problem, we, as the name suggests, additionally require that the subset of vertices is connected. This problem is important in, for example, network communication where the connected dominating set is viewed as network backbone that the other nodes can communicate via (for more applications see e.g.\ the book~\cite{10.5555/2412083}). A natural certificate is then simply the subset of vertices, and the semiring extension of ({\sc Connected-}){\sc Dominating-Set} then makes it possible to count the number of (connected) dominating sets as well as finding a (connected) dominating set of minimal cost.

Solving semiring extensions is thus highly desirable from a practical perspective since one effectively gets many algorithms for the price of one and can reuse the algorithm in different applications simply by choosing new semirings. %
From a complexity perspective many complexity classes have been introduced in order to study semiring extensions of {\sf NP} problems (corresponding to the Boolean semiring $\B$ since they are decision problems). For example, the class {\sf \#P} \cite{DBLP:journals/tcs/Valiant79} for the counting extension (lying in the natural semiring $\N$), the class {\sf MOD$_p$} \cite{hertrampf1990relations} for the counting problems modulo an integer $p\ge 2$ (in the semiring $\mathbb{Z}_p$), and the class {\sf OptP} for optimization problem \cite{krentel1986complexity} (often computed in the tropical semiring $\R_{\min}$). More generally, Eiter \&  Kiesel~\cite{eiter2023semiring} define a semiring extension \textsf{NP}$(\mathcal{R})$ of \textsf{NP} for any commutative semiring $\mathcal{R}$ and a notion of a semiring Turing machine. This makes it possible to prove a unifying meta-theorem which identifies the corresponding variant of {\sc Sat} as a complete problem for each of the aforementioned classes. 
Despite this, there has been comparably little research done on {\em combining} semiring extensions.
For example, consider problems consisting in counting the number of solutions of minimal cost (described by the class {\sf \#$\cdot$OptP} \cite{hermann2009complexity}). Thus, intuitively, one wishes to combine the natural semiring $\N$ (counting) and the tropical semiring $\R_{\min}$ (weighted) into a new semiring with the hope of efficiently solving the associated problem of counting solutions of minimal cost. This raises an intriguing question: can counting solutions of minimal cost (or similar variations) be addressed using the same algorithmic techniques as counting problems or minimal cost problems alone? And can we accomplish this while staying in the familiar semiring framework? Or do these combinations yield fundamentally different problems?

\subsection{Our contribution}

We begin (in Section~\ref{sec:semirings}) by introducing our semiring framework. As remarked, we adopt a very general approach and define a semiring extension for {\em every} problem in {\sf NP}.  This drastically increases the reach of our semiring framework and makes it possible to study both the aforementioned finite domain {\sc Csp} problem and the {(\sc Connected-){\sc Dominating-Set}} even though these problems have a very different flair. Furthermore, to obtain as general results as possible we avoid the (otherwise commonly assumed) assumption that  the semiring in question is idempotent. We remark that idempotence is typically assumed in the \textsc{algebraic path} problem as well as in the {\sc Scsp} formalism by Bistarelli~\cite{bistarelli1997semiring} via so-called {\em $c$-semirings}. Semiring approaches that requires the milder, but arguably less natural, assumptions such as the {\em $k$-closure}, was considered by Mohri~\cite{mohri2002} in the context of the \textsc{Algebraic-Path} problem. Idempotency is typically assumed since it makes it easier (in an algorithmic context) to combine (via union) two partial (but not necessarily disjoint) sets of solutions into a larger one. 
For instance, if we want to solve a problem of the form \textsc{Cost-}$\mathbf{\Pi}$, we can find the minimal cost of the union of two sets (corresponding to partial solutions of the problem) by computing the minimal costs in both sets, and keep the minimum. This approach is correct because of the idempotence of the underlying tropical semiring $\R_{\min}$. However, this approach would fail to solve the \#$\mathbf{\Pi}$ problem, as the cardinality of the union can only be expressed as the sum of the cardinality of both sets if the sets are disjoint. The approach fails because of the non-idempotence of the natural semiring $\N$. In our approach one is instead given a computational problem with a reasonable notion of a solution and is tasked with constructing a join/union expression representing the set of solutions of a problem instance. The interest is that individual components in the join/union expression can be interpreted as dynamic programming operations, e.g., combining two solutions to subproblems to a solution to a larger problem. Given this link one might suspect that polynomial sized join/union expressions cannot always be computed efficiently, and we indeed provide examples (in Section~\ref{sec:semiring_expressions}) where this is not possible (unless {\sf P = NP}). Since this is a purely combinatorial statement it is an interesting open question whether this could be proved unconditionally.

In Section~\ref{sec:Delta-product} we then completely resolve the problem of combining different semiring extensions. Our main technical tool is the {\em $\Delta$-product} which, given two semirings $\mathcal{D}$ and $\mathcal{A}$ effectively produces a new semiring $\mathcal{D} \Delta \mathcal{A}$ which combines the relevant properties. More specifically, we assume that $\mathcal{A}$ is a commutative semiring and $\mathcal{D}$ a totally ordered, idempotent, commutative dioid, which allows it to represent a set of weights.
The algorithmic applications of this new semiring correspond exactly to applying the semiring $\A$ to optimal solutions with respect to the weights in $\D$. In particular, by choosing $\A=\N$ and $\D=\R_{\min}$, it enables us to count the solutions of minimal cost. Our construction therefore makes the problems in the class {\sf $\#\cdot$OptP} \cite{hermann2009complexity} fall into the category of semiring extensions. In particular, the results of completeness of the problems {\sc $\#$Min-Card-Sat} and {\sc $\#$Min-Weight-Sat} \cite{hermann2009complexity} now become particular improved cases of the meta-theorem over the completeness of any semiring variant of {\sc Sat} \cite{eiter2023semiring}. Moreover, since the $\Delta$-products implies that {\sf $\#\cdot$OptP} is subsumed by the semiring formalism, any general positive result applicable to all semirings are now applicable specifically to these problems.

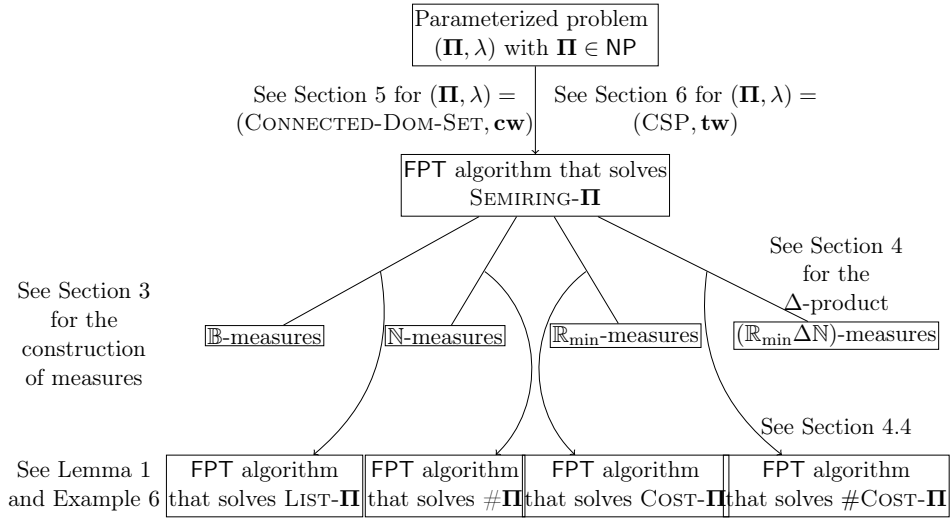
\begin{figure}[!ht]
\begin{center}
\scalebox{0.8}{
\begin{tikzpicture}

\tikzstyle{vertex}=[circle, draw, inner sep=1pt, minimum width=6pt]
\tikzstyle{sq}=[rectangle, draw, inner sep=1pt, minimum width=6pt]

\node[sq] (NPproblem) at (0,5) {$\begin{matrix}\text{Parameterized problem } \\ (\mathbf{\Pi},\lambda)\text{ with }\mathbf{\Pi}\in{\sf NP}\end{matrix}$};

\node[sq] (SemiringNP) at (0,2.5) {$\begin{matrix}{\sf FPT}\text{ algorithm that solves} \\ \textsc{Semiring-}\mathbf{\Pi}\end{matrix}$}
    edge[<-] (NPproblem);

\node () at (-2.5,3.75) {$\begin{matrix} \text{See Section }\ref{sec:dom-set}\text{ for } (\mathbf{\Pi},\lambda)= \\ (\textsc{Connected-Dom-Set},\mathbf{cw}) \end{matrix}$};

\node () at (2.5,3.75) {$\begin{matrix} \text{See Section }\ref{sec:Csp}\text{ for } (\mathbf{\Pi},\lambda)= \\ (\textsc{CSP},\mathbf{tw}) \end{matrix}$};

\node () at (-7.5,0) {$\begin{matrix} \text{See Section }\ref{sec:semirings} \\ \text{for the} \\\text{construction} \\ \text{of measures}\end{matrix}$};

\node[sq] (B) at (-4.5,0) {$\B$-measures}
    edge (SemiringNP);

\node[sq] (N) at (-1.5,0) {$\N$-measures}
    edge (SemiringNP);

\node[sq] (Rmin) at (1.5,0) {$\R_{\min}$-measures}
    edge (SemiringNP);

\node[sq] (Delta) at (5,0) {$(\R_{\min}\Delta\N)$-measures}
    edge (SemiringNP);

\node[] () at (5,1) {$\begin{matrix} \text{See Section }\ref{sec:Delta-product} \\ \text{for the} \\ \Delta\text{-product}\end{matrix}$};

\node (midB) at (-2.58,1.2) {};
\node (midN) at (-0.96,1.2) {};
\node (midRmin) at (0.96,1.2) {};
\node (midDelta) at (2.85,1.2) {};

\node[sq] (Decision) at (-4.5,-2.5) {$\begin{matrix} \textsf{FPT}\text{ algorithm} \\ \text{that solves }\textsc{List-}\mathbf{\Pi}  \end{matrix}$}
    edge[<-,bend right] (midB);

\node[sq] (Counting) at (-1.5,-2.5) {$\begin{matrix} \textsf{FPT}\text{ algorithm} \\ \text{that solves \#}\mathbf{\Pi}  \end{matrix}$}
    edge[<-,bend right=50] (midN);

\node[sq] (Cost) at (1.5,-2.5) {$\begin{matrix} \textsf{FPT}\text{ algorithm} \\ \text{that solves }\textsc{Cost-}\mathbf{\Pi}  \end{matrix}$}
    edge[<-,bend left=50] (midRmin);

\node[sq] (CountCost) at (5,-2.5) {$\begin{matrix} \textsf{FPT}\text{ algorithm} \\ \text{that solves }\#\textsc{Cost-}\mathbf{\Pi}  \end{matrix}$}
    edge[<-,bend left] (midDelta);

\node () at (-7.5,-2.5) {$\begin{matrix}\text{See Lemma } \ref{lem:expression_computation} \\ \text{and Example }\ref{example:measures} \end{matrix}$};

\node at (5,-1.5) {See Section \ref{sec:count_cost_application}};

\end{tikzpicture}
}
\end{center}
\caption{General overview of our contributions.}
\label{fig:plan_of_the_paper}
\end{figure}

We showcase our framework on two distinct problems. First, in Section~\ref{sec:dom-set} we target the {\sc Connected-dominating-set} problem under the lens of the {\em clique-width} of the input graph. Clique-width is, together with {\em tree-width}, the most well-known graph parameter, and measures how many basic operations that are required to construct the graph. The class of graphs (with labeled vertices) with clique-width at most $k\geq 1$ is defined as the smallest class of graphs that contains the one vertex graphs $\bullet_i$ with 1 vertex labeled by $i\in [k]$, and that is stable by the following operations for $(i,j)\in [k]^2$ with $i\neq j$: $(i)$ disjoint union of graphs, $(ii)$ relabeling every vertex of label $i$ to label $j$, and $(iii)$ constructing edges between every vertex labeled by $i$ and every vertex labeled by $j$. Note that the class of cographs (which contains cliques) is exactly that of graphs with clique-width at most $2$. For graph parameters such as clique-width the task is usually to prove {\em fixed parameter tractability} ({\sf FPT}), {\it  i.e.,} an algorithm with a running time bounded by $f(k) \cdot ||I||^{O(1)}$ where (1) $f \colon \mathbb{N} \to \mathbb{N}$ is a computable function, (2) $k$ is the parameter associated with the instance $I$, and (3) $||I||$ is the number of bits required to represent $I$. For classes of instances where \rev{the combination of the parameter $k$ and the function $f$} are believed to be reasonably small an {\sf FPT} algorithm is thus essentially as good as a polynomial time algorithm. 

It is known that the {\sc Connected-dominating-set} problem can be solved by a {\sf FPT} (with respect to the clique-width of the input graph) one-sided error Monte-Carlo algorithm~\cite{hegerfeld2023tight}, as it can be solved in $O^*(5^{\mathbf{cw}})$ time\footnote{\rev{The notation $O^*(\cdot)$ suppresses polynomial factors.}}, knowing that an exact algorithm solving this problem can not run in $O^*((5-\varepsilon)^{\mathbf{cw}})$ time for any $\varepsilon>0$, unless the \textsf{SETH} fails. The {\sf SETH} \cite{impagliazzo2009} is a well-known conjecture within complexity theory which states that for all $\varepsilon>0$, there exists $k\ge 3$ such that the {\sc $k$-Sat} problem can not be solved in time $O^*((2-\varepsilon)^n)$ (where $n$ is the number of variables).

Second, in Section~\ref{sec:Csp} we turn our attention to the finite-domain {\em constraint satisfaction problem}, {\it  i.e.,} the problem of verifying whether a set of finite-domain constraints admits at least one satisfying assignment. For this problem we consider the tree-width of the {\em Gaifman graph} of an input instance and produce an {\sf FPT}-algorithm for {\sc Semiring-Csp}. The algorithm is based on the algorithm by Ganian et al.~\cite{ganian2016combining} for \#{\sc Csp} but is strictly more general since it can also solve the associated cost problem as well as counting solutions of minimal cost (via the $\Delta$-product). We additionally remark that for every finite domain $D$ our $O^*(|D|^k)$ (where $k$ is the tree-width of the Gaifman graph) algorithm is optimal (under the \textsf{SETH}) since it as a special case includes the $H$-{\sc coloring} problem which is known to not admit an improved algorithm under the {\sf SETH} via Okrasa \& Rz\c ażewski~\cite{okrasa2020finegrained}. In Section~\ref{sec:sum_product} we also notice that our approach can be extended to solve the more general \textsc{Sum-Product-Csp} problem.

\subsection{Organization of the article} \label{sec:org}

In Figure~\ref{fig:plan_of_the_paper} we summarize the role of each contribution to our novel semiring formalism. The upper half of the Figure \ref{fig:plan_of_the_paper} (that refers to Sections \ref{sec:dom-set} and Section \ref{sec:Csp}) deal with algorithmic results: we explain how to solve the  \textsc{Semiring} extension of \textsc{Connected-Dominating-Set} and \textsc{Csp} in \textsf{FPT} time (with respect to respectively the clique-width and the tree-width). Whereas, the lower half of Figure \ref{fig:plan_of_the_paper} refers to the algebraic results of Section \ref{sec:semirings} and Section \ref{sec:Delta-product}: we construct algebraic tools that enables us to construct an algorithm that solves the counting, cost and list extensions of any \textsf{NP} problem, provided that an algorithm for the \textsc{Semiring} version is given.

\section{Preliminaries}

\subsection{Sets and functions}

For any integer $n$, we let $[n]:=\{1,\dots,n\}$. Given a set $V$, we denote by $2^V$ the set of subsets of $V$.

For two finite sets $S$ and $T$, let $T^S$ be the set of functions from $S$ to $T$. Note that its cardinal is $|T^S|=|T|^{|S|}$ and that, in particular, the cardinal of $T^{\emptyset}$ is $|T|^0=1$ since it contains the only function with codomain $T$ and empty domain. \rev{Furthermore, we adopt the following notation:
\begin{itemize}
    \item For all $s\in S$ and $t\in T$, we denote by $(s\mapsto t)\colon \{s\}  \to  T$ the function given by $(s\mapsto t)(s)=t$.
    \item Given $S'\subseteq S$ and $f\in T^S$, {\em the restriction of $f$ to $S'$} is the function $f|_{S'}\colon S' \to T$ defined by $f|_{S'}(s)= f(s)$, for every $s\in S'$.
    \item For a set $V$ and $S\subseteq V$, we denote the {\it indicator function} by $\mathbf{1}_S^V\colon V  \to  \{0,1\}$, which is  given by $\mathbf{1}_S^V(u)= 
     \left\{ \begin{array}{cc}
     1 & \text{ if }u\in S  \\
     0 & \text{ if }u\notin S
\end{array}. \right. $
\end{itemize} 
}
\noindent Notice that the mapping that sends any $S\in 2^V$ to $\mathbf{1}_S^V$ is an isomorphism between $2^V$ and $\{0,1\}^V$.

\subsection{Permutations}

Let $n\ge 0$. We write $\mathfrak{S}_n$ for the set of {\em permutations} of $[n]$.
A permutation $\sigma\in\mathfrak{S}_n$ is denoted by $\sigma=\begin{pmatrix} 1 & \dots & n \\ \sigma(1) & \dots & \sigma(n) \end{pmatrix}$.

For $k\in \{2,\dots,n\}$ and $k$ pairwise distinct elements $a_1,a_2\dots,a_k$ of $[n]$, $(a_1 a_2\dots a_k)$ denotes the element of $\mathfrak{S}_n$ that maps $a_1$ to $a_2$, $a_2$ to $a_3$, $\dots$, and $a_k$ to $a_1$ (the other elements of $[n]$ are sent to themselves). A permutation $\sigma\in\mathfrak{S}_n$ is said to be a {\em $k$-cycle} if it is of the form $(a_1 a_2\dots a_k)$. We write $\mathfrak{C}_n$ for the set of $n$-cycles of $\mathfrak{S}_n$.

\subsection{Graph and trees}

By a {\em graph} we mean a tuple $G=(V_G,E_G)$ where $V_G$ is a finite set called the set of {\em vertices} of $G$, and $E_G$ is \rev{a set of pairs of vertices of $G$} called the set of {\em edges} of $G$. The set of vertices and edges of a graph $G$ will always be denoted by $V_G$ and $E_G$, respectively. Given $S\subseteq V_G$, we let $G[S]:=(S,E_G\cap S^2)$ be the graph induced by $S$ on $G$.

\rev{Two vertices $u$ and $v$ of a graph $G$ are said to be \textit{neighbors} (in the graph $G$) if $\{u,v\}\in E_G$. A \textit{path} in a graph $G$ is a finite sequence $u_0u_1\dots u_p$ of vertices, where $p$ is a non negative integer, and such that for all $i\in \{0,\dots,p-1\}$, $u_i$ and $u_{i+1}$ are neighbors in $G$. A \textit{simple path} in a graph $G$ is a path without vertex repetition. A \textit{cycle} in a graph $G$ is a path in $G$ of the form $u_0u_1\dots u_p$, with $p\ge 3$, $u_0=u_p$, and with $u_0u_1\dots u_{p-1}$ being a simple path.}

A {\em tree} is a connected graph $T=(N_T,E_T)$ with no cycle, $N_T$ is said to be the set of {\em nodes} of $T$. A {\em rooted tree} is a pair $(T,r)$ where $T=(N_T,E_T)$ is a tree and $r\in N_T$ is said to be the {\em root} of $(T,r)$. Abusing notation, we will say that $r$ is the root of $T$. The {\em ancestors} of a node $N\in N_T$ in a rooted tree $(T,r)$ are the nodes on the unique simple path from $r$ to $N$. A {\em descendant} of a node $N\in N_T$ is a node $N'\in N_T$ such that $N$ is an ancestor of $N'$. The {\em children} of a node $N\in N_T$ are the descendants of $N$ that are also neighbors of $N$. A {\em leaf} of a rooted tree is a node that has no children.

\subsection{The dominating set problem}

The {\sc Dominating-Set} problem is a widely known and important {\sf NP}-complete problem over graphs, that is not subsumed by the {\sc Csp} formalism.%

Given a graph $G$ and $u\in V_G$, the {\em closed neighborhood} of $u$ in $G$ is the set $N_G[u]=\{ v\in V_G \mid \{u,v\} \in E_G \}\cup \{u\}$. \rev{For a set $U \subseteq V_G$ we similarly define $N_G[U] = \bigcup_{u \in U}N_G[u]$.}

\begin{definition}

A {\em dominating set} of a graph $G$ is a subset $S\subseteq V_G$ such that $N_G[S] = V_G$.

\end{definition}

The goal of the {\sc Dominating-Set} problem is then to determine the smallest size of a dominating set of an input graph. We present here the optimization problem associated with the {\sf NP}-problem.

\hfill

\underline{\sc Dominating-Set:}

\textbf{Input:} A graph $G$.

\textbf{Output:} The smallest $k\ge 0$ such that there exists $S\subseteq V_G$ where $|S|\le k$ and $N_G[S] = V_G$.

\hfill

Recall that a join/union expression can only represent a set of functions. It is therefore natural to represent a subset $S\subseteq V_G$ by its indicator function $\mathbf{1}_S^{V_G}$.

\hfill

In this section, we focus on the significantly more difficult variant where in addition we require that the dominating set is connected. This problem is important in, for example, network communication~\cite{10.5555/2412083} and is essentially an entire research field in itself. 

\hfill

\underline{\sc Connected-Dominating-Set:}

\textbf{Input:} A graph $G$.

\textbf{Output:} The smallest $k\ge 0$ such that there exists $S\subseteq V_G$ where $|S|\le k$, $N_G[S] = V_G$ and $G[S]$ is connected.

\hfill

Such a set $S$ (satisfying $N_G[S] = V_G$ and that $G[S]$ is connected) is called a {\em connected dominating-set}.

\subsection{The constraint satisfaction problem}

A finite set $\Gamma$ of relations over a finite domain $D$ is called a {\em constraint language}.
We define the {\em constraint satisfaction problem} (\textsc{Csp}) over a constraint language $\Gamma$ as follows.

\hfill

\underline{{\sc Csp}$(\Gamma)$:}

\textbf{Input:} $I = (V,C)$ where $V$ is a set of variables and $C$ a set of \rev{{\em constraints} of the form $((x_1, \ldots, x_m), R)$} where $R\in\Gamma$ is an $m$-ary relation and $x_1, \ldots, x_m \in V$.

\textbf{Question:} Does there exist a function $f:V \mapsto D$ which satisfies all constraints, {\it  i.e.,} $(f(x_1), \ldots, f(x_m)) \in R$ for every \rev{$((x_1, \ldots, x_m), R) \in C$}?

\hfill

\rev{We often write $R(x_1, \ldots, x_m)$ rather than $((x_1, \ldots, x_m), R)$.}
The function $f \colon V \to D$ is sometimes called a {\em model}, or a {\em solution}. A function $f' \colon V' \to D$ with $V'\subseteq V$ which satisfies every constraint $R(x_1, \ldots, x_m) \in C$ where $x_1, \ldots, x_m \in V'$ is called a {\em partial solution}.

\begin{remark}

The finite domain $D$ of a \textsc{Csp} instance is the {\em codomain} of the valuations $f:V\mapsto D$ (their domain is the set of variables $V$).

\end{remark}

Let us also remark that the well-known problem of deciding whether an input graph $G$ admits a homomorphism to a template graph $H$ (the {\em $H$-{\sc Coloring} }problem) can be seen as a particular case of {\sc Csp}, by simply letting $\Gamma=\{E_H\}$. It is also easy to see that the \textsc{Boolean satisfiability} problem (\textsc{SAT}) for instances in conjunctive normal form can be seen as a special case of \textsc{Csp}s with Boolean domain.

\hfill

\underline{$H${\sc -Coloring}:}

\textbf{Input:} A graph $G$.

\textbf{Question:} Does there exist a function $f:V_G \to V_H$ which satisfies for all $\{u,v\}\in E_G$ that $\{f(u),f(v)\}\in E_H$?

\hfill

For example, if we let $K_k$ be the $k$-vertex clique, $K_k${\sc -{\sc Coloring}} is an alternative formulation of the well-known computational problem of deciding whether a graph with $n$ vertices can be colored with $k$ distinct colors ($k${\sc -{\sc Coloring}}).

The {\em Gaifman graph} \cite{rossi2006handbook} of an instance $\mathcal{I}=(V,C)$ of {\sc Csp} is the graph $G=(V,E)$, with $\{x,y\}\in E$ if there exists a constraint $c\in C$ that involves both $x$ and $y$ (for all pair $\{x,y\}\subseteq V$). Note that in the case of {\sc $H$-Coloring}, the Gaifman graph of any input graph $G$ is $G$ itself.

\begin{reve}
\subsection{Tree- and Cliquewidth}\label{sec:width}
We now introduce two important structural graph parameters. 

\begin{definition}
A {\em tree decomposition} \cite{bodlaender1991better} of a graph $G=(V_G,E_G)$ is a tuple $(T,bag)$ where $T=(N_T,E_T)$ is a tree and $bag \colon N_T\mapsto 2^{V_G}$ is a function that satisfies the following properties.

\begin{enumerate}
    \item For all $v\in V_G, \exists N\in N_T$ with $v\in bag(N)$.
    \item For all $\{u,v\}\in E_G$, $\exists N\in N_T$ such that $\{u,v\}\subseteq bag(N)$.
    \item For all $(N_1,N_2,N_3)\in (N_T)^3$, if $N_2$ is on the (unique) path between $N_1$ and $N_3$ in $T$, then $bag(N_1)\cap bag(N_3)\subseteq bag(N_2)$.
\end{enumerate}

\end{definition}

The {\em treewidth} of a tree decomposition $((N_T,E_T),bag)$ defined as the integer $\max\limits_{N\in N_T} |bag(N)|-1$. The {\em treewidth} of a graph $G$ denoted by $\mathbf{tw}(G)$ is then the minimum of the treewidth of its tree decompositions. In algorithmic applications it is often desirable to make additional assumptions on the tree decomposition.

Additionally, a {\em nice tree decomposition} \cite{bodlaender1991better} of a graph $G$ is a rooted tree decomposition of $G$ where every node $N$ is of one of the four following types.

\begin{enumerate}
    \item \textbf{Leaf:} $N$ is a leaf of $T$ and its bag is empty.
    \item \textbf{Forget:} $N$ has one child and the bag of $N$ is of the form $bag(N)=bag(child(N))\setminus\{v\}$ with $v\in bag(child(N))$.
    \item \textbf{Introduce:} $N$ has one child and the bag of $N$ is of the form $bag(N)=bag(child(N))\uplus\{v\}$ with $v\in V_G$ and $v\notin bag(child(N))$.
    \item \textbf{Join:} $N$ has two children $N_1$ and $N_2$ with $bag(N)=bag(N_1)=bag(N_2)$.
\end{enumerate}

Here, $bag(N)$ designates the bag associated to a node $N$ of the rooted tree decomposition. We will also use the notation $$descbag(N):=\bigcup\limits_{N' \text{descendant of }N} bag(N').$$ 
We remark that any tree decomposition of a graph can easily be transformed into a nice tree decomposition with the same treewidth and in linear time (in the size of the tree decomposition)~\cite{bodlaender1991better}.

Next, we introduce {\em clique-width}. The general idea behind clique-width is that graphs with low clique-width can be decomposed into simpler structures, making it amenable to efficient algorithmic solutions for problems that are otherwise intractable~\cite{courcelle2000linear,courcelle2000upper}. 

For $k\geq 1$, let $[k]=\{1,\dots, k\}$. A {\em $k$-labeled graph} $G$ is a tuple $(V_G,E_G,l_G)$, where $(V_G,E_G)$ is a graph and $l_G: V_G \to [k]$. For $i\in [k]$ and a $k$-labeled graph $G$, denote by $V_G^i= l_G^{-1}(\{i\})$ the set of vertices of $G$ of label $i$.
A {\em $k$-expression} $\varphi$ of a $k$-labeled graph $G$, denoted $[\varphi]=G$, is an expression defined inductively~\cite{courcelle2000upper} using:
\begin{enumerate}
\item \textbf{Single vertex:} $\bullet_i$ with $i\in [k]$: $[\bullet_i]$ is a $k$-labeled graph with one vertex labeled by $i$ (we sometimes write $\bullet_i(u)$ to state that the vertex is named $u$),
\item \textbf{Disjoint Union:} $\varphi_1\oplus \varphi_2$: $[\varphi_1\oplus\varphi_2]$ is the disjoint union of the graphs $[\varphi_1]$ and $[\varphi_2]$.
\item \textbf{Relabeling:} $\rho_{i\rightarrow j}(\varphi)$ with $(i,j)\in [k]^2$ and $i\neq j$: $[\rho_{i\rightarrow j}(\varphi)]$ is the same graph as $[\varphi]$, but where all vertices of $G$ with label $i$ now have label $j$, 
\item \textbf{Edge Creation:} $\eta_{i,j}(\varphi)$ with $(i,j)\in [k]^2$ and $i\neq j$: $[\eta_{i,j}(\varphi)]$ is the same graph as $[\varphi]$, but where all pairs of the form $\{u,v\}$ with $\{l_G(u),l_G(v)\} = \{i,j\}$ is now an edge, and
\end{enumerate}

A graph $G$ has a $k$-expression $\varphi$ if there exists $l:V_G\mapsto [k]$ such that $[\varphi]=(V_G,E_G,l)$. The {\em clique-width} of a graph $G$ (denoted by $\mathbf{cw}(G)$) is the minimum $k\geq 1$ such that $G$ has a $k$-expression.

Moreover, if $S\subseteq V_G$ is a subset of vertices of a $k$-labeled graph $G$, we denote by $l_G(S):=\{ l_G(u)\mid u\in S \}\in 2^{[k]}$ the set of labels of vertices in $S$.
\end{reve}

\subsection{Computational complexity}

Given two infinite sequences $(u_n)_{n\in\mathbb{N}}$ and $(v_n)_{n\in\mathbb{N}}$ in $\mathbb{R}^{\mathbb{N}}$, we write $u_n=O^*(v_n)$ if there exists a polynomial $p$ such that $u_n=O(v_n\times p(n))$. \rev{This notation is convienient since we when working with NP-hard problems typically are much more interested in the superpolynomial factor.}

\hfill

The class {\sf NP} is the class of computational decision problems $\mathbf{\Pi}$ for which there exists a deterministic Turing machine $M$ running in polynomial time and a polynomial $p$ such that for all $n\ge 0$ and $x\in \{0,1\}^n$:
$$ x\in \mathbf{\Pi} \iff \exists y\in\{0,1\}^{p(n)},\ M(x,y) \text{ accepts},$$
in which case such an $y\in \{0,1\}^{p(n)}$ is called a {\em certificate} for the instance $x$. Here, we stick to the fixed alphabet $\{0,1\}$ to simplify the presentation, but, naturally, any other alphabet involving at least two symbols could be used instead.

\hfill

Consider a computational problem $\mathbf{\Pi}$ in {\sf NP}. A certificate $y=y_1\dots y_{p(n)}\in \{0,1\}^{p(n)}$ of an instance of size $n\ge 0$ of $\mathbf{\Pi}$ (where $p$ is polynomial) can be represented bijectively by \rev{the function $\Tilde{y}: [p(n)]  \to  \{0,1\}$ given by $\Tilde{y}(i)= y_i$. }However, most of the time it is more natural to give a higher-level representations of ``solutions'' (i.e., certificates) as functions that have other domain and codomain than $[p(n)]$ and $\{0,1\}$. For instance, a ``solution'' of $k$-{\sc Coloring} (with $k\ge 1$) on an input graph $G$ will be seen as a function from $V_G$ to $[k]$. Generally, we assume that the set of ``solutions'' of our problem $\mathbf{\Pi}$ on a given instance is a subset of $T^S$, for some sets finite sets $S$ and $T$ having polynomial size in the size of the input (which can safely be assumed since $\mathbf{\Pi}$ is in {\sf NP}). In this context we refer to $S$ and $T$ as the {\em domain} and {\em codomain} of the solutions.

\hfill

The class {\sf \#P} is the class of the counting versions of the {\sf NP} problems. Formally, the class {\sf \#P} is described by problems of the form:

\hfill

\textbf{Input:} $x\in \{0,1\}^n$ with $n\ge 0$.

\textbf{Output:} How many $y\in \{0,1\}^{p(n)}$ are there such that $M(x,y)$ accepts?

\hfill

\noindent where $M$ is a deterministic Turing machine running in polynomial time and $p$ is a polynomial.

\hfill

Let $\mathbf{\Pi}$ be a computational problem, and let $\mathcal{I}$ be the set of instances of $\mathbf{\Pi}$. A {\em parameter} of $\mathbf{\Pi}$ is a function $\lambda:\mathcal{I}\to \N$ computable in polynomial time\footnote{\rev{In the case of treewidth and cliquewidth, it is common to assume that the tree decomposition and cliquewidth expression are given as part of the input, since computing this is NP-hard in general.}}. The computational problem $\mathbf{\Pi}$ is said to be {\em fixed-parameter tractable} ({\sf FPT}) when parameterized by $\lambda$ if there exists a computable function $f \colon \mathbb{N} \to \mathbb{N}$  and an algorithm that solves \rev{$\mathbf{\Pi}$ in time
$f(\lambda(x)) \times \|x\|^{O(1)}$
on any instance $x\in\mathcal{I}$ (where $\|x\|$ is the size of $x$). 
}
\section{Semirings and Measures} \label{sec:semirings}

In this section we introduce {\em semiring extensions} of computational problems and our method based on {\em measures}.

\subsection{Computational problem extensions}\label{sec:comp}

Let $\mathbf{\Pi}$ be a problem in {\sf NP}, and $p$ a polynomial such that for every $n\ge 0$, the size of the certificates of instances of size $n$ is bounded by $p(n)$. Note that $\mathbf{\Pi}$ can be reformulated as:

\hfill

$\underline{\mathbf{\Pi:}}$

\textbf{Input:} An instance $x$ of $\Pi$.

\textbf{Output:} Does $x$ have a certificate?

\hfill

For example, it is easy to see that {\sc Csp} is in {\sf NP} for every finite-domain constraint language $\Gamma$ since a satisfying assignment can be used as a certificate. It is common to consider some extensions of $\mathbf{\Pi}$ consisting in asking a more general question than determining whether the set of certificates (often simply called solutions) is empty or not. For example, let us consider the following extended problems.

\hfill

$\underline{\#\mathbf{\Pi:}}$

\textbf{Input:} An instance $x$ of $\mathbf{\Pi}$.

\textbf{Output:} How many solutions does $x$ have?

\hfill

The problem $\#\mathbf{\Pi}$ is refered as the {\em counting version} of the decision problem $\mathbf{\Pi}$. Notice that if the decision problem $\mathbf{\Pi}$ is in {\sf NP}, then its counting version $\#\mathbf{\Pi}$ is in the class $\#${\sf P} \rev{(provided that we count the certificates and not a more restrictive notion of solution).}

\hfill

$\underline{\text{\sc List-}\mathbf{\Pi:}}$

\textbf{Input:} An instance $x$ \rev{of $\mathbf{\Pi}$} and for all $i\in [p(n)]$, $L_i\subseteq \{0,1\}$.

\textbf{Output:} Does $x$ have a solution $y_1,\dots,y_{p(n)}$ such that for all $i\in [p(n)]$, $y_i\in L_i$?

\hfill

\rev{For example, for $k \geq 1$ the \textsc{$k$-Coloring} problem is the problem of deciding if an input graph can be colored with at most $k$ colors, and the \textsc{List-$k$-Coloring} problem is a well-studied extension of \textsc{$k$-Coloring} where each vertex can be restricted to a list of ``allowed colors'' \cite{jensen2011graph,thomassen19953}.}

\hfill

$\underline{\text{\sc Cost-}{\mathbf{\Pi}:}}$

\textbf{Input:} An instance $x$ \rev{of $\mathbf{\Pi}$}, and for all $i\in [p(n)]$, $(C_i^0,C_i^1)\in\R^2$.

\textbf{Output:} \rev{The solution $y_1, \ldots, y_{p(n)}$ of $x$ with minimal $\sum\limits_{i=1}^{p(n)} C_i^{y_i}$.}

\hfill

The goal of this section is to show how all of these extensions of $\mathbf{\Pi}$ are subsumed by one particular extension:

\hfill

$\underline{\text{\sc Semiring-}\mathbf{\Pi:}}$

\textbf{Input:} An instance $x$ \rev{of $\mathbf{\Pi}$.}

\textbf{Output:} A join/union expression of the set of solutions of $x$. %

\hfill

See Definition \ref{def:syntax} and \ref{def:semantic} for the formal definition of a join/union expression of a set. Intuitively, the point of a join/union expression is to give a compact description of the set of solutions that factors the computations necessary to compute the counting, list and cost extensions. Note also that the {\sc Semiring-}$\mathbf{\Pi}$ problem depends on the set of ``solutions'' considered for $\mathbf{\Pi}$. However, this set will most of the time be clear from the context (and since we assume that $\mathbf{\Pi}$ is in {\sf NP} a set of certificates of polynomial size can always be chosen).

Note that the $\text{\sc Semiring-}\mathbf{\Pi}$ problems do not belong to complexity classes of the form \textsf{NP}$(\mathcal{R})$ (for any semiring $\mathcal{R}$) \cite{eiter2023semiring} since the output (a join/union expression) does not belong to any specific semiring.

\begin{reve}
\begin{example} \label{ex:projected}
    For many problems certificates and solutions exactly coincide. However, there are examples of problems where the natural notion of solution can be derived from the certificate but might not fully cover it. A well-known example is the {\em projected SAT} problem where we are tasked to decide satisfiability of a formula $\exists X, \exists Y \colon \phi(X,Y)$, where $X$ and $Y$ are Boolean variables. For this problem a certificate is a satisfying assignment over $X \cup Y$ but a solution is typically defined only with respect to $X$. While the decision variant of projected SAT is not interesting from a theoretical perspective the corresponding counting problem is a far richer with many applications~\cite{FICHTE2023103810}.
\end{example}

However, we stress that we only refer to certificate to make it possible to define these problems in general terms. For any problem where it is undesirable to use certificates rather than a more problem specific notion of a solution it is still possible to define a corresponding semiring problem: we just require the join/union expression covers the solutions, and not the certificates. Hence, in the sequel we typically use the terms certificates/solutions interchangeably.
\end{reve}

\subsection{Semirings and Dioids}

Many algorithms that are able to solve counting versions of a problem can often be adapted to solve a cost version. This adaptation uses the fact that both algorithms implicitly rely on a common algebraic structure, the {\em semiring}.

\begin{definition}[Semiring]
    A {\em semiring} is a structure $\A=(A,+,\times,0_A,1_A)$ such that $(A,+,0_A)$ is a commutative monoid, $(A,\times,1_A)$ is a monoid, $\times$ is distributive over $+$,
    $0_A$ is absorbing for $\times$, and $0_A\neq 1_A$. The semiring $\A$ is said to be {\em commutative} if $\times$ is commutative.
\end{definition}

\begin{fact}
Observe that a {\em ring} is a semiring $\A=(A,+,\times,0_A,1_A)$ where $(A,+_A,0_A)$ is a group.
\end{fact}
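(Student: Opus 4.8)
The plan is to check that, once the additive commutative monoid of a semiring is strengthened to a group, the resulting axiom list is exactly the one defining a (unital) ring, so that the statement is an identification of two axiomatizations rather than a substantial theorem. Recall that a ring $(A,+,\times,0_A,1_A)$ is required to have $(A,+,0_A)$ an abelian group, $(A,\times,1_A)$ a monoid, and $\times$ distributive over $+$ on both sides. Comparing with the semiring axioms, the multiplicative-monoid and distributivity requirements match verbatim, and an abelian group is in particular a commutative monoid; thus the only semiring axioms whose status is not immediate are that $0_A$ is absorbing for $\times$ and the non-degeneracy $0_A \neq 1_A$.

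The single step with genuine content is to derive absorption from the ring axioms, which is exactly where the group structure on $+$ enters. First I would compute, for any $a \in A$,
$$0_A \times a = (0_A + 0_A) \times a = 0_A \times a + 0_A \times a,$$
using the identity law for $0_A$ and then right-distributivity. Since $(A,+,0_A)$ is a group, the element $0_A \times a$ has an additive inverse; adding it to both sides cancels one copy and yields $0_A = 0_A \times a$. The symmetric computation using left-distributivity gives $a \times 0_A = 0_A$, so $0_A$ is absorbing on both sides. This is precisely the cancellation that is unavailable in a bare commutative monoid, which is why absorption must be posited as a separate axiom for general semirings.

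For the converse direction, a semiring whose additive monoid happens to be a group satisfies every ring axiom immediately, since the semiring definition already supplies a commutative additive monoid (hence, with inverses, an abelian group), a multiplicative monoid, and distributivity. The remaining clause $0_A \neq 1_A$ of the semiring definition simply amounts to discarding the trivial zero ring; under the common convention that rings are non-degenerate this is automatic, and otherwise it should be read as part of the identification. The main obstacle is thus confined to the absorption computation above; everything else is a direct comparison of axiom lists, the only subtlety being to invoke both one-sided distributive laws so that absorption is obtained on each side, since $\times$ is not assumed commutative.
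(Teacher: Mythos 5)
Your proof is correct. The paper states this Fact as a bare observation with no proof at all, so there is nothing to compare against; your argument supplies exactly the content that is implicitly being relied upon, namely that the absorption law $0_A \times a = a \times 0_A = 0_A$ — the one semiring axiom not literally contained in the ring axioms — follows from distributivity together with additive cancellation, which is precisely what the group structure provides and what a bare commutative monoid lacks. Your handling of the non-degeneracy clause $0_A \neq 1_A$ (excluding the zero ring as a matter of convention) is also the right way to reconcile the paper's definition with the usual ring axioms.
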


Throughout this paper, we implicitly assume that the operations of the semiring can be performed in constant time. \rev{This is mainly to simplify the presentation and any reasonable polynomial-time bounds would suffice for our parameterized complexity results.} For more precise computational considerations we refer the reader to the model of semiring Turing machines~\cite{eiter2023semiring}. %

\begin{example}\label{ex:semirings}
Typical examples of semirings subsume:%

\begin{itemize}
    
    \item the {\em Boolean semiring} $\mathbb{B}=(\2,\vee,\land,\bot,\top)$, where $\2:=\{\bot,\top\}$. Here, $\bot$ and $\top$ are constants (to be interpreted as true and false, respectively), and $\vee$ and $\wedge$ denote the disjunction and conjunction.
    
    \item the {\em integer semiring} $(\N,+,\times,0,1)$, where $+$ and $\times$ respectively denote the usual addition and multiplication over the natural numbers $\N$, with their respective neutral elements $0$ and $1$,  and
    
    \item the {\em tropical semiring} $\Rbar_{min}:=(\Rbar,\min,+,\infty,0)$ where $\Rbar=\R\cup \{\infty\}$ is the set of real numbers together with the neutral element $\infty$ for $\min$. We also have the  {\em dual tropical semiring} $\Rbar_{max}:=(\Rbar,\max,+,-\infty,0)$, with the neutral element $-\infty$ for $\max$.
    
\end{itemize}
\end{example}

Throughout this paper, we will further assume that $\A=(A,+,\times,0_A,1_A)$ is {\it commutative}, that is, $x\times y=y\times x$, for every $x,y\in A$.

\begin{remark}\label{rem1}
 Tropical semirings have been used in the context of finding the minimal cost of a path: the cost of the path is the product of the costs of its edges, and the best cost amongst two distinct paths is given by the sum. Here, the product of the tropical semiring is interpreted as $+$, whereas the sum is interpreted as $\min$.
\end{remark}

\rev{More generally the cost associated to an element can be modeled by {\em dioids}}.

\begin{definition}[Dioid]
A {\it dioid} is a semiring $(D,\min,+_D,\infty_D,0_D)$, where $\leq_D$ is the \rev{partial order} given by \[
\forall (d_1,d_2)\in D^2, d_1\leq_D d_2\iff \exists c\in D, \min(d_2,c)=d_1,
\] 
that orders $D$. Such a dioid $D$ is said to be {\em idempotent} if $\forall d\in D, \min(d,d)=d$, and it is said to be {\em totally ordered} if the order $\leq_D$ is total, {\it i.e.}, $\forall d_1,d_2\in D$, $d_1\leq_D d_2$ or $d_2\leq_D d_1$.
\end{definition}

\rev{In the rest of the paper, all dioids are assumed to be totally ordered, as we use dioids for modeling costs in the context of counting minimal solutions of computational problems.}

\begin{example}
The structures of Example~\ref{ex:semirings} give rise to natural examples of dioids.

\begin{itemize}
    \item The Boolean semiring $\B=(\{\bot,\top\},\lor,\land,\bot,\top)$ is endowed with a totally ordered idempotent dioid structure with the ordering $\top\le_{\B}\bot$. This ordering is rooted in Remark~\ref{rem1} where the elements of the dioid are thought of as weights to minimize. %

    \item The natural semiring $(\N,+,\times,0,1)$ ordered by the usual $\geq $ on $\N$ is a totally ordered dioid, 
    which is not idempotent.%

    \item The tropical semiring $\Rbar_{min}=(\Rbar,\min,+,\infty,0)$ is also a totally ordered idempotent dioid, its order coincides with the usual order $\le$ on $\Rbar$.

    \item If we let $\omega$ be the first infinite ordinal number \cite{kleene1938notation}, the structure $( \Omega , \min , + , \infty , 0  )$ with $\Omega:=\{ a+b\times\omega \mid (a,b)\in\N^2  \} \cup \{\infty\}$ (where the symbol $\infty$ is neutral for $\min$ and absorbing for $+$) is a totally ordered idempotent dioid. More generally, for any set $\Omega$ of ordinal numbers, where a neutral symbol $\infty$ for $\min$ has been added, $(\Omega,\min,+,\infty,0)$ is a totally ordered idempotent dioid. %

\end{itemize}
\end{example}

From now on, $\D=(D,\min,+,\infty_D,0_D)$ will denote a totally ordered idempotent dioid. We remark that the assumption of total ordering will most of the time not be necessary in our results. However, it seems natural to require that the costs are totally ordered.

Let us now verify that the sum ``$\min$'' of a dioid effectively returns the minimum of its two arguments relatively to the order $\leq_D$.

\begin{property}\label{prop:minimum}
For a totally ordered idempotent dioid $\D=(D,\min,+,\infty_D,0_D)$,  the following properties hold. \rev{For every $(d_1,d_2,d_3)\in D^3$:
\begin{enumerate}
    \item  if $d_1\leq_D d_2$, then $\min(d_1,d_2)=d_1$, and
    \item  if   $d_1\leq_D d_2$ and $d_1\le_D d_3$, then $d_1\le_D \min(d_2,d_3)$.
\end{enumerate}
}\end{property}

\begin{proof}
{\flushleft (1).} If $d_1\leq_D d_2$, then there exists $d_3\in D$ such that $d_1=\min(d_2,d_3)$. By commutativity and associativity of $\min$, it follows that

 $$\min(d_1,d_2)=\min( \min(d_2,d_3),d_2 ) = \min(\min(d_2,d_2),d_3),$$
 
and, by idempotency of $\min$, $\min(d_1,d_2)=\min(d_2,d_3)=d_1$. 

{\flushleft (2).} If   $d_1\leq_D d_2$ and $d_1\le_D d_3$, then  $\min(d_1,d_2)=\min(d_1,d_3)=d_1$. Hence,

$$\min( \min(d_2,d_3),d_1 ) = \min(\min(d_1,d_2),d_3 ) =\min(d_1,d_3) =d_1.$$

Thus $d_1\le_D \min(d_2,d_3)$.
\end{proof}

\subsection{Uplus and Join}

In dynamic programming, one generally designs algorithms that inductively compute over incomplete sets of solutions (where some solutions are missing), or over sets of solutions that only cover a subset of the whole domain (meaning that some variables are not assigned any value). In the former case, these partial sets of solutions $\F_1$ and $\F_2$ are then combined together to form a larger set of solutions via union $\cup$ or disjoint union $\uplus$.
In the latter case, two (or more) partial solutions $f_1$ and $f_2$ over disjoint subsets of the whole domain are combined to obtain a solution denoted by $(f_1\Join f_2)$ over a larger domain, an operation that is isomorphic to Cartesian product.
The idea is to establish a correspondence between the semiring operations $+$ and $\times$ and the  two operations for combining (partial) solutions. 

To simplify the presentation, we 
let $S$ and $T$ be two finite sets. Recall from Section~\ref{sec:comp} that   we aim to associate functions $S \to T$ %
with solutions to a given  computational problem. Under this interpretation, two disjoint subsets $S_1$ and $S_2$ of $S$ correspond to domains of partial solutions, which can be combined into a solution with a larger domain.

\begin{definition}[Disjoint union $\uplus$]

Let $\F_1 \subseteq T^S$ and $\F_2 \subseteq T^S$, with $\F_1\cap \F_2 =\emptyset$. The {\it disjoint union} $\F_1\uplus \F_2$ is defined by
$$ \F_1\uplus\F_2 = \{ f \in T^S \mid f\in \F_1 \text{ or } f\in \F_2 \}. $$
\end{definition}

Notice that $\emptyset$ is neutral for $\uplus$: for every $\F\subseteq T^S$. Indeed, $\F\uplus\emptyset$ is well defined and $\F\uplus\emptyset=\F$. We can observe the correspondence  between $\emptyset$ and $\uplus$, and $0_{\mathcal{A}}$ and $+$ of a semiring $\mathcal{A}$. This observation will be further exploited %
in Section \ref{sec:measures,matrix}.

We proceed by describing the operator corresponding to the product $\times$ in the underlying semiring. 
Here, the idea is to combine partial functions $f_1$ and $f_2$ over two disjoint domains $S_1$ and $S_2$ into a function $f_1\Join f_2$ with the greater domain $S_1\uplus S_2$. The condition that $S_1$ and $S_2$ are disjoint is indeed necessary to ensure that $f_1\Join f_2$ is well-defined.

\begin{definition}[Join $\Join$]
Given $f_1\in T^{S_1}$ and $f_2\in T^{S_2}$, we define the {\em join} $f_1\Join f_2 \in T^{S_1\uplus S_2}$ of $f_1$ and $f_2$ by

$$(f_1\Join f_2)(s)  =  \left\{\begin{array}{cc}
     f_1(s) & \text{ if }s\in S_1  \\
     f_2(s) & \text{ if }s\in S_2
\end{array}\right.$$
\end{definition}

Then, given $\F_1 \subseteq T^{S_1}$ and $\F_2 \subseteq T^{S_2}$, the set $\F_1\Join\F_2$ is defined as

$$ \F_1\Join\F_2 = \{ f_1\Join f_2 \mid (f_1,f_2)\in \F_1\times\F_2 \} $$

Note that $T^{\emptyset}$ is neutral for $\Join$: for every $\F\subseteq T^S$: $\F\Join T^{\emptyset}$ is well defined and $\F\Join T^{\emptyset}=\F$.
Also, $\emptyset$ is absorbing for $\Join$: for every $\F\subseteq T^S$: $\F\Join\emptyset$ is well defined and $\F\Join\emptyset=\emptyset$. Thus, to relate to semirings, if $\Join$ corresponds to the product of the semiring, it is expected that $\emptyset$ corresponds to the zero, and that $T^{\emptyset}$ correspond to the unit. This remark justifies the links between $T^{\emptyset}$ and the unit element of the semiring developed in Section \ref{sec:measures,matrix}. 

This operator, although typically not explicitly named, is central in dynamic programming. For example,  ``divide-and-conquer algorithms'' works by solving smaller sub-problems and then ``combining'' the solutions of the small problems to a solution to the larger problem. The way small solutions are combined is often implicit with the aforementioned $\Join$ operator. However, it should be noted that this operator is typically not given an explicit name.

\begin{example}

Recall that a {\em cograph} $G$ is by definition either

\begin{itemize}
    \item the graph $K_1$ (i.e., $G$ is a single vertex),
    \item the disjoint union $G=G_1 \oplus G_2 := (V_{G_1}\uplus V_{G_2},E_{G_1}\uplus E_{G_2})$ of two cographs $G_1$ and $G_2$ with disjoint set of vertices,
    \item the {\em joint union} $G=G_1\otimes G_2 = (V_{G_1}\uplus V_{G_2},E_{G_1}\uplus E_{G_2} \uplus \{ \{u_1,u_2\} \mid (u_1,u_2)\in V_{G_1}\times V_{G_2}\})$ of two cographs $G_1$ and $G_2$ with disjoint sets of vertices.
\end{itemize}

As an example of an implicit use of the $\Join$ operator, note that deciding whether a cograph $G$ on $n$ vertices is $k$-colorable (with $k\ge 1$) can be done in $O(kn)$ time by dynamic programming, by the following observations.

\begin{itemize}
    \item $K_1$ is $k$-colorable.
    \item $G_1\oplus G_2$ is $k$-colorable if and only if $G_1$ and $G_2$ are $k$-colorable.
    \item $G_1\otimes G_2$ is $k$-colorable if and only if there exists $k'\in [k-1]$ such that $G_1$ is $k'$-colorable and $G_2$ is $(k-k')$-colorable.
\end{itemize}

Indeed, the second item relies implicitly on the fact that if $G_1$ and $G_2$ are both $k$-colorable by the $k$-colorings $f_1$ and $f_2$, then so is $G_1\oplus G_2$, with the $k$-coloring $f_1\Join f_2$. A similar remark also holds for the third item. Note that in order to efficiently compute a $k$-coloring of the cograph with this method, the $\Join$ operator would be used implicitly.

\end{example}

\subsection{Join/union expressions}\label{sec:join/union expressions}

We now turn to the problem of building a join/union expression representing the set of solutions of a problem instance. 

\begin{definition}[Syntax of an expression]\label{def:syntax}

A {\em join/union expression} $E$ over the sets $(S,T)$ is an element of the grammar:

\[E: \emptyset \ | \ T^{\emptyset} \ | \ (s\mapsto t) \ | \ E\uplus E \ | \ E\Join E\]

with $(s,t)\in S\times T$.
\end{definition}

The {\em size} of an expression $E$ denoted $\|E\|$ is its number of leaves (viewing $E$ as a binary tree, where the leaves are labeled by $\emptyset$, $T^{\emptyset}$ and the $(s\mapsto t)$ for $(s,t)\in S\times T$, and where the internal nodes are labeled by either $\uplus$ or $\Join$).

The {\em domain} of an expression $E$ (denoted $dom(E)$) is the set of all $s\in S$ such that $E$ has a leaf of the form $(s\mapsto t)$ with $t\in T$.

\begin{definition}[Semantics of an expression]\label{def:semantic}

Given a {\em join/union expression} $E$ over the sets $(S,T)$, its {\em semantic} $[E]$ is either a special symbol FAIL, or a subset of $T^{dom(E)}$ defined inductively as:

\begin{itemize}

    \item $[\emptyset] = \emptyset$,
    
    \item $[T^{\emptyset}] = T^{\emptyset}$,
    
    \item $[s\mapsto t]= \{(s\mapsto t)\}$ for all $(s,t)\in S\times T$,
    
    \item $[E_1\uplus E_2]=\left\{ \begin{array}{cc}
        \text{FAIL} & \text{ if } [E_1] = \text{FAIL} \text{ or } [E_2]=\text{FAIL} \\
        \empty [E_1] & \text{ else, if } [E_2] = \emptyset \\
        \empty [E_2] & \text{ else, if } [E_1] = \emptyset \\
        \text{FAIL} & \text{ else, if } dom(E_1) \neq dom(E_2) \\
        \text{FAIL} & \text{ else, if }  [E_1]\cap [E_2]\neq \emptyset \\
        \empty [E_1]\uplus [E_2] & \text{else}, \\
    \end{array} \right.$
    
    \item $[E_1\Join E_2]=\left\{ \begin{array}{cc}
        \text{FAIL} & \text{ if } [E_1] = \text{FAIL} \text{ or } [E_2]=\text{FAIL} \\
        \text{FAIL} & \text{ else, if } dom(E_1) \cap dom(E_2) \neq \emptyset  \\
        \empty [E_1]\Join [E_2] & \text{else}. \\
    \end{array} \right.$

\end{itemize}
\end{definition}

We say that a join/union expression $E$ is {\em legal} if  $[E]\neq$ FAIL.
A {\em join/union expression for a set} $\F\subseteq T^S$ is a legal join/union expression $E$ such that $\F=[E]$. Alternatively, we say that $E$ {\em encodes} or {\em represents} $\F$ if $[E]=\F$.

\begin{example}

\rev{Up to allowing the commutation of the operations $\biguplus$ and $\Join$, and the distributivity of $\Join$ over $\biguplus$ (which do not affect the semantics of expressions), the set of join/union expressions over the sets $(S, T)$ forms a semiring, where $\biguplus$ plays the role of addition and $\Join$ that of multiplication.}

\rev{This semiring shares a key property with free structures such as free groups: every expression (i.e., every join/union expression) admits a unique normal form, up to the equivalence induced by the semiring axioms.
That is, if two expressions represent the same element of the semiring, then one can be transformed into the other using only the identities of commutativity, associativity, and distributivity.}

\end{example}

\subsection{Measures and Matrix Representations}\label{sec:measures,matrix}

Using dynamic programming, it is sometimes possible to compute a join/union expression of the set of solutions of an instance of a computational problem $\mathbf{\Pi}$. Then, in order to take advantage of this join/union expression to solve computational variants of $\mathbf{\Pi}$ (such as $\#\mathbf{\Pi}$ or {\sc Cost}-$\mathbf{\Pi}$) we introduce the concept of a {\em measure}, which takes  its values in a semiring while mapping the fundamental operations $\uplus$ and $\Join$ of the join/union expressions to the sum and product of the semiring.

\begin{definition}[Measure]

Let $\A$ be a commutative semiring. An $\A$-{\em measure} over the finite sets $(S,T)$ is a function $\mu$ such that for all $S' \subseteq S$ and $\F\subseteq T^{S'}$, $\mu$ maps $\F$ to $\mu(\F)\in A$, respecting:

\begin{enumerate}

    \item \textbf{zero axiom:} $\mu(\emptyset)=0$,
    
    \item \textbf{unit axiom:} $\mu(T^{\emptyset})=1$,
    
    \item \textbf{additivity:}  for $\F_1,\F_2 \subseteq T^{S'}$ disjoint:
    $\mu(\F_1\uplus \F_2) = \mu(\F_1) + \mu(\F_2)$,
    
    \item\textbf{elementary multiplicativity:}  for all $f_1\in T^{S_1}$ and $f_2\in T^{S_2}$ (with $S_1$ and $S_2$ disjoint):
$\mu(\{f_1\Join f_2\}) = \mu(\{f_1\})\times\mu(\{f_2\})$.

\end{enumerate}
\end{definition}

Once we know an expression of the set SOL of solutions of an instance of a computational problem, one deduces easily for any $\A$-measure $\mu$ how to compute $\mu(SOL)$ as a sum and product of the $\mu(\{s\mapsto t\})$ with $(s,t)\in S\times T$ by transforming the $\uplus$ into $+$, and the $\Join$ into $\times$.

Let us remark that the distributivity of $\times$ over $+$ in the semiring ensures that the elementary multiplicativity axiom of measures can be extended to true multiplicativity (over whole sets).

\begin{property}[\textbf{multiplicativity}]\label{prop:multiplicativity_sets}

Let $\mu$ be an $\A$-measure. Let $\F_1\subseteq T^{S_1}$ and $\F_2\subseteq T^{S_2}$. Then
$\mu(\F_1\Join\F_2)=\mu(\F_1)\times\mu(\F_2).$

\end{property}

\begin{proof}
The proof follows from the following sequence of identities that are not difficult to verify:
\begin{eqnarray*}
\mu(\F_1\Join\F_2)&=& \mu(\{ f_1\Join f_2 \mid (f_1,f_2)\in\F_1\times\F_2 \}) \quad (\text{By definition of }\F_1\Join\F_2)\\
&=& \mu(\underset{(f_1,f_2)\in\F_1\times\F_2}{\uplus} \{f_1\Join f_2\})\\
&=&
\sum\limits_{(f_1,f_2)\in\F_1\times\F_2} \mu(\{f_1\Join f_2\}) \quad  (\text{By additivity of }\mu)\\
&=&
\sum\limits_{(f_1,f_2)\in\F_1\times\F_2} \mu(\{f_1\}) \times \mu(\{f_2\}) \quad  (\text{By elementary multiplicativity of }\mu)\\
&=&
(\sum\limits_{f_1\in\F_1}\mu(\{f_1\}))\times (\sum\limits_{f_2\in\F_2}\mu(\{f_2\})) \quad  (\text{By distributivity of }\times\text{ over }+)\\
&=& \mu(\underset{f_1\in\F_1}{\uplus} \{f_1\}) \times \mu(\underset{f_2\in\F_2}{\uplus} \{f_2\}) \quad  (\text{By additivity of }\mu) \\
&=&\mu(\F_1)\times\mu(\F_2). \qedhere
\end{eqnarray*}
\end{proof}

We summarize the conclusion of Property \ref{prop:multiplicativity_sets} by saying that {\em measures are multiplicative}.
Given a set $\F\subseteq T^S$, and a measure $\mu$, we then realize that $\mu(\F)$ only depends\footnote{Similarly to how a linear application depend only on its image on a linear basis.} on the values of  $\mu(\{s\mapsto t\})$, for $(s,t)\in S\times T$. Indeed, remarking 

$$\F=\underset{f\in\F}{\uplus} \{f\} = \underset{f\in\F}{\uplus} \underset{s\in S}{\Join} \{s\mapsto f(s)\},$$ 

it then follows from the properties of measures that 

$$\mu(\F)=\sum\limits_{f\in\F} \prod\limits_{s\in S} \mu( \{s\mapsto f(s)\} ).$$

Now that we have identified the minimal information to describe a measure, we can easily represent this as a matrix.

\begin{definition}[Matrix of a measure]

The {\em matrix of the measure $\mu$} is the matrix $M_{\mu}=(\mu(\{s\mapsto t\}))_{(s,t)\in S\times T}\in A^{S\times T}$.

\end{definition}

Reciprocally, we can associate a measure $\mu_M$ to every matrix $M\in A^{S\times T}$ by $\forall S'\subseteq S, \forall \F\subseteq T^{S'}$,  $\mu_M(\F) := \sum\limits_{f\in\F} \prod\limits_{s\in S'} M[s,f(s)]$. It is indeed easy to verify that the function $\mu_M$ defined thereby is a measure, and that for every measure $\nu$, $\mu_{M_{\nu}} = \nu$ and that for every matrix $N$, $M_{\mu_N}=N$.

The practical relevance of join/union expressions and measures is summarized in Lemma \ref{lem:expression_computation}. Here, the link between join/union expressions and measures is ensured by the additivity and multiplicativity axioms of measures. %

\begin{lemma}\label{lem:expression_computation}

Given a legal join/union expression $E$ over the finite sets $S$ and $T$, and an $\A$-measure $\mu$ represented by its matrix $M_{\mu}=(M_{\mu}[s,t])_{s,t}\in A^{S\times T}$, $\mu([E])$ can be computed in $O(\|E\|)$ time.

\end{lemma}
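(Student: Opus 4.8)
The plan is to prove this by structural induction on the join/union expression $E$, showing that we can compute $\mu([E])$ while simultaneously tracking whether $[E]=\text{FAIL}$ and what $dom(E)$ is. Since $E$ is given as legal, we know $[E]\neq\text{FAIL}$, but the recursion still needs to verify the side conditions (domain disjointness for $\Join$, domain equality and disjointness of semantics for $\uplus$) in order to justify applying the measure axioms correctly. The key observation driving the $O(\|E\|)$ bound is that the measure axioms let us compute $\mu([E])$ purely from the measures of the two subexpressions, using one semiring operation per internal node, provided we have already established legality.

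First I would handle the three base cases. For $E=\emptyset$, the zero axiom gives $\mu([\emptyset])=\mu(\emptyset)=0_A$; for $E=T^{\emptyset}$, the unit axiom gives $\mu([T^{\emptyset}])=1_A$; and for $E=(s\mapsto t)$, we read off $\mu(\{(s\mapsto t)\})=M_{\mu}[s,t]$ directly from the matrix. Each of these is a single lookup or constant, computed in $O(1)$ time, matching the single-leaf size. For the inductive step on $E=E_1\uplus E_2$, since $E$ is legal the semantics reduce to $[E]=[E_1]\uplus[E_2]$ (after handling the degenerate cases where one side is $\emptyset$, which are absorbed correctly since $\mu(\emptyset)=0_A$ is the additive identity), and the additivity axiom yields $\mu([E])=\mu([E_1])+\mu([E_2])$. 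For $E=E_1\Join E_2$, legality forces $dom(E_1)\cap dom(E_2)=\emptyset$, so $[E]=[E_1]\Join[E_2]$, and Property~\ref{prop:multiplicativity_sets} (multiplicativity of measures) gives $\mu([E])=\mu([E_1])\times\mu([E_2])$.

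For the running time, I would argue by a standard accounting over the binary tree underlying $E$. Computing $\mu([E])$ recursively performs exactly one semiring operation ($+$ or $\times$) at each internal node and one matrix lookup or constant assignment at each leaf; since operations are assumed to take constant time, the total work is proportional to the number of nodes, which is $O(\|E\|)$ as $\|E\|$ counts the leaves and a binary tree with $\ell$ leaves has $\ell-1$ internal nodes. I expect the main subtlety, rather than obstacle, to be rigorously justifying that the recursive formula coincides with $\mu([E])$ in the presence of the degenerate branches of the semantics definition (the cases where $[E_1]$ or $[E_2]$ equals $\emptyset$ in the $\uplus$ rule): one must check that $\mu([E_1])+\mu([E_2])$ still equals $\mu([E])$ there, which follows immediately because $\mu(\emptyset)=0_A$ absorbs correctly and $+$ has $0_A$ as its neutral element. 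Since $E$ is assumed legal, no $\text{FAIL}$ branch is ever reached along the computation, so the recursion is well-defined throughout.
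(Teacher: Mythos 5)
Your proof is correct and follows essentially the same approach as the paper's: structural induction mapping the leaves to the zero/unit axioms and matrix entries, the internal nodes $\uplus$ and $\Join$ to $+$ and $\times$ via additivity and Property~\ref{prop:multiplicativity_sets}, and the same leaf-counting argument for the $O(\|E\|)$ bound. Your explicit check of the degenerate $\uplus$ branches (where one operand has empty semantics and the domains may differ) is a detail the paper's proof passes over silently, and it is handled correctly since $0_A$ is neutral for $+$.
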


\begin{proof}

We inductively compute the result by following the structure of a join/union expression.

\begin{itemize}
    \item If $E=\emptyset$ we can return $0$ by the zero axiom.
    \item If $E=T^{\emptyset}$ we can return $1$ by the unit axiom.
    \item If $E=(s\mapsto t)$ with $(s,t)\in S\times T$, we can return $M_{\mu}[s,t]$ by definition of the matrix $M_{\mu}$ of $\mu$.
    \item If $E=E_1\uplus E_2$, inductively compute $\mu([E_1])$ and $\mu([E_2])$. We can then return $\mu([E_1])+\mu([E_2])$ by additivity.
    \item If $E=E_1\Join E_2$, inductively compute $\mu([E_1])$ and $\mu([E_2])$. We can then return $\mu([E_1])\times\mu([E_2])$ by multiplicativity (Property \ref{prop:multiplicativity_sets}).
\end{itemize}

The algorithm performs as many operations as there are nodes in the join/union expression $E$ seen as a binary tree. Recall that any binary tree with $m\ge 1$ leaves has $2m-1$ nodes in total: the number of nodes of a join/union expression is linear in its size.
Hence, we compute $\mu([E])$ in $O(\|E\|)$ time.
\end{proof}

The main interest of Lemma \ref{lem:expression_computation} lies in the fact that many usual computational extension of {\sf NP} problems can be formulated as computing the image of the set of solutions by a semiring-measure.

\begin{example}\label{example:measures}
Given a computational problem $\mathbf{\Pi}$ of {\sf NP}, the problems $\mathbf{\Pi}$, $\#\mathbf{\Pi}$, {\sc List}-$\mathbf{\Pi}$ and {\sc Cost}-$\mathbf{\Pi}$ all consists in computing the image by a measure of the set of solutions SOL of $\mathbf{\Pi}$. For all of these extensions, let us now consider the semiring and the matrix of the corresponding measure.
\begin{itemize}
    \item \rev{Solving $\mathbf{\Pi}$ means computing $\mathbf{1}_{\neq\emptyset}$(SOL), with $\mathbf{1}_{\neq\emptyset}(\F) = \left\{ \begin{array}{cc}
         \top & \text{ if } \F\neq\emptyset  \\
         \bot & \text{ if } \F=\emptyset
    \end{array} \right.$
$\mathbf{1}_{\neq\emptyset}$ is a $\B$-measure, and its matrix only has $\top$ coefficients.}

    \item \rev{Solving $\#\mathbf{\Pi}$ means computing $\#$(SOL), with $\#(\F) = |\F|$.
$\#$ is a $\N$-measure, and its matrix only has $1$ as coefficients.}

    \item \rev{Solving {\sc List}-$\mathbf{\Pi}$ means computing $L$(SOL) with $M_L$ the $\B$-measure given by: for all $(s,t)\in S\times T$, $M_L[s,t]=\top$ if sending $s$ to $t$ is allowed, and $M_L[s,t]=\bot$ otherwise.
}
\item \rev{Solving {\sc Cost}-$\mathbf{\Pi}$ means computing $C$(SOL), where $C$ is the $\mathbb{R}_{\min}$-measure of the cost matrix $M_C$, and $M_C[s,t] \in \mathbb{R}$ is the cost of mapping $s$ to $t$.
}\end{itemize}

\end{example}

\subsection{Computing join/union expressions} \label{sec:semiring_expressions}

In order to take advantage of Lemma \ref{lem:expression_computation}, it is useful to be able to compute a join/union expressions of small size representing the set of solutions of the considered problem. 
Note that small join/union expression can encode large sets. For instance, for $n\ge 0$, the set $[n]^{[n]}$ of all functions from $[n]$ to $[n]$ has the join/union expression of size $n^2$:

$$ E_{[n]^{[n]}} = \overset{n}{\underset{i=1}{ \Join}}  (\overset{n}{\underset{j=1}{\uplus}} (i\mapsto j) ).$$

However, this join/union expression does not seem relevant to efficiently encode the set $\mathfrak{S}_n$ of all permutations over $[n]$. Intuitively, the operation $E_1\Join E_2$ does not allow one to choose the image in the domain of $E_2$ dependent on the images of the elements chosen in $E_1$. Indeed, as we will now prove, such a join/union expression is unlikely to be computable in polynomial time.

\begin{theorem}\label{thm:no_semiring_permutations}

Unless \rev{{\sf P=\#P}}, a join/union expression of $\mathfrak{S}_n$ can not be computed in \rev{time $P(n)$, for any polynomial $P$, even if $n$ is represented in unary}.

\end{theorem}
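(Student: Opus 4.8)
The plan is to argue that if such a join/union expression $E$ with $[E] = \mathfrak{S}_n$ could be computed in polynomial time, then we could count permutations (and more) in polynomial time, contradicting $\textsf{P} \neq \textsf{\#P}$. The crucial observation is that $|\mathfrak{S}_n| = n!$ is easy to compute, so merely counting the elements of $[E]$ gives nothing; instead I would reduce a genuinely hard counting problem to evaluating a measure on such an expression. The natural candidate is the permanent of a $0/1$ matrix, equivalently counting perfect matchings in a bipartite graph, which is $\textsf{\#P}$-complete (Valiant). The key point is that the permanent of an $n \times n$ matrix $B = (b_{i,j})$ is exactly $\sum_{\sigma \in \mathfrak{S}_n} \prod_{i=1}^n b_{i,\sigma(i)}$, which is precisely the form $\mu(\mathfrak{S}_n)$ computes for a suitable $\N$-measure.

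\textbf{Key steps.} First I would take $S = T = [n]$, so that elements of $T^S$ are functions $[n] \to [n]$ and $\mathfrak{S}_n \subseteq T^S$ is a well-defined set of functions, making a join/union expression over $(S,T)$ for $\mathfrak{S}_n$ meaningful. Second, given any $0/1$ matrix $B \in \{0,1\}^{[n] \times [n]}$, I would define the $\N$-measure $\mu_B$ via its matrix $M_{\mu_B}[i,j] := b_{i,j}$ (using the correspondence between matrices and measures established in Section~\ref{sec:measures,matrix}). Third, by the formula derived there,
\[
\mu_B(\mathfrak{S}_n) = \sum_{\sigma \in \mathfrak{S}_n} \prod_{i \in [n]} M_{\mu_B}[i,\sigma(i)] = \sum_{\sigma \in \mathfrak{S}_n} \prod_{i=1}^n b_{i,\sigma(i)} = \operatorname{perm}(B),
\]
which counts the number of permutations $\sigma$ with $b_{i,\sigma(i)} = 1$ for all $i$, i.e.\ the number of perfect matchings in the bipartite graph whose biadjacency matrix is $B$. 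Fourth, assuming toward a contradiction that a legal join/union expression $E$ with $[E] = \mathfrak{S}_n$ is computable in time $P(n)$ for some polynomial $P$ (with $n$ in unary), I would feed $E$ and $M_{\mu_B}$ into Lemma~\ref{lem:expression_computation}, which computes $\mu_B([E]) = \operatorname{perm}(B)$ in $O(\|E\|) = O(P(n))$ time, since $\|E\| \le P(n)$. This yields a polynomial-time algorithm for the permanent of a $0/1$ matrix.

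\textbf{Main obstacle.} The delicate point is that the expression $E$ depends only on $n$ and not on $B$, so one might worry it cannot encode information about the particular matrix. But this is exactly the strength of the reduction: the \emph{same} expression $E$ for $\mathfrak{S}_n$ works for every $B$, because all the matrix-dependent data lives in the measure $\mu_B$, which is cheap to build once $E$ is known. The substantive subtlety I would need to handle carefully is the unary-representation and uniformity bookkeeping: computing $E$ takes time $P(n)$, and evaluating it via Lemma~\ref{lem:expression_computation} takes $O(\|E\|)$ additional time, so the total is polynomial in $n$; since the permanent of a $0/1$ matrix is $\textsf{\#P}$-complete, a polynomial-time algorithm for it collapses $\textsf{P} = \textsf{\#P}$. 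I would also remark that nothing in the argument requires $E$ to be a \emph{correct} expression only for $\mathfrak{S}_n$ — any legal expression whose semantics is exactly $\mathfrak{S}_n$ suffices, so no additional assumptions on the structure of $E$ are needed. The only genuine conceptual ingredient is recognizing that the measure formalism turns $\mu_B(\mathfrak{S}_n)$ into the permanent; the rest is routine complexity-theoretic packaging.
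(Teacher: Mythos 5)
Your proposal is correct and follows essentially the same argument as the paper: reduce the \textsf{\#P}-complete \textsc{Permanent} problem by taking the $\N$-measure whose matrix is the input $0/1$ matrix, and evaluate it on the hypothetical polynomial-size expression for $\mathfrak{S}_n$ via Lemma~\ref{lem:expression_computation}, yielding $\operatorname{perm}(B)$ in polynomial time and hence \textsf{P=\#P}. The observation you flag as the key insight — that the expression depends only on $n$ while all matrix-dependent data lives in the measure — is precisely what the paper's proof exploits as well.
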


\begin{proof}

Recall the definition of the {\sc Permanent} problem:

\hfill

\underline{\sc Permanent}:

\textbf{Input:} A matrix $M\in \{0,1\}^{n\times n}$ with $n\ge 0$.

\textbf{Output:} The permanent of $M$, {\it  i.e.,} $perm(M):=\sum\limits_{\sigma\in \mathfrak{S}_n} \prod\limits_{i=1}^n m_{i,\sigma(i)}$.

\hfill

The {\sc Permanent} problem is well-known to be {\sf \#P}-complete due to Valiant's theorem~\cite{DBLP:journals/tcs/Valiant79}. Therefore, if it is solvable in \rev{time $P(n)$ for any polynomial $P$}, then \rev{{\sf P=\#P}}.

Assume that there exists an algorithm that takes an integer $n$ as input \rev{(represented in unary encoding)} and returns a join/union expression $E_{\mathfrak{S}_n}$ of $\mathfrak{S}_n$ \rev{in time $P(n)$ for some polynomial $P$}. This expression has size \rev{at most $P(n)$} since it has been computed in \rev{$P(n)$} time.

Let $M\in \{0,1\}^{n\times n}$ with $n\ge 0$ be an instance of {\sc Permanent}.
Then, by Lemma \ref{lem:expression_computation} the $\N$-measure $\mu_M$ of matrix $M$ applied to the set $\mathfrak{S}_n$ can be computed in \rev{time $O(P(n))$} time, {\it  i.e.,} $\mu_{M}(\mathfrak{S}_n)=\sum\limits_{\sigma\in\mathfrak{S}_n} \prod\limits_{i=1}^n m_{i,\sigma(i)} = perm(M)$ can be computed in \rev{time $O(P(n))$}. This proves that \rev{{\sf P=\#P}}.
\end{proof}

By Theorem \ref{thm:no_semiring_permutations}, we know that a join/union expression of $\mathfrak{S}_n$ is unlikely to be computable in $poly(n)$ time. We view it as an interesting open question to prove this unconditionally and pose the following conjecture.

\begin{conjecture}\label{conj:no_semiring_permutations} \rev{For all polynomial $P$, there exists an integer $n$ such that 
$\mathfrak{S}_n$ does not have a join/union expression of size at most $P(n)$, even if $n$ is represented in unary.}
\end{conjecture}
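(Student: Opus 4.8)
The plan is to identify the size of a join/union expression for $\mathfrak{S}_n$ with the monotone arithmetic formula complexity of the permanent polynomial $\mathrm{perm}_n(X)=\sum_{\sigma\in\mathfrak{S}_n}\prod_{i=1}^n x_{i,\sigma(i)}$ in the $n^2$ indeterminates $X=(x_{i,j})_{i,j\in[n]}$. Given any legal $E$ with $[E]=\mathfrak{S}_n$, I would build a formula $\Phi_E$ by the syntactic substitution that sends each leaf $\emptyset$ to the constant $0$, each leaf $T^{\emptyset}$ to the constant $1$, each leaf $(i\mapsto j)$ to the variable $x_{i,j}$, each $\uplus$-node to $+$, and each $\Join$-node to $\times$. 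By construction $\Phi_E$ is a formula with exactly $\|E\|$ leaves built solely from $+$, $\times$, variables, and the constants $0,1$; in particular it is \emph{monotone} over the semiring $\N$.

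First I would verify that $\Phi_E$ computes $\mathrm{perm}_n$ identically. Take the $\N$-measure $\mu_M$ whose matrix $M=(x_{i,j})$ consists of the formal indeterminates, equivalently fix an arbitrary assignment of nonnegative integers to the $x_{i,j}$. Since $\mu_M(\{i\mapsto j\})=M[i,j]=x_{i,j}$, the inductive evaluation of Lemma~\ref{lem:expression_computation} coincides step by step with the evaluation of $\Phi_E$; hence $\Phi_E$ evaluates to $\mu_M([E])=\mu_M(\mathfrak{S}_n)=\sum_{\sigma}\prod_i x_{i,\sigma(i)}=\mathrm{perm}_n$. Because two polynomials agreeing on all of $\N^{n^2}$ are equal, $\Phi_E$ computes the permanent polynomial as a formal identity, not merely numerically.

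The conjecture then reduces to a known bound. By the theorem of Jerrum and Snir on subtraction-free (monotone) straight-line computations over commutative semirings, every monotone arithmetic circuit computing $\mathrm{perm}_n$ over $\N$ has size $2^{\Omega(n)}$; as every formula is in particular a circuit, the monotone \emph{formula} complexity of $\mathrm{perm}_n$ is $2^{\Omega(n)}$ as well. Combining this with the translation yields $\|E\|\ge 2^{\Omega(n)}$ for every legal $E$ representing $\mathfrak{S}_n$. Since $2^{\Omega(n)}$ eventually exceeds any fixed polynomial, for each polynomial $P$ there is an $n$ admitting no join/union expression of $\mathfrak{S}_n$ of size at most $P(n)$, even with $n$ in unary. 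As a bonus this would reprove Theorem~\ref{thm:no_semiring_permutations} unconditionally.

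The main obstacle is not the reduction, which is routine, but importing the lower bound cleanly. One must check that the Jerrum--Snir estimate is applied in the correct setting --- a commutative semiring without additive inverses, such as $\N$ --- and that it is used for formulas, where it holds a fortiori since formula size dominates circuit size; one must also dispose of the degenerate leaves $\emptyset$ and $T^{\emptyset}$, which contribute only the constants $0$ and $1$ and therefore cannot decrease the complexity. If one instead insists on a fully self-contained, unconditional combinatorial argument that avoids the circuit-complexity literature, the genuine difficulty resurfaces: one would have to re-derive the exponential monotone lower bound for the permanent directly, for instance through the parse-tree/decomposition counting that underlies the Jerrum--Snir argument, and this counting is the hard technical core.
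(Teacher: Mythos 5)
The statement you were asked to prove is not proved in the paper at all: it is posed there as an open conjecture, supported only by the conditional Theorem~\ref{thm:no_semiring_permutations}, which shows (via Lemma~\ref{lem:expression_computation} and Valiant's ${\sf \#P}$-completeness of the permanent) that such expressions cannot be \emph{computed} in polynomial time unless ${\sf P}={\sf \#P}$. Your proposal does something genuinely stronger, and as far as I can check it is correct. The translation $E\mapsto\Phi_E$ is sound: by induction on a legal expression, using that $\uplus$ is only applied to disjoint sets and $\Join$ only to expressions with disjoint syntactic domains (so every $f\in[E_i]$ has its domain inside $dom(E_i)$, the map $(f_1,f_2)\mapsto f_1\Join f_2$ is injective, and monomials multiply correctly), the formal polynomial computed by $\Phi_E$ equals the generating polynomial $\sum_{f\in[E]}\prod_{s}x_{s,f(s)}$, which for $[E]=\mathfrak{S}_n$ is exactly $\mathrm{perm}_n$. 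This step is really just Lemma~\ref{lem:expression_computation} instantiated with the measure over the commutative semiring $\N[x_{1,1},\dots,x_{n,n}]$ whose matrix consists of indeterminates, so it stays entirely inside the paper's own framework (and is cleaner than the evaluation-at-all-points argument, though that works too). The appeal to Jerrum and Snir is in the right setting: their exponential lower bound ($n(2^{n-1}-1)$ multiplications) holds for subtraction-free computations over commutative semirings with constants $0,1$ allowed, and a formula with $m$ leaves is a circuit with at most $m-1$ product gates, so $\|E\|\geq 2^{\Omega(n)}$, which exceeds $P(n)$ for all large $n$, and unary versus binary representation of $n$ is irrelevant to this purely combinatorial bound.

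Comparing the two: the paper converts the question into a statement about algorithms and can therefore only conclude hardness modulo a complexity-theoretic assumption, whereas you observe that a join/union expression, read through the generating-function homomorphism, \emph{is} a monotone arithmetic formula, which lets you import an unconditional monotone lower bound. This settles Conjecture~\ref{conj:no_semiring_permutations} affirmatively, makes Theorem~\ref{thm:no_semiring_permutations} unconditional as you note, and the identical argument with the Hamiltonian-cycle polynomial (whose monotone complexity Jerrum and Snir also bound exponentially) would settle Conjecture~\ref{conj:no_semiring_cycles} as well. The only things to spell out in a full write-up are the two points you already flag, plus the domain/injectivity bookkeeping in the $\Join$ case noted above; none of them is a real obstacle.
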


Note that a join/union expression $E_{\mathfrak{S}_n}$ of $\mathfrak{S}_n$ can be defined recursively by $E_{\mathfrak{S}_{n+1}}=\underset{m\in [n+1]}{\uplus} ((n+1\mapsto m) \Join E_{\mathfrak{S}_n}[m\leftarrow n+1])$, with $E_{\mathfrak{S}_n}[m\leftarrow n+1]$ being a copy of $E_{\mathfrak{S}_n}$ where every leaf of the form $(i\mapsto m)$ have been replaced by a leaf $(i\mapsto n+1)$. However, this would result in a formula with $n!$ leaves. The issue is that the replacement of leaves must be operated dependent on the image we have chosen for $n+1$, which rules out the possibility of an efficient factorization of the expression.

A similar result can be proven for the set $\mathfrak{C}_n$ of $n$-cycles of $\mathfrak{S}_n$.

\begin{theorem}\label{thm:no_semiring_cycles}

Unless {\sf P=NP}, a join/union expression of $\mathfrak{C}_n$ can not be computed in \rev{time $P(n)$, for any polynomial $P$}.

\end{theorem}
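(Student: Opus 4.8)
The plan is to mimic the reduction used for $\mathfrak{S}_n$ in Theorem~\ref{thm:no_semiring_permutations}, but to replace {\sc Permanent} with the {\sc Directed Hamiltonian Cycle} problem, which is well known to be {\sf NP}-complete. The essential observation is that the $n$-cycles of $\mathfrak{S}_n$ are exactly the directed Hamiltonian cycles of the complete directed graph on $[n]$: an $n$-cycle $\sigma$, viewed as a permutation, assigns to each vertex $i$ a successor $\sigma(i)$, and since $\sigma$ consists of a single cycle through all of $[n]$, the arcs $(i,\sigma(i))_{i\in[n]}$ trace out a directed Hamiltonian cycle; conversely, every such cycle arises from a unique element of $\mathfrak{C}_n$.

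Concretely, given a directed graph $G$ on vertex set $[n]$ with adjacency matrix $M\in\{0,1\}^{n\times n}$, I would consider the $\B$-measure $\mu_M$ whose matrix has entry $\top$ at position $(i,j)$ exactly when $m_{i,j}=1$ (that is, when $(i,j)$ is an arc of $G$), and $\bot$ otherwise. By additivity and multiplicativity of measures, this measure evaluates to
$$\mu_M(\mathfrak{C}_n)=\bigvee_{\sigma\in\mathfrak{C}_n}\ \bigwedge_{i=1}^n M[i,\sigma(i)],$$
which equals $\top$ if and only if some $n$-cycle $\sigma$ has all of its arcs $(i,\sigma(i))$ present in $G$, i.e., if and only if $G$ admits a directed Hamiltonian cycle. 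Note that the diagonal entries of $M$ are never queried, since $\sigma(i)\neq i$ for every $n$-cycle (with $n\geq 2$), so self-loops of $G$ are irrelevant.

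With this in place the argument is a direct transplant of the proof of Theorem~\ref{thm:no_semiring_permutations}. Suppose, for contradiction, that some algorithm outputs a join/union expression $E_{\mathfrak{C}_n}$ of $\mathfrak{C}_n$ in time $P(n)$ for a polynomial $P$; then $\|E_{\mathfrak{C}_n}\|\leq P(n)$. Given any instance $G$ of {\sc Directed Hamiltonian Cycle}, I would build its adjacency matrix $M$ in polynomial time and invoke Lemma~\ref{lem:expression_computation} to compute $\mu_M([E_{\mathfrak{C}_n}])=\mu_M(\mathfrak{C}_n)$ in time $O(\|E_{\mathfrak{C}_n}\|)=O(P(n))$. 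Reading off whether the result is $\top$ decides Hamiltonicity of $G$ in polynomial time, which would yield {\sf P=NP}.

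There is no genuine obstacle here beyond correctly identifying the right {\sf NP}-complete problem and verifying the bijection between $n$-cycles and directed Hamiltonian cycles; once the Boolean measure is seen to compute exactly the Hamiltonicity predicate, everything else reuses the machinery already established. The only point deserving care is the choice of semiring: working over $\B$ (rather than over $\N$ as in Theorem~\ref{thm:no_semiring_permutations}) is what lets us conclude {\sf P=NP} directly from the decision version, rather than merely {\sf P=\#P} from the counting version {\sc \#Directed-Hamiltonian-Cycle}.
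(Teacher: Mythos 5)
Your proof is correct, and it follows the same skeleton as the paper's: assume the expression $E_{\mathfrak{C}_n}$ is computable in time $P(n)$ (hence has size at most $P(n)$), encode an {\sf NP}-hard graph problem as a measure matrix, and apply Lemma~\ref{lem:expression_computation} to evaluate that measure on $[E_{\mathfrak{C}_n}]=\mathfrak{C}_n$ in polynomial time. The difference is the instantiation: the paper reduces from (undirected) {\sc Traveling-Salesman}, building a weight matrix $W$ with $w_{i,j}=w(\{i,j\})$ on edges and $\infty$ on non-edges, and evaluates the tropical ($\Rbar_{min}$) measure $\mu_W(\mathfrak{C}_n)=\min_{c\in\mathfrak{C}_n}\sum_i w_{i,c(i)}$, which is the optimal tour cost; you instead reduce from {\sc Directed Hamiltonian Cycle}, using the Boolean measure of the adjacency matrix, so that $\mu_M(\mathfrak{C}_n)=\top$ iff a Hamiltonian cycle exists. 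Your choice is arguably cleaner in one respect: the correspondence between $\mathfrak{C}_n$ and directed Hamiltonian cycles is an exact bijection, whereas in the paper's undirected setting each Hamiltonian cycle corresponds to two $n$-cycles (the two orientations) — harmless for the $\min$, but a detail the paper leaves implicit. Conversely, the paper's tropical instantiation illustrates that the obstruction persists even for optimization-type measures, which fits its broader narrative about cost extensions. Both yield exactly the same conclusion ({\sf P=NP}) from the same key lemma.
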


\begin{proof}

Recall the definition of the {\sc Traveling-Salesman} problem.

\hfill

\underline{\sc Traveling-Salesman}:

\textbf{Input:} A graph $G$ on $n\ge 0$ vertices, and a weight function $w:E_G\mapsto\R_+$.

\textbf{Output:} The minimal weight of a Hamiltonian cycle in $G$.

\hfill

The {\sc Traveling-Salesman} problem is known to be {\sf NP}-complete. Therefore, if it is solvable in time \rev{$P(n)$ for some polynomial $P$}, then {\sf P=NP}.
Assume that there exists an algorithm that takes as an input an integer $n$, and returns a join/union expression $E_{\mathfrak{C}_n}$ of $\mathfrak{C}_n$ in\rev{time $P(n)$, for some polynomial $P$}. This expression has size \rev{at most $P(n)$} since it has been computed in \rev{$P(n)$} time.

Let $(G,w)$ an instance of {\sc Traveling-Salesman}. \rev{Let $n = |V_G|$ be the number of vertices and assume without loss of generality that $V_G=[n]$.}
Consider the matrix $W=(w_{i,j})\in \Rbar^{[n]\times [n]}$, with for all $(i,j)\in [n]^2$, 
$$w_{i,j}= \left\{ \begin{array}{cc}
     w(\{i,j\}) & \text{ if }\{i,j\}\in E_G \\
     \infty & \text{ if }\{i,j\}\notin E_G
\end{array} \right.$$

By Lemma \ref{lem:expression_computation}, the $\Rbar_{min}$-measure $\mu_{W}$ (of the tropical semiring $(\Rbar,\min,+,\infty,0)$) of the matrix $W$ applied to the set $\mathfrak{C}_n$ can be computed in $poly(n)$ time, {\it  i.e.,} $\mu_{W}(\mathfrak{C}_n)=\min\limits_{c\in\mathfrak{C}_n} \sum\limits_{i=1}^n w_{i,c(i)}$ can be computed in \rev{$O(P(n))$} time. This solves {\sc Traveling-Salesman} in polynomial time, implying that {\sf P=NP}.
\end{proof}

Again, Theorem \ref{thm:no_semiring_cycles} indicates that a join/union expression of $\mathfrak{C}_n$ of size $poly(n)$ is unlikely to exist, and we pose the following explicit conjecture.

\begin{conjecture}\label{conj:no_semiring_cycles}

\rev{For all polynomial $P$, there exists an integer $n$ such that the set $\mathfrak{C}_n$ of $n$-cycles of $\mathfrak{S}_n$ does not have a join/union expression of size at most $P(n)$, even if $n$ is represented in unary}.

\end{conjecture}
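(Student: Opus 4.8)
The plan is to recognize a join/union expression as nothing more than a \emph{monotone arithmetic formula} and then to transfer a classical unconditional exponential lower bound for such formulas. Concretely, I would instantiate Lemma~\ref{lem:expression_computation} with the ``generic'' measure: take the commutative polynomial semiring $\N[\,x_{i,j}:(i,j)\in[n]^2\,]$ and the matrix $M$ with $M[i,j]=x_{i,j}$. If $E$ is a join/union expression with $[E]=\mathfrak{C}_n$, then the value $\mu_M([E])$ produced by Lemma~\ref{lem:expression_computation} is exactly
\[
\mu_M(\mathfrak{C}_n)=\sum_{c\in\mathfrak{C}_n}\ \prod_{i=1}^n x_{i,c(i)},
\]
the (directed) \emph{Hamiltonian cycle polynomial}, which I will call $h_n$. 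Since distinct cyclic permutations produce distinct multilinear monomials, every monomial of $h_n$ occurs with coefficient exactly $1$, so no accidental coincidence or cancellation can occur.

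Next I would make the syntactic correspondence precise. Reading the inductive computation of Lemma~\ref{lem:expression_computation} backwards, the expression $E$ \emph{is} a formula over the operations $+$ (for $\uplus$) and $\times$ (for $\Join$) whose leaves are the constants $0$ (for $\emptyset$), $1$ (for $T^{\emptyset}$) and the variables $x_{i,j}$ (for $i\mapsto j$); all coefficients are nonnegative, so this is a monotone arithmetic formula of size $O(\|E\|)$ computing $h_n$. The legality constraints ($\Join$ only over disjoint domains, $\uplus$ only over disjoint same-domain sets) merely \emph{restrict} which monotone formulas arise, and therefore can only help: any size lower bound valid for all monotone formulas computing $h_n$ applies a fortiori to $\|E\|$.

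It then remains to invoke the unconditional exponential lower bound of Jerrum and Snir: every monotone arithmetic circuit---and in particular every monotone formula, since a formula is a circuit---computing the permanent polynomial $\sum_{\sigma\in\mathfrak{S}_n}\prod_i x_{i,\sigma(i)}$ or the Hamiltonian cycle polynomial $h_n$ has size $2^{\Omega(n)}$. (Ryser-type efficient formulas are unavailable here precisely because they rely on subtraction, which is forbidden in the monotone setting.) Consequently $\|E\|\ge 2^{\Omega(n)}$, and for every polynomial $P$ this exceeds $P(n)$ once $n$ is large enough; the ``$n$ in unary'' clause changes nothing, since the bound is already stated directly in terms of $n$. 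The identical argument with the permanent in place of $h_n$ would settle Conjecture~\ref{conj:no_semiring_permutations} as well.

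The genuine content---and the step I expect to be the main obstacle, should one want a self-contained proof rather than a citation---is the monotone lower bound itself. The heart of the Jerrum--Snir method translates transparently into the language of this paper: whenever a subexpression computes a product $\F_1\Join\F_2$ over a nontrivial domain split $[n]=D_1\uplus D_2$, the set of cyclic permutations ``visible'' at that gate must be a combinatorial rectangle $\F_1\Join\F_2\subseteq\mathfrak{C}_n$, and the key lemma is that $\mathfrak{C}_n$ contains no large rectangles of this shape---fixing the images on $D_2$ leaves too little freedom on $D_1$ for the result to remain a single $n$-cycle. Bounding how many distinct monomials of $h_n$ can be ``introduced'' at each gate then forces an exponential number of gates. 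Making this rectangle/covering bound quantitative, rather than quoting it, is where essentially all the work lies.
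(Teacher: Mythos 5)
You should first be aware that the statement you were asked to prove is posed in the paper only as an open conjecture: the authors give no proof of it. What the paper proves is the weaker, conditional Theorem~\ref{thm:no_semiring_cycles}, namely that a join/union expression for $\mathfrak{C}_n$ cannot be \emph{computed} in polynomial time unless {\sf P=NP}; their argument feeds a tropical ($\Rbar_{min}$) measure encoding a \textsc{Traveling-Salesman} instance into Lemma~\ref{lem:expression_computation}, and the unconditional size bound is explicitly flagged as open both after that theorem and in the conclusion. Your argument is correct and genuinely stronger than anything in the paper. The move the paper never makes is your choice of the ``generic'' measure into the commutative polynomial semiring $\N[x_{i,j}]$: by exactly the induction of Lemma~\ref{lem:expression_computation} (additivity at $\uplus$-nodes is legitimate because legality forces disjointness, multiplicativity at $\Join$-nodes is Property~\ref{prop:multiplicativity_sets}), any legal expression $E$ with $[E]=\mathfrak{C}_n$ \emph{is} a monotone arithmetic formula of size $O(\|E\|)$ computing the directed Hamiltonian cycle polynomial $h_n$, and the legality constraints only shrink the class of formulas a lower bound must cover. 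The Jerrum--Snir unconditional $2^{\Omega(n)}$ lower bound for monotone circuits (a fortiori formulas) computing $h_n$ --- their paper treats both the permanent and the Hamiltonian circuit polynomial, precisely in the cancellation-free semiring setting needed here --- then yields $\|E\|\ge 2^{\Omega(n)}$, which exceeds $P(n)$ for all large $n$; the unary-encoding clause is indeed irrelevant since the bound is stated in terms of $n$ itself. So where the paper's complexity-theoretic route buys only a quick conditional statement, your algebraic route resolves Conjecture~\ref{conj:no_semiring_cycles} outright, and the identical substitution with the permanent resolves Conjecture~\ref{conj:no_semiring_permutations} as well. The only remaining task for a final write-up is to quote the Jerrum--Snir theorem precisely (plus the trivial remark that a formula is a circuit); your closing sketch of their rectangle/covering argument has the right shape but does not need to be reproved.
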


One can also remark that Conjecture \ref{conj:no_semiring_cycles} implies that a join/union expression of size \rev{$P(n)$ (for some polynomial $P$)} does not exist for the set of {\em all cycles} of $\mathfrak{S}_n$. Indeed, if such an expression exists, we can use it to obtain a join/union expression of size $poly(n)$ of $\mathfrak{C}_n$ by replacing every leaf of the form $(i\mapsto i)$ by $\emptyset$, contradicting Conjecture \ref{conj:no_semiring_cycles}.

\subsection{Differences with {\sc Sum-Product-Csp}} \label{sec:sum_product}

At first, the concept of a measure may seem very reminiscent of the {\sc Sum-Product-Csp} framework \cite{eiter2021complexity,fan2023fine}. Here, we are given a semiring $\A$, a domain $D$ and a set of constraints $\Gamma$ where each constraint is a polynomial time computable function $C:D^{ar(C)}\mapsto A$, where $ar(C)\ge 0$ is the {\em arity} of $C$. The problem {\sc Sum-Product-Csp}$(\Gamma)$ is then defined as follows.

\hfill

$\underline{\text{\sc Sum-Product-Csp}(\Gamma):}$

\textbf{Input:} A set of variable $V$, and a finite set of constraints $\{ (C_i,x_i^1,\dots,x_i^{ar(C)}) \mid i\in [m] \}$ with $m\ge 0$, and $\forall i\in [m], C_i\in\Gamma$.

\textbf{Output:} The value $\sum\limits_{f\in D^V} \prod\limits_{i=1}^m C_i(f(x_i^1),\dots,f(x_i^{ar(C)})) \in A$.

\hfill

Note that the constraints defined in the context of {\sc Sum-Product-Csp} subsumes the less general constraints defined for {\sc Csp}, as a relation $C$ over $D$ of arity $ar(C)$ can be seen as a function $C: D^{ar(C)} \mapsto \{\bot,\top\}$ taking its value in the Boolean semiring. With this identification, {\sc Csp} can be seen as a the particular case of {\sc Sum-Product-Csp}, where the only semiring considered is the Boolean semiring $\B$.

However, the general {\sc Sum-Product-Csp} is not subsumed by the {\sc Semiring-Csp} problem. The main difference is that we in the {\sc Sum-Product-Csp} problem use the semiring to relax the constraints, which can now take arbitrary values in a semiring instead of only $\top$ or $\bot$ in the Boolean semiring.
In comparison, in our {\sc Semiring} approach we can only work with a combination of arbitrary arity ``hard constraints'' (either true or false), and only unary ``relaxed constraints'' (taking values in the semiring).
However, recall that our approach applies to every {\sf NP} problem and not only {\sc Csp}, since it is defined as long as the considered problem has a ``set of solutions'', for instance the set of certificate of the {\sf NP} problem. For example, the {\sc Connected-Dominating-Set} problem does not benefit from the {\sc Sum-Product-Csp} formalism, but the {\sc Semiring-Connected-Dominating-Set} problem is perfectly well defined, and subsumes the counting, list and cost version of {\sc Connected-Dominating-Set}. %

Moreover, even if we restrict the study to {\sc Csp}s, the {\sc Sum-Product-Csp} approach is often too general to benefit from unified algorithms, and we frequently need to distinguish cases depending on the nature of the semiring~\cite{eiter2021complexity}. In contrast, any upper bound on the complexity of {\sc Semiring-Csp} is automatically derived as an upper bound of {\sc Csp}, {\sc \#Csp}, {\sc List-Csp}, and {\sc Cost-Csp}. %

\section{Combining Semirings: the $\Delta$-product}\label{sec:Delta-product}

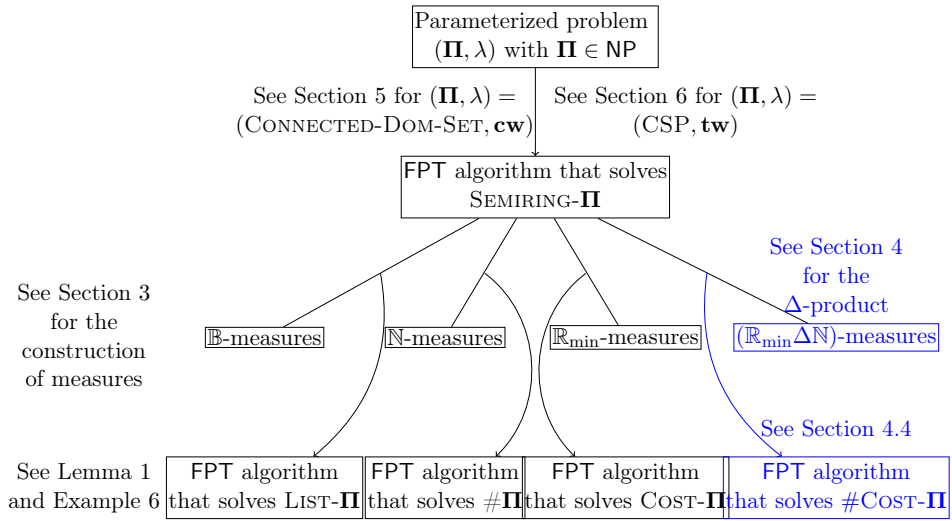
\begin{figure}[!ht]
\begin{center}
\scalebox{0.8}{
\begin{tikzpicture}

\tikzstyle{vertex}=[circle, draw, inner sep=1pt, minimum width=6pt]
\tikzstyle{sq}=[rectangle, draw, inner sep=1pt, minimum width=6pt]

\node[sq] (NPproblem) at (0,5) {$\begin{matrix}\text{Parameterized problem } \\ (\mathbf{\Pi},\lambda)\text{ with }\mathbf{\Pi}\in{\sf NP}\end{matrix}$};

\node[sq] (SemiringNP) at (0,2.5) {$\begin{matrix}{\sf FPT}\text{ algorithm that solves} \\ \textsc{Semiring-}\mathbf{\Pi}\end{matrix}$}
    edge[<-] (NPproblem);

\node () at (-2.5,3.75) {$\begin{matrix} \text{See Section }\ref{sec:dom-set}\text{ for } (\mathbf{\Pi},\lambda)= \\ (\textsc{Connected-Dom-Set},\mathbf{cw}) \end{matrix}$};

\node () at (2.5,3.75) {$\begin{matrix} \text{See Section }\ref{sec:Csp}\text{ for } (\mathbf{\Pi},\lambda)= \\ (\textsc{CSP},\mathbf{tw}) \end{matrix}$};

\node () at (-7.5,0) {$\begin{matrix} \text{See Section }\ref{sec:semirings} \\ \text{for the} \\\text{construction} \\ \text{of measures}\end{matrix}$};

\node[sq] (B) at (-4.5,0) {$\B$-measures}
    edge (SemiringNP);

\node[sq] (N) at (-1.5,0) {$\N$-measures}
    edge (SemiringNP);

\node[sq] (Rmin) at (1.5,0) {$\R_{\min}$-measures}
    edge (SemiringNP);

\node[sq,color=blue] (Delta) at (5,0) {$(\R_{\min}\Delta\N)$-measures}
    edge (SemiringNP);

\node[color=blue] () at (5,1) {$\begin{matrix} \text{See Section }\ref{sec:Delta-product} \\ \text{for the} \\ \Delta\text{-product}\end{matrix}$};

\node (midB) at (-2.58,1.2) {};
\node (midN) at (-0.96,1.2) {};
\node (midRmin) at (0.96,1.2) {};
\node (midDelta) at (2.85,1.2) {};

\node[sq] (Decision) at (-4.5,-2.5) {$\begin{matrix} \textsf{FPT}\text{ algorithm} \\ \text{that solves }\textsc{List-}\mathbf{\Pi}  \end{matrix}$}
    edge[<-,bend right] (midB);

\node[sq] (Counting) at (-1.5,-2.5) {$\begin{matrix} \textsf{FPT}\text{ algorithm} \\ \text{that solves \#}\mathbf{\Pi}  \end{matrix}$}
    edge[<-,bend right=50] (midN);

\node[sq] (Cost) at (1.5,-2.5) {$\begin{matrix} \textsf{FPT}\text{ algorithm} \\ \text{that solves }\textsc{Cost-}\mathbf{\Pi}  \end{matrix}$}
    edge[<-,bend left=50] (midRmin);

\node[sq,color=blue] (CountCost) at (5,-2.5) {$\begin{matrix} \textsf{FPT}\text{ algorithm} \\ \text{that solves }\#\textsc{Cost-}\mathbf{\Pi}  \end{matrix}$}
    edge[<-,bend left,color=blue] (midDelta);

\node () at (-7.5,-2.5) {$\begin{matrix}\text{See Lemma } \ref{lem:expression_computation} \\ \text{and Example }\ref{example:measures} \end{matrix}$};

\node[color=blue] at (5,-1.5) {See Section \ref{sec:count_cost_application}};

\end{tikzpicture}
}
\end{center}
\caption{A general overview of our contributions where the contribution in this section is colored in blue.}
\label{fig:plan_of_the_paper_Delta}
\end{figure}

We have seen that semiring extensions generalize many natural extensions of computational problems such as counting, list and cost. We now explain how to {\em combine} some of these algorithmic extensions together by defining a novel operation over semirings called the {\em $\Delta$-product}.
As a concrete application, we will see that Lemma \ref{lem:expression_computation} enables us to ``count the number of solutions of minimal cost''. Therefore, solving the semiring version of an {\sf NP} problem is sufficient to count its solution of minimal cost. To the best of our knowledge, the algebraic constructions we present are completely novel, and no general semiring based algorithms for counting solutions of minimal cost exist in the literature. This new tool enables to create a semiring that makes the class {\sf \#$\cdot$OptP} \cite{hermann2009complexity} fall into the semiring formalism, similarily as the classes {\sf \#P} \cite{DBLP:journals/tcs/Valiant79} corresponds to the natural semiring $(\N,+,\times,0,1)$ for instance, via the $\N$-measure \rev{$\#(\F) = |\F|$.}

The contribution of this section to the semiring formalism presented in this paper is summarized in Figure \ref{fig:plan_of_the_paper_Delta}.

For the rest of this section, we let $\D=(D,\min,+_D,\infty_D,0_D)$ be a totally-ordered idempotent commutative dioid and $\A=(A,+_A,\times,0_A,1_A)$ be a commutative semiring. \rev{We assume the semiring to be commutative, as the product of the semiring will eventually be mapped to Joins through measues, and the Join is commutative. Note that every exemples of semiring provided in this paper are indeed commutative.}

\subsection{Regularity}

We begin by introducing the notion of multiplicative regularity.
Let $w$ a $\D$-measure over some finite sets $(S,T)$, and $S'\subseteq S$. For $\F\subseteq T^{S'}$, we say that a function $f\in\F$ is of {\em minimal weight} in $\F$ (with respect to $w$) if $w(\{f\})=w(\F)$. This definition is motivated by the fact that $w(\F) = \min\limits_{f\in \F} w(\{f\})$ (since $\F$ is a $\D$-measure). We also denote by ${\argmin}_w(\F)$ the sets of elements of minimal weight of $\F$ (with respect to $w$).

Given a join/union expression $E$ of $\F$, in order to count the number of solutions of minimal weight, we in particular need to handle the case where $E$ is of the form $E_1\Join E_2$ in order to perform a structural induction. Ideally, it would be desirable to have ${\argmin}_w([E]) = {\argmin}_w([E_1])\Join {\argmin}_w([E_2])$. The issue is that this equality is not always true, an instance where it fails is is illustrated in Example~\ref{ex:motivation_regularity}.

\begin{example}\label{ex:motivation_regularity}

Let $S_1$ and $S_2$ be two disjoint subsets of $S$, $f_1\in T^{S_1}$, $f_2\in T^{S_2}$ and $f'_2\in T^{S_2}$, with $f_2\neq f'_2$. Let also $\F_1:=\{f_1\}$ and $\F_2:=\{f_2,f'_2\}$. Finally, let $w$ be a $\Rbar_{min}$-measure (i.e., of the tropical semiring) with $w(\{f_1\})=\infty$, $w(\{f_2\})=7$ and $w(\{f'_2\})=31$.

\begin{itemize}
    
    \item We have ${\argmin}_w(\F_1\Join\F_2) = \{ f_1\Join f_2 , f_1\Join f'_2 \}$, as both functions $f_1\Join f_2$ and $f_1\Join f'_2$ have weight $\infty=\infty+7=\infty+31$ by multiplicativity of $w$ (Property \ref{prop:multiplicativity_sets}). Recall that $+$ is the {\em product} of the tropical semiring.

    \item However, ${\argmin}_w(\F_2) = \{f_2\}$, since $w(f_2)=7 < 31=w(f'_2)$. Thus ${\argmin}_w(\F_1)\Join{\argmin}_w(\F_2) = \{f_1\Join f_2\}$.

\end{itemize}

We see that in this case, ${\argmin}_w(\F_1\Join\F_2) \neq {\argmin}_w(\F_1)\Join{\argmin}_w(\F_2)$.

\end{example}

The issue in Example \ref{ex:motivation_regularity} comes from the property of $\infty$ that $\infty + 7 = \infty + 31$, even though $7<31$. In order to guarantee that ${\argmin}_w([E_1\Join E_2])={\argmin}_w([E_1])\Join{\argmin}_w([E_2])$, we need to separately handle the elements that behaves similarly to $\infty$ in Example \ref{ex:motivation_regularity}. In Definition \ref{def:mult_regularity}  we introduce an algebraic property that guarantees that this undesirable behaviour never occurs.

\begin{definition}[Multiplicative Regularity]\label{def:mult_regularity}

A {\em multiplicatively regular element} of $\A$ is an element $a\in A\setminus\{0\}$ such that 

$$\forall (b,c)\in A^2, a\times b=a\times c \implies b=c.$$

\end{definition}

We denote by $reg(\A)$ the set of multiplicatively regular elements of the semiring $\A$, \rev{and let $\overline{reg}(\A):=A\setminus reg(\A)$  be its complement.}

We have the following link between multiplicatively regular elements and products.

\begin{property}\label{prop:mult regular}

Let $(a,b)\in A^2$. Then $a\times b\in reg(\A) \iff (a,b)\in reg(\A)^2$.

\end{property}

\begin{proof}
 We prove the two cases separately.
\begin{itemize}
    
    \item Assume that $(a,b)\in reg(\A)^2$.
    Let $(c,d)\in A^2$ with $(a\times b)\times c = (a\times b)\times d$.
    Then $a\times(b\times c) = a\times (b\times d)$. Since $a$ is regular, $b\times c= b\times d$, and since $b$ is regular, $c=d$, which proves that $(a\times b)$ is regular.
    
    \item Assume that $a\times b\in reg(\A)$.
        Let $(c,d)\in A^2$ such that $a\times c= a\times d$. Then, $a\times c\times b = a\times d\times b$ {\it  i.e.,} $(a\times b)\times c = (a\times b)\times d$, and by regularity of $a\times b$, $c=d$, which proves that $a$ is regular. Symmetrically, $b$ is also regular, which concludes the proof.\rev{\qedhere}
\end{itemize}
\end{proof}

We now establish that the product $+$ of the dioid $\D$ behaves as expected with respect to the order $\leq_D$ of $\D$. We also obtain stronger properties under a hypothesis of multiplicative regularity.

We remind the reader that we study the multiplicative regularity in the dioid $\D$ where $+$ is the {\em product} and not the sum.

\begin{property}\label{prop:multiplicatively regular order}

Let $\D$ be a dioid. Then, \rev{for all $(a,b,c,d)\in D^4$,
\begin{itemize}
\item  $a\leq_D b$ and $c\leq_D d \implies a+c\leq_D b+d$, and
\item if, in addition, $c$ or $d$ is a multiplicatively regular element of $D$ where $a<_D b$, then $a+c<_D b+d$.
\end{itemize}
}
\end{property}

\begin{proof}
We prove each case in turn.
\begin{itemize}

\item %

Let us prove first that $a+c\leq_D b+c$. Since $a\leq_D b$, there exists $b'\in D$ such that $a=\min(b,b')$. By distributivity of $+$ over $\inf$ (axiom of the semiring $(D,\min,+,\infty,0)$), we have $a+c=\min(b+c,b'+c)$, from which we can conclude $a+c\leq_D b+c$.

\rev{We have proven that $a\le_D b \implies a+c\le_D b+c$. Using this result with $c\le_D d$, we obtain similarly that $c+b\le_D d+b$. By commutativity of $+$, we have proven $a+c\le_D b+c\le _D b+d$, which gives the desired inequality $a+c\le_D b+d$ by the transitivity of $\le_D$.}

\item Assume that $c$ is a multiplicatively regular element of $D$ and $a<_D b$.
Since $b+c\leq_Db+d$ (first item), it is sufficient to prove that $a+c<_D b+c$. By the first item, we already have $a+c\leq_Db+c$.
Assume by contradiction that $a+c = b+c$. Then, since $c$ is multiplicatively regular, $a=b$ which contradicts the hypothesis $a<_Db$.
The same reasoning holds if $d$ is a multiplicatively regular element of $D$.\qedhere
\end{itemize}
\end{proof}

The result in the second item of Property \ref{prop:multiplicatively regular order} guarantees that the unwanted behaviour described in Example \ref{ex:motivation_regularity} does not occur for regular elements.

\subsection{$\Delta$-product of semirings}

In this section, we construct a new type of semiring, obtained by combining together a dioid $\D$ and a semiring $\A$, through an operation that we call a {\em $\Delta$-product}. The algorithmic applications of this new semiring will typically be to ``count the solutions of minimal weight'' of an {\sf NP} problem.

In order to improve the reading experience the proofs of the theorems of this section have been moved to~\ref{app:ProofsDeltaProduct}.

We first define which elements of $D\times A$ will belong to the new semiring. Any regular element $d$ of $D$ can be associated with any $a\in A$, but the non-regular elements can only be associated to $0_A$. The reason is that when the minimal weight of a set $\F=\F_1\Join\F_2$ is not regular, we can not guarantee that ${\argmin}_w(\F)={\argmin}_w(\F_1)\Join{\argmin}_w(\F_2)$. We thus give up on the operations in $A$ and instead output $0_A$ by default.

\begin{definition}
For $d\in D$ and $a\in A$, set  $$d\Delta a = \left\{\begin{array}{cc}
     (d,a) & \text{ if }d\in reg(\D) \\
     (d,0_A) & \text{ if }d\notin reg(\D) 
\end{array} \right\}.$$
\end{definition}

We are now ready to build a semiring that will be able to handle the combination of the $\D$-measure $w$ and the $\A$-measure $\mu$ that is intended to be applied to  ${\argmin}_w(\F)$.

\begin{reve}
\begin{definition}[Delta product]
The {\em delta product} $\mathcal{D}\Delta\mathcal{A}$ of a dioid $\mathcal{D} = (D, +_D, \min, \infty_D, 0_D)$ and a commutative semiring $\mathcal{A} = (A, +_A, \times_A, 0_A, 1_A)$ is defined as \[\mathcal{D}\Delta\mathcal{A}=(D\Delta A,\oplus,\otimes,(\infty_D,0_A),(0_D,1_A))\]
where

\begin{itemize}
 \item 
 $  D\Delta A := \{d\Delta a\mid (d,a)\in D\times A\}
   =
   \{ (d,a)\in D\times A \mid d\in reg(\D) \text{ or } a=0_A \};
 $  

\item $\oplus:  (D\Delta A)^2  \to  D\Delta A $ defined by 
$$(d_1,a_1)\oplus(d_2,a_2) =\big (\min(d_1,d_2),\left\{ \begin{array}{cc}
     a_1 & \text{ if } d_1 <_D d_2  \\
     a_2 & \text{ if } d_2 <_D d_1 \\ 
     a_1+_A a_2 & \text{ if } d_1=d_2
\end{array} \right\}\big);$$

\item $\otimes: (D\Delta A)^2 \to  D\Delta A$  given by $$(d_1,a_1)\oplus(d_2,a_2) = ((d_1+_Dd_2),(a_1\times_Aa_2)).$$

\end{itemize}
\end{definition}

We now prove that the resulting structure is indeed a commutative semiring.

\begin{restatable}{theorem}{DeltaProductSemiring}\label{thm:DeltaProductSemiring}
Let $\mathcal{D}$ be a dioid and $\mathcal{A}$ be a commutative semiring. Then the delta product $\mathcal{D}\Delta\mathcal{A}$ is a commutative semiring.
\end{restatable}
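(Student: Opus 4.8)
The plan is to verify each semiring axiom for $\mathcal{D}\Delta\mathcal{A}=(D\Delta A,\oplus,\otimes,(\infty_D,0_A),(0_D,1_A))$ directly, treating the two coordinates as carefully as the ``case split on the order of the first coordinate'' in the definition of $\oplus$ requires. The first thing I would do is confirm that $\oplus$ and $\otimes$ are well-defined as operations on $D\Delta A$, i.e.\ that they actually land back in the set $\{(d,a)\mid d\in reg(\mathcal{D})\text{ or }a=0_A\}$. For $\otimes$ this is exactly Property~\ref{prop:mult regular}: since $+_D$ is the \emph{product} of the dioid, $d_1+_D d_2\in reg(\mathcal{D})$ iff both $d_1,d_2\in reg(\mathcal{D})$, so whenever either first coordinate is non-regular the corresponding $a_i$ is $0_A$, hence $a_1\times_A a_2=0_A$ (absorption), and the output is legal. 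For $\oplus$ I would check each of the three order-cases: if $d_1<_D d_2$ the output is $(\min(d_1,d_2),a_1)=(d_1,a_1)$ which is legal because the input was; the $d_2<_D d_1$ case is symmetric; and if $d_1=d_2$ then either both are regular (output legal) or both equal some non-regular $d$ forcing $a_1=a_2=0_A$, so $a_1+_A a_2=0_A$ and the output $(d,0_A)$ is again legal.

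Next I would establish that $(D\Delta A,\oplus,(\infty_D,0_A))$ is a commutative monoid. Commutativity of $\oplus$ follows from commutativity of $\min$ together with commutativity of $+_A$ (used in the tie case) and the symmetry of the strict-inequality cases. For the identity, $(\infty_D,0_A)\oplus(d,a)$ uses $\min(\infty_D,d)=d$ and, since $\infty_D$ is the $+_D$-neutral but here the \emph{additive} unit of $\mathcal{D}$, we have $d\leq_D\infty_D$; the case $d<_D\infty_D$ selects $a$, and the case $d=\infty_D$ (so $a=0_A$ by legality, as $\infty_D$ is non-regular) gives $0_A+_A a=a$, so $(\infty_D,0_A)$ is neutral. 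The main work is associativity of $\oplus$: I would expand $\big((d_1,a_1)\oplus(d_2,a_2)\big)\oplus(d_3,a_3)$ and the other grouping, note that the first coordinate is $\min(d_1,d_2,d_3)$ on both sides by associativity of $\min$, and then argue the second coordinate by totality of $\leq_D$, splitting on the relative order of $d_1,d_2,d_3$. The key point is that the second coordinate selects the $a_i$ whose $d_i$ is strictly minimal, and sums (via $+_A$) exactly those $a_i$ attaining the common minimum; associativity of $+_A$ closes the ties-among-the-minima subcase. This bookkeeping is the most tedious part but is purely a finite case analysis.

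Then $(D\Delta A,\otimes,(0_D,1_A))$ is a commutative monoid: associativity and commutativity follow coordinatewise from those of $+_D$ (associative, commutative) and $\times_A$ (associative, commutative by hypothesis on $\mathcal{A}$), and $(0_D,1_A)$ is the unit since $0_D$ is the $+_D$-neutral and $1_A$ the $\times_A$-unit; legality of this unit holds because $0_D\in reg(\mathcal{D})$ (it is the identity of the regular group-like part, and one checks $0_D\in reg(\mathcal{D})$ directly from the definition). I would also verify $(\infty_D,0_A)$ is absorbing for $\otimes$: the first coordinate gives $\infty_D+_D d=\infty_D$ (since $\infty_D$ absorbs $+_D$), and the second gives $0_A\times_A a=0_A$, landing on $(\infty_D,0_A)$.

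The genuinely delicate step is distributivity of $\otimes$ over $\oplus$, because $\otimes$ acts independently per coordinate while $\oplus$ performs an order-dependent selection in the second coordinate; I expect this to be the main obstacle. To show $(d,a)\otimes\big((d_1,a_1)\oplus(d_2,a_2)\big)=\big((d,a)\otimes(d_1,a_1)\big)\oplus\big((d,a)\otimes(d_2,a_2)\big)$, the first coordinates match by distributivity of $+_D$ over $\min$ in the dioid, giving $d+_D\min(d_1,d_2)=\min(d+_Dd_1,d+_Dd_2)$. For the second coordinate I would case-split on the order of $d_1,d_2$ and crucially invoke the second item of Property~\ref{prop:multiplicatively regular order}: when $d$ is regular, $d_1<_D d_2$ implies $d+_D d_1<_D d+_D d_2$, so the strict selection on the left ($a\times_A a_1$) matches the strict selection on the right; when $d$ is non-regular we have $a=0_A$, so both sides collapse to the $0_A$ second coordinate and agree trivially. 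In the tie case $d_1=d_2$ the left second coordinate is $a\times_A(a_1+_A a_2)$ while the right is $(a\times_A a_1)+_A(a\times_A a_2)$ after observing $d+_D d_1=d+_D d_2$, and these agree by distributivity of $\times_A$ over $+_A$ in $\mathcal{A}$. Having verified all axioms, together with $0\neq 1$ (clear since $\infty_D\neq 0_D$ in a dioid), I would conclude that $\mathcal{D}\Delta\mathcal{A}$ is a commutative semiring.
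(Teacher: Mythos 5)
Your proof is correct and takes essentially the same route as the paper's: closure of $D\Delta A$ under $\oplus$ and $\otimes$ via Property~\ref{prop:mult regular}, the monoid and neutrality axioms by case analysis on the order of the first coordinates, and distributivity by splitting on whether the multiplier's first coordinate is multiplicatively regular, invoking Property~\ref{prop:multiplicatively regular order} exactly as the paper does. Your additional explicit checks (that $0_D\in reg(\mathcal{D})$ so the unit is a legal element, and that $(\infty_D,0_A)\neq(0_D,1_A)$) are points the paper leaves implicit, but they do not change the argument.
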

\end{reve}

Note that, by Property \ref{prop:minimum}, we can reformulate the definition of $\otimes: (D\Delta A)^2 \to  D\Delta A$ as

$$ (d_1,a_1)\oplus(d_2,a_2)=\left\{ \begin{array}{cc}
     (d_1,a_1) & \text{ if } d_1 <_D d_2  \\
     (d_2,a_2) & \text{ if } d_2 <_D d_1 \\
     (d_1,a_1+_A a_2) & \text{ if } d_1=d_2
\end{array} \right\}. $$
 We also observe that $\otimes$ is the Cartesian product of the two products $+_D$ and $\times_A$ of the semirings $D$ and $A$.
Intuitively, $\oplus$ is defined in order to mimic how we would determine the minimum, the number of minimal elements of a disjoint union of two sets $\F_1$ and $\F_2$, given their respective minima $d_1,d_2$ and the number of minimum elements $a_1,a_2$.
In order to study the associativity and commutativity of this operation, we let $\mathcal{D}_1=(D_1,\min_1,+_1,\infty_1,0_1)$ and $\mathcal{D}_2=(D_2,\min_2,+_2,\infty_2,0_2)$ be two commutative totally-ordered idempotent dioids.

It is now natural to study the algebraic properties of this newly defined $\Delta$-product. We prove in the following that the $\Delta$-product is associative \rev{and that the resulting order $\leq_{\mathcal{D}_1 \Delta \mathcal{D}_2}$ is the {\em lexicographical order} $\leq_{\mathcal{D}_1 \Delta \mathcal{D}_2}$ defined as $(d_1, d_2) \leq_{\mathcal{D}_1 \Delta \mathcal{D}_2} (d_3, d_4)$ if and only if (1) $d_1 <_{\mathcal{D}_1} d_3$, or (2) $d_1 \leq_{\mathcal{D}_1} d_3$ and $d_2 \leq_{\mathcal{D}_2} d_4$.}

\begin{restatable}{theorem}{LexOrder}\label{thm:Delta lex order}
\rev{Let $\mathcal{D}_1$ and $\mathcal{D}_2$ be two commutative, totally ordered, idempotent dioids.}
The semiring $\mathcal{D}_1\Delta\mathcal{D}_2$ is also a commutative totally ordered idempotent dioid, and the associated order $\leq_{\mathcal{D}_1\Delta\mathcal{D}_2}$ is the lexicographical order $\leq_{\mathcal{D}_1 lex \mathcal{D}_2}$.

\end{restatable}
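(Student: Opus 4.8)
The plan is to bootstrap from Theorem~\ref{thm:DeltaProductSemiring}. A commutative, totally ordered, idempotent dioid $\mathcal{D}_2=(D_2,\min_2,+_2,\infty_2,0_2)$ is in particular a commutative semiring, with additive operation $\min_2$, multiplicative operation $+_2$, zero $\infty_2$ and unit $0_2$. Hence Theorem~\ref{thm:DeltaProductSemiring} (applied with $\mathcal{A}=\mathcal{D}_2$) immediately gives that $\mathcal{D}_1\Delta\mathcal{D}_2$ is a commutative semiring. What remains is to prove that its additive operation $\oplus$ is idempotent, that the resulting canonical order is a partial order (so that it is indeed a dioid), that this order is exactly $\leq_{\mathcal{D}_1 lex\mathcal{D}_2}$, and that the latter is total.

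First I would check idempotency of $\oplus$. For $(d,a)\in D_1\Delta D_2$ the equal-first-coordinate branch applies, so $(d,a)\oplus(d,a)=(\min_1(d,d),a+_A a)$, where $+_A=\min_2$ is the addition of $\mathcal{D}_2$ seen as a semiring; idempotency of $\min_1$ and $\min_2$ then yields $(\min_1(d,d),\min_2(a,a))=(d,a)$. Since $\oplus$ is the commutative and associative addition of the semiring $\mathcal{D}_1\Delta\mathcal{D}_2$, its idempotency makes $(D_1\Delta D_2,\oplus)$ a semilattice, and the canonical relation $x\leq y\iff \exists z : y\oplus z=x$—which under idempotency simplifies to $x\oplus y=x$—is then reflexive, transitive, and (by commutativity) antisymmetric. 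This is precisely the dioid order induced by $\oplus$, so $\mathcal{D}_1\Delta\mathcal{D}_2$ is an idempotent dioid.

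Next I would identify this order with the lexicographic order, working from the reformulation of $\oplus$ obtained via Property~\ref{prop:minimum} (the display following the definition of the $\Delta$-product). For $\leq_{\mathcal{D}_1 lex\mathcal{D}_2}\Rightarrow\leq_{\mathcal{D}_1\Delta\mathcal{D}_2}$, given $(d_1,a_1)\leq_{\mathcal{D}_1 lex\mathcal{D}_2}(d_2,a_2)$ I would use the self-witness $(d_1,a_1)$ and split into the cases $d_1<_1 d_2$, and $d_1=d_2$ with $a_1\leq_2 a_2$; in both, the reformulation gives $(d_2,a_2)\oplus(d_1,a_1)=(d_1,a_1)$, whence $(d_1,a_1)\leq_{\mathcal{D}_1\Delta\mathcal{D}_2}(d_2,a_2)$. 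Conversely, if $(d_2,a_2)\oplus(c,e)=(d_1,a_1)$ for some $(c,e)\in D_1\Delta D_2$, then reading off the first coordinate gives $d_1=\min_1(d_2,c)\leq_1 d_2$; if $d_1<_1 d_2$ clause~(1) of the lexicographic order holds, while if $d_1=d_2$ the three branches of $\oplus$ force the second coordinate of the result to be either $a_2$ or $\min_2(a_2,e)$, both $\leq_2 a_2$, so clause~(2) holds. Totality of $\leq_{\mathcal{D}_1 lex\mathcal{D}_2}$ is then immediate: by totality of $\leq_1$ either the first coordinates are strictly comparable, settling the comparison, or they are equal, in which case totality of $\leq_2$ compares the second coordinates.

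The step I expect to need the most care is respecting the restricted carrier $D_1\Delta D_2=\{(d,a)\mid d\in reg(\mathcal{D}_1)\text{ or }a=\infty_2\}$, since any witness used to establish $\leq_{\mathcal{D}_1\Delta\mathcal{D}_2}$ must itself belong to $D_1\Delta D_2$. This is exactly why I would take the self-witness $(d_1,a_1)$ in the forward direction: being one of the two given elements, it is admissible by hypothesis, whereas an ad hoc pair built from $d_1$ together with some element of $D_2$ might violate the regularity constraint. Everything else reduces to the routine case analyses sketched above together with Property~\ref{prop:minimum}.
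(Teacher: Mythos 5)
Your proof is correct and follows essentially the same route as the paper's: it obtains the semiring structure from Theorem~\ref{thm:DeltaProductSemiring} (with $\mathcal{A}=\mathcal{D}_2$), checks idempotence of $\oplus$ coordinate-wise, and identifies $\leq_{\mathcal{D}_1\Delta\mathcal{D}_2}$ with the lexicographical order using a self-witness in one direction and a branch analysis of $\oplus$ in the other, concluding totality from totality of $\leq_{\mathcal{D}_1}$ and $\leq_{\mathcal{D}_2}$. Your two added precautions---explicitly verifying that the canonical relation is a partial order via the semilattice argument, and noting that witnesses must lie in the restricted carrier $D_1\Delta D_2$ (which the self-witness guarantees)---are points the paper's proof passes over silently, and they only strengthen the argument.
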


\begin{example}

\rev{For instance, the set $\Rbar_{\min} \Delta \Rbar_{\min}$ is defined as $\R \times (\R \cup \{\infty\}) \cup \{(\infty, \infty)\}$. Minimization of weights in this dioid is performed lexicographically: we first minimize the first coordinate, and in case of a tie, we then minimize the second coordinate. If the only possible value for the first coordinate is $\infty$, then the second coordinate becomes irrelevant.}

\end{example}

\rev{The $\Delta$-product is in general non-commutative, and the order of the dioid within the $\Delta$-product encodes the coordiante whose minimization is prioritized.}

\begin{example}
\rev{
The dioid $\mathbf{2} \Delta \overline{\mathbb{R}}_{\min}$ is defined as $\left(\{\top\} \times \overline{\mathbb{R}}_{\min}\right) \cup \{(\bot, \infty)\}$, where $(\bot, \infty)$ is the greatest element. When minimizing weights in this dioid, the priority is to select elements whose first coordinate is in $\{\top\}$; only among those is the second coordinate minimized.}

\rev{
Conversely, in the dioid $\overline{\mathbb{R}}_{\min} \Delta \mathbf{2} = \left(\mathbb{R} \times \{\bot, \top\}\right) \cup \{(\infty, \bot)\}$, the minimization process first seeks the smallest value $m$ of the first coordinate. Among all such elements with first coordinate $m$, preference is then given to those whose second coordinate is $\top$ rather than $\bot$.}

\end{example}

Studying the associativity of the $\Delta$-product requires to compare $(\D_1\Delta\D_2)\Delta\A$ with $\D_1\Delta(\D_2\Delta\A)$. Even though the definition of $\D_1\Delta(\D_2\Delta\A)$ does not raise any issue, considering $(\D_1\Delta\D_2)\Delta\A$ requires that we justify that $(\D_1\Delta\D_2)$ is a commutative totally ordered idempotent dioid, which is ensured by Theorem \ref{thm:Delta lex order}.

\begin{restatable}{theorem}{Associativity}\label{thm:delta_product_semiring_associative}
\rev{Let $\mathcal{D}_1$ and $\mathcal{D}_2$ be two commutative, totally ordered, idempotent dioids.}
The $\Delta$-product of semirings is associative. More precisely, up to identifying $(D_1\times D_2)\times A$ with $D_1\times (D_2\times A)$, $(\mathcal{D}_1\Delta \mathcal{D}_2)\Delta \mathcal{A} = \mathcal{D}_1\Delta( \mathcal{D}_2\Delta \mathcal{A})$.

\end{restatable}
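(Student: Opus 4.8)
The plan is to show that the two semirings $(\mathcal{D}_1\Delta\mathcal{D}_2)\Delta\mathcal{A}$ and $\mathcal{D}_1\Delta(\mathcal{D}_2\Delta\mathcal{A})$ are literally the same structure once the sets $(D_1\times D_2)\times A$ and $D_1\times(D_2\times A)$ are identified via $((d_1,d_2),a)\mapsto(d_1,(d_2,a))$. First I would record that both products are well-formed: the left one because Theorem~\ref{thm:Delta lex order} guarantees that $\mathcal{C}:=\mathcal{D}_1\Delta\mathcal{D}_2$ is again a commutative totally ordered idempotent dioid (so it may legitimately be the left factor of a $\Delta$-product), and the right one because Theorem~\ref{thm:DeltaProductSemiring} guarantees that $\mathcal{B}:=\mathcal{D}_2\Delta\mathcal{A}$ is a commutative semiring. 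It then remains to check that the two structures agree on (i) their carriers, (ii) the constants, (iii) the product $\otimes$, and (iv) the sum $\oplus$.

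Parts (ii) and (iii) are immediate: both zeroes become $(\infty_1,(\infty_2,0_A))$, both units become $(0_1,(0_2,1_A))$, and in both structures $\otimes$ is the coordinatewise product $(d_1+_1 d_1',\,d_2+_2 d_2',\,a\times_A a')$, whose associativity is inherited from the three factors. For the carrier (i) I would first compute $reg(\mathcal{C})$: since the product of $\mathcal{C}$ is coordinatewise and $\infty_2$ is absorbing for $+_2$ while $0_1$ is regular, testing cancellation against the carrier elements $(x,\infty_2)$ and $(0_1,x)$ yields $reg(\mathcal{C})=\{(d_1,d_2)\in D_1\Delta D_2 : d_1\in reg(\mathcal{D}_1)\text{ and }d_2\in reg(\mathcal{D}_2)\}$. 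Writing $P_1$ for ``$d_1\in reg(\mathcal{D}_1)$'', $P_2$ for ``$d_2\in reg(\mathcal{D}_2)$'', $I_2$ for ``$d_2=\infty_2$'', and $Q$ for ``$a=0_A$'', and using that $I_2\Rightarrow\neg P_2$, the left carrier is cut out by $(P_1\lor I_2)\wedge((P_1\wedge P_2)\lor Q)$, while the right carrier is cut out by $(P_2\lor Q)\wedge(P_1\lor(I_2\wedge Q))$; a short Boolean expansion shows both simplify to $(P_1\wedge P_2)\lor(P_1\wedge Q)\lor(I_2\wedge Q)$, so the carriers coincide under the identification.

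The main work, part (iv), is to verify that the two definitions of $\oplus$ agree, and here the decisive point—the place where Theorem~\ref{thm:Delta lex order} is essential—is that on the left the sum of $((d_1,d_2),a)$ and $((d_1',d_2'),a')$ is governed by comparing $(d_1,d_2)$ and $(d_1',d_2')$ in the order $\leq_{\mathcal{C}}$, which that theorem identifies with the lexicographic order. I would therefore do a case analysis on the total orders $\leq_1$ and $\leq_2$: if $d_1<_1 d_1'$ then lexicographically $(d_1,d_2)<_{\mathcal{C}}(d_1',d_2')$ and on the right the same branch $d_1<_1 d_1'$ fires, both returning $(d_1,(d_2,a))$; the case $d_1'<_1 d_1$ is symmetric; and if $d_1=d_1'$ then the left side reduces to comparing $d_2$ and $d_2'$ lexicographically, while the right side reduces to the inner sum $b\oplus_{\mathcal{B}}b'$ of $b=(d_2,a)$ and $b'=(d_2',a')$, which performs exactly the same three-way split on $d_2$ versus $d_2'$, returning $(d_1,(d_2,a))$, $(d_1,(d_2',a'))$, or $(d_1,(d_2,a+_A a'))$ accordingly. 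Since $\leq_1$ and $\leq_2$ are total these cases are exhaustive and the operations coincide in each. I expect the genuinely delicate step to be the carrier computation rather than the $\oplus$ case split, since regularity in $\mathcal{C}$ must be tested only against elements of the restricted set $D_1\Delta D_2$, not all of $D_1\times D_2$; it is the careful choice of the witnesses $(x,\infty_2)$ and $(0_1,x)$ that makes the cancellation arguments identifying $reg(\mathcal{C})$ go through.
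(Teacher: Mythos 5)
Your proof is correct and follows essentially the same route as the paper's: both decompose the verification into carrier, constants, product, and sum; both rest on Theorem~\ref{thm:Delta lex order} (so that $\mathcal{D}_1\Delta\mathcal{D}_2$ is a legitimate left factor and so that the left-hand $\oplus$ is governed by the lexicographic order) and on the characterization $reg(\mathcal{D}_1\Delta\mathcal{D}_2)=reg(\mathcal{D}_1)\times reg(\mathcal{D}_2)$ (the paper's Lemma~\ref{lemma:tuple of regular}); and both finish with the same three-way split on $d_1$ versus $d_1'$ and then $d_2$ versus $d_2'$. The only structural difference is cosmetic: for the carrier you expand the two membership predicates Boolean-algebraically, while the paper proves the pointwise identity $(d_1\Delta d_2)\Delta a = d_1\Delta(d_2\Delta a)$ (its Lemma~\ref{lem:successive_delta}) and takes images; the two are interchangeable. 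One detail of yours is actually sharper than the paper's: in establishing the regularity characterization you cancel against the witnesses $(x,\infty_2)$ and $(0_1,x)$, which always lie in $D_1\Delta D_2$ (the first because $\infty_2$ is the additive zero of $\mathcal{D}_2$, the second because the multiplicative unit $0_1$ is regular). The paper's own proof of Lemma~\ref{lemma:tuple of regular} instead cancels against pairs of the form $(b_1,0_2)$ with $b_1\in D_1$ arbitrary, and such a pair belongs to the carrier only when $b_1\in reg(\mathcal{D}_1)$ --- a condition not available at that point --- so your choice of witnesses quietly repairs a small gap in the paper's auxiliary argument.
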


This is essentially due to the fact that the lexicographic order is associative.
Note that the $\Delta$-product of semirings is not commutative, even up to isomorphism, since the lexicographic order depends on the ordering of its coordinates.

\begin{example}
A surprising application of $\Delta$-product is to build the dioids of the Kleene algebras of {\em multi-valued logic} \cite{berman2001stipulations}.
The trilean dioid $( \mathbf{3} , \lor , \land , \bot, \top)$ (with $\mathbf{3}=\{\bot,\top,?\}$) defined by the table:

\begin{multicols}{2}

\begin{tabular}{c|c|c|c|}
    $\vee$ & $\top$ & $?$ & $\bot$  \\
    \hline
     $\top$ & $\top$ & $\top$ & $\top$ \\
     $?$ & $\top$ & $?$ & $?$ \\
     $\bot$ & $\top$ & $?$ & $\bot$
\end{tabular}

\columnbreak

\begin{tabular}{c|c|c|c|}
    $\wedge$ & $\top$ & $?$ & $\bot$  \\
    \hline
     $\top$ & $\top$ & $?$ & $\bot$ \\
     $?$ & $?$ & $?$ & $\bot$ \\
     $\bot$ & $\bot$ & $\bot$ & $\bot$
\end{tabular}

\end{multicols}

is isomorphic to the dioid $\B\Delta\B$, through the isomorphism $\Phi$ defined by $\Phi(\bot,\bot)=\bot, \Phi(\top,\top)=\top, \Phi(\top,\bot)=?$. Recall that $(\bot,\top)\notin \2\Delta\2$, since $\bot$ is not regular.

\rev{
More generally, for all $n\ge 2$ the dioid of the Kleene-algebra of the $n$-valued logic $([n],\max,\min,1,n)$ is isomorphic to \rev{$\B^n \Delta := \underbrace{\B\Delta\B\Delta\dots\Delta\B}_{n-1 \text{ occurrences of }\B}$}, through the isomorphism
$\Phi: \2^n\Delta\to  [n]$ defined by $\Phi\underbrace{(\top,\dots,\top,\bot,\dots,\bot)}_{i \text{ occurrences of }\top} = i+1. $}
\end{example}

However, the algorithmic applications of the dioids of $n$-valued logic are limited by the fact that their only multiplicatively regular element is $(\top,\dots,\top)$. Therefore studying the measures of the dioid $[n]\Delta\A$ with $\A$ will not be especially relevant, since $[n]\Delta A = (\{n\}\times A) \uplus \{ (i,0_A) \mid i\in [n-1] \}$.

\subsection{$\Delta$-product of measures}

The main purpose of $\Delta$-product of semirings is their ability to shelter $\Delta$-product of measures. More precisely, if $w$ is a $\D$-measure (seen as a weight function) and $\mu$ is an $\A$-measure (for instance, the $\N$-measure of cardinality $\#:\F\mapsto |\F|$), $(w\Delta\mu)(\F)$ is the image by $\mu$ of the set of minimal functions (with respect to $w$) of $\F$.

\begin{restatable}{theorem}{DeltaProductMeasure}\label{thm:Delta-composition of measures}

The function $w\Delta\mu$ that associates every $\F\subseteq T^{S'}$ with $S'\subseteq S$:

\begin{eqnarray*}
(w\Delta\mu)(\F)&:=& \left\{ \begin{array}{cc} ( w(\F) , \mu({\argmin}_w(\F))) & \text{ if }w(\F)\in reg(D) \\ (w(\F),0_A) & \text{ if }w(\F)\notin reg(D)  \end{array}\right. \\
&=&
(w(\F))\Delta(\mu({\argmin}_w(\F)))
\end{eqnarray*}
with ${\argmin}_w(\F):= \{f\in\F, w(\{f\}) = w(\F) \}$ is a $(\mathcal{D}\Delta\mathcal{A})$-measure.

\end{restatable}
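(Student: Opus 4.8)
The plan is to verify directly that $w\Delta\mu$ satisfies the four defining axioms of a $(\D\Delta\A)$-measure, recalling that the zero of $\D\Delta\A$ is $(\infty_D,0_A)$ and its unit is $(0_D,1_A)$. For the \textbf{zero axiom}, I would note that $w(\emptyset)=\infty_D$ and $\argmin_w(\emptyset)=\emptyset$, so $\mu(\argmin_w(\emptyset))=0_A$; since $\infty_D=0_{\D}$ is absorbing for $+_D$ it is not multiplicatively regular, so the formula lands in its second branch and returns $(\infty_D,0_A)$, as required. For the \textbf{unit axiom}, $w(T^{\emptyset})=0_D$, which is the multiplicative identity of $\D$ and hence regular; moreover $T^{\emptyset}$ is a singleton equal to its own $\argmin$, so $\mu(\argmin_w(T^{\emptyset}))=\mu(T^{\emptyset})=1_A$, giving $(0_D,1_A)$.

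For \textbf{elementary multiplicativity}, given $f_1\in T^{S_1}$, $f_2\in T^{S_2}$ with $S_1\cap S_2=\emptyset$, the set $\{f_1\Join f_2\}$ is a singleton and therefore its own $\argmin$. Writing $d_i=w(\{f_i\})$ and $a_i=\mu(\{f_i\})$, elementary multiplicativity of $w$ and $\mu$ give $w(\{f_1\Join f_2\})=d_1+_D d_2$ and $\mu(\{f_1\Join f_2\})=a_1\times_A a_2$, so the left-hand side equals $(d_1+_D d_2)\Delta(a_1\times_A a_2)$. The right-hand side is $(d_1\Delta a_1)\otimes(d_2\Delta a_2)=(d_1+_D d_2,\ \alpha_1\times_A\alpha_2)$, where $\alpha_i$ is $a_i$ if $d_i\in reg(\D)$ and $0_A$ otherwise. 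Property~\ref{prop:mult regular} tells us $d_1+_D d_2\in reg(\D)$ iff both $d_1,d_2\in reg(\D)$, which exactly reconciles the two expressions: if both are regular the second coordinates agree, and otherwise $d_1+_D d_2$ is non-regular while some $\alpha_i=0_A$ forces $\alpha_1\times_A\alpha_2=0_A$ by absorption, so both sides equal $(d_1+_D d_2,0_A)$.

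The real work is \textbf{additivity}, and I expect it to be the main obstacle. Here I would first record the identity $w(\mathcal{G})=\min_{f\in\mathcal{G}}w(\{f\})$ (a consequence of additivity of $w$ together with the fact that $\min$ is the sum of $\D$), so that $\argmin_w(\mathcal{G})$ is exactly the set of minimum-weight functions. Given disjoint $\F_1,\F_2\subseteq T^{S'}$ with $d_i=w(\F_i)$, additivity of $w$ yields $w(\F_1\uplus\F_2)=\min(d_1,d_2)$, and since $\D$ is totally ordered I would split into the three cases matching the definition of $\oplus$. If $d_1<_D d_2$, then every $f\in\F_2$ satisfies $w(\{f\})\geq_D d_2>_D d_1$, so no element of $\F_2$ is minimal and $\argmin_w(\F_1\uplus\F_2)=\argmin_w(\F_1)$; the formula then returns exactly $(w\Delta\mu)(\F_1)$, matching $\oplus$, and the case $d_2<_D d_1$ is symmetric. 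In the tie case $d_1=d_2=:d$, the minimum-weight functions of the union are precisely those of $\F_1$ together with those of $\F_2$, i.e.\ $\argmin_w(\F_1\uplus\F_2)=\argmin_w(\F_1)\uplus\argmin_w(\F_2)$ (disjoint because $\F_1,\F_2$ are), so additivity of $\mu$ gives $\mu(\argmin_w(\F_1\uplus\F_2))=\mu(\argmin_w(\F_1))+_A\mu(\argmin_w(\F_2))$; tracking the regularity of $d$ through the $\Delta$ bracket then shows both sides equal the third branch of $\oplus$, whether or not $d$ is regular.

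Finally, I would remark that once the four axioms are established, the full multiplicativity of $w\Delta\mu$ over arbitrary sets is automatic by Property~\ref{prop:multiplicativity_sets} applied in $\D\Delta\A$. This is precisely where the design choice of collapsing non-regular weights to $0_A$ pays off: it guarantees consistency with $\argmin_w(\F_1\Join\F_2)=\argmin_w(\F_1)\Join\argmin_w(\F_2)$ on the regular weights (via Property~\ref{prop:multiplicatively regular order}, which preserves strict inequalities under $+_D$), while harmlessly discarding the $\A$-count on the non-regular ones, exactly the pathology exhibited in Example~\ref{ex:motivation_regularity}.
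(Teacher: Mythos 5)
Your proposal is correct and follows essentially the same route as the paper's own proof: an axiom-by-axiom verification in which the zero and unit axioms are immediate, elementary multiplicativity is handled by splitting on whether both weights are regular and invoking Property~\ref{prop:mult regular}, and additivity is handled by the three-way case split $d_1<_D d_2$, $d_2<_D d_1$, $d_1=d_2$ matching the definition of $\oplus$. The only differences are cosmetic (you make the regularity bookkeeping in the additivity tie case explicit, where the paper lets the $\Delta$ bracket absorb it), and your closing remark about full multiplicativity via Property~\ref{prop:multiplicativity_sets} is a correct but unneeded addendum.
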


The main interest of the $\Delta$-compositions of semiring and measures is then that it makes new extensions of {\sf NP} computational problems fall under the various semiring formalisms. This is true for the {\sc Semiring}-$\mathbf{\Pi}$ problems (for $\mathbf{\Pi}\in$ {\sf NP}) introduced in Section~\ref{sec:semirings} as well as for the related problems in the {\sc Sum-Product} family (introduced in Section~\ref{sec:sum_product}).

\subsection{Computational extensions}\label{sec:count_cost_application}

Let $\mathbf{\Pi}$ be a problem in {\sf NP}, and let $x$ be an instance of $\mathbf{\Pi}$. As usual we assume, without loss of generality, that the set of solutions of $x$ are functions from a set $S$ to a set $T$ (where $S$ and $T$ have polynomial size with respect to $x$). \rev{Also, recall that since $\mathbf{\Pi}$ is an NP-problem a reasonable notion of a solution always exists (a certificate) but we stress that we actually do not need any particular assumptions, and that one for specific problems is always free to choose a suitable problem specific notion.}

By Lemma \ref{lem:expression_computation}, for any computable function $g$, if {\sc Semiring-}$\mathbf{\Pi}$ is solvable in time $O(g(n))$, then so is the {\sc \#Cost-}$\mathbf{\Pi}$ problem defined as:  

\hfill

$\underline{\#\text{\sc Cost-}{\mathbf{\Pi}:}}$

\textbf{Input:} An instance $x$ of $\mathbf{\Pi}$, and a cost matrix $C\in (\Rbar)^{S\times T}$.

\textbf{Output:} How many solutions $f$ of $x$ of minimal cost $\sum\limits_{s\in S} C[s,f(s)]$ are there, and what is this minimal cost?

\hfill

Solving $\#\text{\sc Cost-}{\Pi}$ is indeed equivalent to computing the image of the set of solutions $SOL$ of the instance by the $(\Rbar_{min}\Delta\N)$-measure $(C\Delta \#)$ in the case where there exists at least one solution of cost $\neq\infty$. If all solutions have infinite cost, $(C\Delta\#)(SOL)$ will output $(\infty,0)$, and then all solutions have minimal cost: we can count them by computing $\#(SOL)$.

Note that in practice, to avoid having to compute both $(C\Delta\#)(SOL)$ and $\#(SOL)$ by using the algorithm described in Lemma \ref{lem:expression_computation} twice, it is possible to use the cartesian product 
$$(C\Delta\#)\times\# : \F \to (C\Delta\#)(\F)\times\#(\F) $$ 
of the measures $(C\Delta\#)$ and $\#$.  The measure $(C\Delta\#)\times\#$ is indeed a measure of the cartesian product of semirings $(\Rbar_{min}\Delta\N)\times\N$.

In particular we observe that the well-known SAT problem of counting solutions of minimal cardinality~\cite{hermann2009complexity} is a particular case of {\sc \#Cost-SAT}. \rev{Recall that an instance of {\sc Sat} is given by a Boolean formula $\varphi(X)$ is conjunctive normal form and the task is to find a function $\sigma \colon X \mapsto \{\bot, \top\}$ which satisfies every clause.}

\hfill

\underline{\sc \#Min-Card-Sat:}

\textbf{Input:} An instance $\varphi(X)$ of {\sc Sat} on a set of variables $X$.

\textbf{Output:} The number of models $\sigma:X\mapsto\{\bot,\top\}$ of $\varphi$ of minimal cardinality (i.e., where $|\sigma^{-1}(\{\top\})|$ is minimal).

\hfill

Similarly, the problem

\hfill

\underline{\sc \#Min-Lex-Sat:}

\textbf{Input:} An instance $\varphi(X)$ of {\sc Sat} on a set of variable $X$, and $(x_1,\dots,x_{\ell})\in X^{\ell}$ (with $\ell\ge 0$).

\textbf{Output:} The number of models $\sigma:X\mapsto\{\bot,\top\}$ of $\varphi$ where $(\sigma(x_1),\dots,\sigma(x_{\ell}))$ is lexicographically minimal with respect to the order $\bot \le \top$.\medskip

\noindent from Hermann and Pichler~\cite{hermann2008counting} and even the more general problem

\hfill

\underline{\sc \#Min-Weight-Sat:}

\textbf{Input:} An instance $\varphi(X)$ of {\sc Sat} on a set of variables $X$, and a weight function $w:X\mapsto\N$.

\textbf{Output:} The number of models $\sigma:X\to\{\bot,\top\}$ of $\varphi$ of minimal weight, {\it  i.e.,} where $$\sum\limits_{x\in \sigma^{-1}(\{\top\})} w(x)\quad \text{ is minimal.}$$\\

\noindent from Hermann and Pichler~\cite{hermann2009complexity} are also particular cases of {\sc \#Cost-Sat}. Indeed, if we for all $w:X\mapsto\N$ define the matrix $W\in(\Rbar)^{X\times\{\bot,\top\}}$ by $W[x,\top]=w(x)$ and $W[x,\bot]=0$, the {\sc \#Min-Weight-Sat} problem consists in applying the $(\R_{min}\Delta\N)$-measure $\mu_W\Delta\#$ to the set of solutions of an instance of {\sc Sat}. Also, the {\sc \#Min-Card-Sat} problem is the particular case where $w$ is constantly $1$, and the {\sc \#Min-Lex-Sat} problem is the particular case where
$w\colon  X  \to  \N$ is given by  $$w(y) = \left\{ \begin{array}{cc}
     2^{\ell -i} & \text{ if }y=x_i \text{ with }i\in [\ell] \\
     0 & \text{ if }y\notin \{x_1,\dots,x_{\ell}\}
\end{array} \right. $$

\section{Semiring-Connected-Dominating-Set and Cliquewidth}\label{sec:dom-set}

\begin{figure}[!ht]
\begin{center}
\scalebox{0.8}{
\begin{tikzpicture}

\tikzstyle{vertex}=[circle, draw, inner sep=1pt, minimum width=6pt]
\tikzstyle{sq}=[rectangle, draw, inner sep=1pt, minimum width=6pt]

\node[sq,color=blue] (NPproblem) at (0,5) {$\begin{matrix}\text{Parameterized problem } \\ (\mathbf{\Pi},\lambda)\text{ with }\mathbf{\Pi}\in{\sf NP}\end{matrix}$};

\node[sq,color=blue] (SemiringNP) at (0,2.5) {$\begin{matrix}{\sf FPT}\text{ algorithm that solves} \\ \textsc{Semiring-}\mathbf{\Pi}\end{matrix}$}
    edge[<-,color=blue] (NPproblem);

\node[color=blue] () at (-2.5,3.75) {$\begin{matrix} \text{See Section }\ref{sec:dom-set}\text{ for } (\mathbf{\Pi},\lambda)= \\ (\textsc{Connected-Dom-Set},\mathbf{cw}) \end{matrix}$};

\node () at (2.5,3.75) {$\begin{matrix} \text{See Section }\ref{sec:Csp}\text{ for } (\mathbf{\Pi},\lambda)= \\ (\textsc{CSP},\mathbf{tw}) \end{matrix}$};

\node () at (-7.5,0) {$\begin{matrix} \text{See Section }\ref{sec:semirings} \\ \text{for the} \\\text{construction} \\ \text{of measures}\end{matrix}$};

\node[sq] (B) at (-4.5,0) {$\B$-measures}
    edge (SemiringNP);

\node[sq] (N) at (-1.5,0) {$\N$-measures}
    edge (SemiringNP);

\node[sq] (Rmin) at (1.5,0) {$\R_{\min}$-measures}
    edge (SemiringNP);

\node[sq] (Delta) at (5,0) {$(\R_{\min}\Delta\N)$-measures}
    edge (SemiringNP);

\node[] () at (5,1) {$\begin{matrix} \text{See Section }\ref{sec:Delta-product} \\ \text{for the} \\ \Delta\text{-product}\end{matrix}$};

\node (midB) at (-2.58,1.2) {};
\node (midN) at (-0.96,1.2) {};
\node (midRmin) at (0.96,1.2) {};
\node (midDelta) at (2.85,1.2) {};

\node[sq] (Decision) at (-4.5,-2.5) {$\begin{matrix} \textsf{FPT}\text{ algorithm} \\ \text{that solves }\textsc{List-}\mathbf{\Pi}  \end{matrix}$}
    edge[<-,bend right] (midB);

\node[sq] (Counting) at (-1.5,-2.5) {$\begin{matrix} \textsf{FPT}\text{ algorithm} \\ \text{that solves \#}\mathbf{\Pi}  \end{matrix}$}
    edge[<-,bend right=50] (midN);

\node[sq] (Cost) at (1.5,-2.5) {$\begin{matrix} \textsf{FPT}\text{ algorithm} \\ \text{that solves }\textsc{Cost-}\mathbf{\Pi}  \end{matrix}$}
    edge[<-,bend left=50] (midRmin);

\node[sq] (CountCost) at (5,-2.5) {$\begin{matrix} \textsf{FPT}\text{ algorithm} \\ \text{that solves }\#\textsc{Cost-}\mathbf{\Pi}  \end{matrix}$}
    edge[<-,bend left] (midDelta);

\node () at (-7.5,-2.5) {$\begin{matrix}\text{See Lemma } \ref{lem:expression_computation} \\ \text{and Example }\ref{example:measures} \end{matrix}$};

\node at (5,-1.5) {See Section \ref{sec:count_cost_application}};

\end{tikzpicture}
}
\end{center}
\caption{A general overview of our contributions where the contribution in this section is colored in blue.}
\label{fig:plan_of_the_paper_dom_set}
\end{figure}
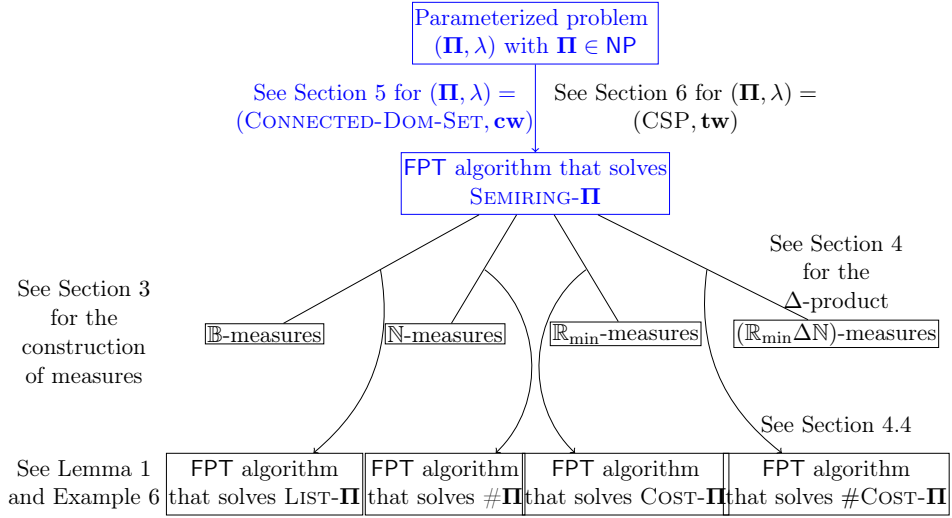

Recall that the framework is built upon the existence of an algorithm for {\sc Semiring-}$\mathbf{\Pi}$ (with $\mathbf{\Pi}\in$ {\sf NP}), which via Lemma~\ref{lem:expression_computation} can then be exploited to solve the usual semiring extensions $\#\mathbf{\Pi}$, {\sc List-}$\mathbf{\Pi}$, {\sc Cost-}$\mathbf{\Pi}$. The $\Delta$-product presented in Section \ref{sec:Delta-product} also makes it possible to subsume the problems of the form \#{\sc Cost-}$\mathbf{\Pi}$ in the semiring formalism. Thus, an algorithm for {\sc Semiring-$\mathbf{\Pi}$} is highly desirable since it can be used to solve many different (combinations of) problem extensions. To exemplify that this is indeed feasible we consider two well-known problems: {\sc Connected-Dominating-Set}, and {\sc Csp}. For both these problems, we construct {\sf FPT} algorithms that solve the semiring extended problems, and,  as a corollary, may derive upper bounds on all of the aforementioned problem extensions. In particular, we are able to solve both \#{\sc Cost-Connected-Dominating-Set} and \#{\sc Cost-CSP} by Section~\ref{sec:Delta-product}. The contribution of this section to the semiring formalism presented in this paper is summarized in Figure \ref{fig:plan_of_the_paper_dom_set}.

Both of these algorithms (in Section \ref{sec:dom-set} and \ref{sec:Csp}, respectively)  follow a similar idea since they are both based on bounded width properties of graphs. We utilize a ``method of construction of the graph'', such as a $k$-expression in Section~\ref{sec:dom-set}, or a tree decomposition in Section~ \ref{sec:Csp}. We will explain in Section \ref{sec:treewidth} how tree decompositions can be seen as a method of construction of a graph.

\begin{remark}\label{rem:trace}

Our algorithms only require that we ``compress'' each candidate solution by considering its so-called {\em trace}. The definition of the trace depends strongly on the problem and the parameter considered, but the trace must always verify these three properties:

\begin{enumerate}
    
    \item there must be only an {\sf FPT} number of possible traces,

    \item it is possible to decide whether a candidate function is a solution knowing only its trace, and

    \item the new trace of a candidate solution after performing a step of the ``method of construction of the graph'', depends only on its trace before this step.

\end{enumerate}
\rev{The technique of using traces to compress the solution space is a common approach in parameterized algorithms, particularly in dynamic programming on graphs with small treewidth or cliquewidth. For a comprehensive treatment see, e.g., Cygan et al.~\cite{cygan2015parameterized}.}
\end{remark}

Then, our algorithm maintains the invariant that for every possible trace $\tau$, we know a join/union expression $E_{\tau}$ of the set of all candidates that have the trace $\tau$. Note that if $\tau$ and $\tau'$ are two different traces, $E_{\tau}$ and $E_{\tau'}$ represent disjoint sets, and thus, the join/union expression $E_{\tau}\uplus E_{\tau'}$ is always legal.
By property 3., this invariant can be preserved through every step of the construction of the graph, and is therefore eventually true for the whole graph. Then, by property 2., this enables us to recover a join/union expression of the set of solutions via disjoint union, thereby answering the \textsc{Semiring} extension of our problem. Due to property 1., all of these operations are done in {\sf FPT} time.

Also, a variant of this paradigm would be to only encode the set of {\em solutions} with a given trace, instead of all {\em candidate solutions}, then  property 2. is not necessary, but property 3. must apply to {\em sets of solutions} instead of {\em candidate solutions}. This is what we will do in Section \ref{sec:Csp} to solve {\sc Csp}.

\subsection{FPT Algorithm}

We propose an algorithm that solves {\sc Semiring-Connected-Dominating-Set} and is {\sf FPT} when parameterized by the clique-width of the input graph. Thus, we manage to solve all common problem extensions (minimal cost, list, and counting) as well as problem extensions via the $\Delta$-product from Section~\ref{sec:Delta-product}. Let us remark that the basic problem {\sc Connected-Dominating-Set} is known to be solvable in $O^*(2^{(\omega + 4)k})$ time \cite{bergougnoux2019fast}, where $\omega  <2.37188$ is the matrix multiplication exponent \cite{duan2022faster}, and, additionally, in $O^*(5^k)$ time via a randomized (Monte-Carlo) algorithm \cite{hegerfeld2023tight}. However, no {\sf FPT} algorithm is known for the counting extension (which we vastly generalize). More formally, we study the problem:

\hfill

\underline{\sc Semiring-Connected-Dominating-Set:}

\textbf{Input:} A graph $G$.

\textbf{Output:} A join/union expression for the set of indicator functions of connected dominating-sets of $G$.

\hfill

and provide an \textsf{FPT} algorithm parameterized by the clique-width of the input graph.

\begin{theorem}\label{thm:semiring-connected-dominating-set-FPT}

The problem {\sc Semiring-Connected-Dominating-Set} is {\sf FPT} when parameterized by the cliquewidth of the input graph (assuming that a $\mathbf{cw}(G)$-expression of the input graph $G$ is given).

\end{theorem}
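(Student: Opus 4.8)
The plan is to construct the join/union expression for the set of indicator functions of connected dominating sets by a bottom-up dynamic programming over the given $\mathbf{cw}(G)$-expression $\varphi$, following exactly the paradigm laid out in Remark~\ref{rem:trace}. The solution space is $\{0,1\}^{V_G}$ (identifying a subset $S \subseteq V_G$ with $\mathbf{1}_S^{V_G}$), so throughout we build expressions over $(S,T) = (V_G, \{0,1\})$. The key design decision is the definition of the \emph{trace} of a candidate set $S$ restricted to an intermediate labeled graph $[\psi]$ appearing as a subexpression of $\varphi$. Since we process the graph via a $k$-expression, the only information about a future vertex that matters is its label in $[k]$; hence the trace must record, for each label $i \in [k]$, enough data to (a) later decide domination and (b) later decide connectivity, while keeping the number of traces bounded by $f(k)$. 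Concretely I would let the trace record, for each label $i \in [k]$: whether $V^i \cap S$ is empty or not, whether every vertex of label $i$ is already dominated (i.e. lies in $N_{[\psi]}[S]$), and a partition of the labels carrying selected vertices into connected ``blocks'' of $G[S]$ (a partition of a subset of $[k]$). This gives a trace count bounded by a function of $k$ alone, establishing property~1 of Remark~\ref{rem:trace}.

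The heart of the argument is verifying property~3: that the trace transforms correctly under each of the four $k$-expression operations. For the \textbf{single vertex} $\bullet_i(u)$ the two candidate restrictions ($u \in S$ or $u \notin S$) and their traces are immediate, yielding the base expressions $(u \mapsto 0)$ and $(u \mapsto 1)$. For \textbf{disjoint union} $\psi_1 \oplus \psi_2$, a candidate on $[\psi_1 \oplus \psi_2]$ is a join $f_1 \Join f_2$ of candidates on the two disjoint vertex sets, so the expression for a combined trace $\tau$ is $\biguplus_{(\tau_1,\tau_2)} (E_{\tau_1} \Join E_{\tau_2})$ where the union ranges over compatible pairs; the new ``domination'' and ``nonempty'' flags are computed coordinatewise and the connectivity partition is the disjoint union of the two partitions (no new edges are created, so no blocks merge). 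For \textbf{relabeling} $\rho_{i \to j}$ the trace is updated by merging the label-$i$ and label-$j$ data, and the underlying expression is unchanged (we simply regroup which old traces map to each new trace, taking $\biguplus$ over the fibers). The delicate case is \textbf{edge creation} $\eta_{i,j}$: adding all edges between label-$i$ and label-$j$ vertices can (i) newly dominate vertices — a label-$i$ vertex becomes dominated if $V^j \cap S \neq \emptyset$ and vice versa — and (ii) merge connectivity blocks — if both $V^i \cap S$ and $V^j \cap S$ are nonempty, their blocks fuse. Crucially, both updates depend \emph{only} on the pre-step trace (the emptiness flags and the block partition), not on the concrete candidate, which is exactly what property~3 demands. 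In each case the expression $E_\tau$ for a new trace is assembled from the old expressions using only $\uplus$ (over the fibers of the trace-update map) and $\Join$ (for disjoint union), and the disjointness conditions making these unions legal hold because distinct traces classify disjoint candidate sets.

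Once the invariant reaches the root, I would read off the answer (property~2): a candidate $S$ is a connected dominating set of $G = [\varphi]$ precisely when its trace records that every label is fully dominated and its connectivity partition is a single block (with the edge case that $S = \emptyset$, or the graph is a single vertex, handled separately). The final expression is then $\biguplus_{\tau \text{ accepting}} E_\tau$, legal because the $E_\tau$ represent pairwise-disjoint sets. For the running-time bound, at each of the $O(|\varphi|) = \|G\|^{O(1)}$ nodes we update $f(k)$ traces, and each update performs a bounded number of $\uplus$/$\Join$ combinations over fibers of size at most $f(k)$; the total size of the produced expression and the time to produce it are therefore bounded by $f(k) \cdot \|G\|^{O(1)}$, which is the required \textsf{FPT} bound. \textbf{The main obstacle} I anticipate is getting the trace definition exactly right for the connectivity requirement: it must simultaneously be coarse enough that only $f(k)$ traces exist (so the partition is of labels, not of vertices) yet fine enough that the block-merging rule under $\eta_{i,j}$ is well-defined purely from the trace — the subtle point being to confirm that two selected vertices sharing a label are always forced into the same block at the moment their label first acquires an incident created edge, so that ``which labels carry which block'' is genuinely a function of the trace and not of the hidden candidate.
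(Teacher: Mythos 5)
Your overall architecture (bottom-up DP over the $k$-expression, traces classifying candidates, expressions $E_\tau$ combined by $\uplus$ over fibers and $\Join$ at disjoint unions, final answer as a disjoint union over accepting traces) is exactly the paper's, but your concrete trace has a genuine gap, and it is precisely the one you flag as the ``main obstacle'': a \emph{partition of a subset of $[k]$} cannot record how many connected components of $G[S]$ realize a given set of labels, and that multiplicity is essential. Two distinct components may have identical (or overlapping) label sets, so your partition is only well-defined after forcibly merging their blocks, at which point the trace conflates ``one component with label set $C$'' and ``several components each with label set $C$''. Concretely, take $\varphi = \bullet_1(u) \oplus \bullet_1(v)$: the candidate $S=\{u,v\}$ has occupied labels $\{1\}$, a single block $\{\{1\}\}$, and every vertex dominated, so your root test (``one block, all labels dominated'') accepts it, yet $G[S]$ is disconnected and this graph has no connected dominating set. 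The hoped-for rescue --- that vertices sharing a label get merged into one block as soon as their label acquires a created edge --- fails because no such edge creation need ever occur (as above), and because relabeling $\rho_{i\to j}$ can at any time produce two components with the same label set out of components with disjoint label sets; in both cases the disconnection survives to the root invisibly.

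The paper's proof closes exactly this hole by replacing the partition with a \emph{signature} $\sigma_\varphi(S)\colon 2^{[k]} \to \threebar$, where $\threebar = \{0,1,\twobar\}$ and $\sigma_\varphi(S)(C)$ records whether the number of connected components of $G[S]$ whose label set is \emph{exactly} $C$ is zero, one, or at least two (with saturated addition in $\threebar$). Disjoint union adds signatures pointwise, relabeling pushes $\sigma$ forward along $C' \mapsto \rho_{i\to j}(C')$ (summing over fibers, which is where the $\twobar$ value is created and must be remembered), and edge creation collapses all label sets meeting $\{i,j\}$ into a single component when both labels are occupied. Connectivity at the root is then the condition $\|\sigma\| = \sum_{C \in 2^{[k]}} \sigma(C) = 1$, which correctly rejects the counterexample above since there $\sigma(\{1\}) = \twobar$. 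The price is a trace space of size $3^{2^k}\cdot 2^k$ (doubly exponential in $k$, versus the singly exponential count your partition would have given), but this is still a function of $k$ alone, so the \textsf{FPT} bound survives; your domination component and all four transition cases otherwise match the paper's argument.
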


\begin{proof}

We give an algorithm that solves {\sc Semiring-Connected-Dominating-Set} in time $O^*( (3^{2^{cw(G)}}\times 2^{cw(G)})^2 ) = O^*(3^{2^{cw(G)+1}}\times 4^{cw(G)})$.
To do so, take $k\geq 1$, a $k$-expression $\varphi$ of a graph $G$, and $S\subseteq V_G$. The $k$-expression will play the role of the ``method of construction of the graph'' as outlined in Remark \ref{rem:trace}. 

In order to define the ``trace'' of Remark \ref{rem:trace}, we will use three different labels, called $0$, $1$ and $\twobar$. They should be interpreted as respectively ``zero'', ``one'' and ``at least two''. We let $\threebar:=\{0,1,\twobar\}$, and we define the operation $+$ in $\threebar$ in Table \ref{tab:add_in_treebar}  agreeing with this interpretation of $0$, $1$ and $\twobar$.

\begin{table} 
\parbox{\linewidth}{
    \centering
    \begin{tabular}{c|ccc}
         + & 0 & 1 & $\twobar$  \\\hline
         0 & 0 & 1 & $\twobar$ \\
         1 & 1 & $\twobar$ & $\twobar$ \\
         $\twobar$ & $\twobar$ & $\twobar$ & $\twobar$ 
    \end{tabular}
    \caption{Addition in $\threebar$.}
    \label{tab:add_in_treebar}
}
\end{table}

We define the following operations for all $k$-expression $\varphi$:

\begin{itemize}
    
    \item \rev{The {\em signature} of $S$ in $\varphi$ is the function
    $\sigma_{\varphi}(S)\colon 2^{[k]} \to  \threebar$ is defined by  $$\sigma_{\varphi}(S)(C)= \left\{ \begin{array}{cc}
     0 & \text{ if there exists no }t\in [q]\text{ with } C=l_G(C_t)  \\
     1 & \text{ if there exists is a unique }t\in [q]\text{ with } C=l_G(C_t)  \\
     \twobar &  \text{ if there exists several }t\in [q]\text{ with } C=l_G(C_t)
\end{array} \right., $$
    where $S=C_1\uplus\dots\uplus C_q$ with $q\geq 0$ is the partition of $S$ into connected components of $G[S]$. The label $\twobar$ should be interpreted as ``at least $2$''.}

    \rev{The quantity $(\sigma_{\varphi}(S))(C)\in\threebar$ with $C\in 2^{[k]}$ gives the answer to the question: ``How many connected components $C_t\subseteq S$ of $G[S]$ are there such that the set of labels (i.e. the labels in $[k]$ of the $k$-expression $\varphi$) that appear in $C_t$ is exactly $C$?''. The value of $(\sigma_{\varphi}(S))(C)$ is either: $0$ if the answer is ``$0$ connected components'', $1$ if the answer is ``exactly $1$ connected component'', or $\twobar$ if the answer is ``at least $2$ different connected components''.
    Note that $G[S]$ is connected if and only if $\|\sigma_{\varphi}(S)\|=1$, with 
    $$\|\sigma_{\varphi}(S)\|=\sum\limits_{C\in 2^{[k]}} (\sigma_{\varphi}(S))(C)$$ being the number of connected components of $G[S]$ (in $\threebar$).}
    
    \item \rev{The {\em domination} of $S$ in $\varphi$, denoted by $dom_{\varphi}(S) \in 2^{[k]}$ is the set $dom_{\varphi}(S) = \{d\in [k] \mid V_G^d \subseteq N_G[S]\}$ \rev{(recall that $V_G^d$ is the set of vertices with label $d$)}. Note that $S$ is a dominating set of $G$ if and only if $dom_{\varphi}(S)=[k]$.} %
    
    \item \rev{Finally, the {\em trace} of $S$ in $\varphi$ is $Tr_{\varphi}(S)=(\sigma_{\varphi}(S),dom_{\varphi}(S)) \in \threebar^{2^{[k]}}\times 2^{[k]}$.}
    
\end{itemize}

Also, for all $(\sigma,D)\in \threebar^{2^{[k]}}\times 2^{[k]}$, $Tr_{\varphi}^{-1}(\sigma,D) \subseteq V_G$ denotes the inverse image of $\{(\sigma,D)\}$ by the function $Tr_{\varphi}$, {\it {\it  i.e.,}} $$Tr_{\varphi}^{-1}(\sigma,D):= \{ S\subseteq V_G \mid Tr_{\varphi}(S) = (\sigma,D)\}.$$ 
Note that here $\sigma_{\varphi}$ is  considered the function $\sigma_{\varphi}\colon  2^{V_G}  \to  \threebar^{2^{[k]}}$ defined by  $$\sigma_{\varphi}(S) =\sigma_{\varphi}(S)\in\threebar^{2^{[k]}}, $$ from which we consider the inverse images.

The idea of the algorithm is then to compute the values of $Tr_{\varphi}^{-1}(\sigma,D)$ for all $(\sigma,D)\in \threebar^{2^{[k]}}\times 2^{[k]}$, by induction over the structure of $\varphi$. What will be especially useful is that:

\begin{enumerate}
    
    \item $\{ Tr_{\varphi}^{-1}(\sigma,D) \mid (\sigma,D)\in \threebar^{2^{[k]}}\times 2^{[k]} \}$ forms a partition of $V_G$, which makes it possible to use $\uplus$. This is a partition into a {\sf FPT} number of subsets ($3^{2^k}\times 2^k$) with respects to $k=\mathbf{cw}(G)$,
    
    \item the set of connected dominating sets of $G$ is exactly $$\underset{\|\sigma\|=1}{\uplus} Tr_{\varphi}^{-1}(\sigma,[k]),\quad \text{with $\|\sigma\|=\sum\limits_{C\in 2^{[k]}} \sigma(C)$, and}$$
    
    \item for $S\subseteq V_G$, the value of $Tr_{\varphi}(S)$ depends only on the $Tr_{\varphi'}(S)$ with $\varphi$ being of the form $\varphi=\eta_{i,j}(\varphi')$ or $\varphi=\rho_{i\rightarrow j}(\varphi')$. A similar remark holds for $\varphi=\varphi_1\oplus\varphi_2$. Also, $Tr_{\bullet_i(u)}^{-1}(\sigma,D)$ (for $(\sigma,D)\in \threebar^{2^{[k]}}\times 2^{[k]}$) are easy to compute.
    
\end{enumerate}

We now focus on justifying this third remark.
Since the join/union expressions are made to express sets of functions and not subsets of vertices, we will give a join/union expression $E_{\varphi}(\sigma,D)$ of the sets of indicator functions of the subsets of vertices in $Tr_{\varphi}^{-1}(\sigma,D)$.
\rev{We inductively compute the values of the $Tr_{\varphi}^{-1}(\sigma,D)$ for $(\sigma,D)\in \threebar^{2^{[k]}}\times 2^{[k]}$ over a $k$-expression $\varphi$ as follows.}

\begin{itemize}
    
    \item \rev{\textbf{Single vertex:} For $i\in [k]$, $Tr_{\bullet_i(u)}(\{u\})= (\sigma_{\bullet_i(u)}(\{u\}),[k])$ and $Tr_{\bullet_i(u)}(\emptyset) = (\sigma_{\bullet_i(u)}(\emptyset),[k]\setminus\{i\})$, with
    $\sigma_{\bullet_i(u)}(\{u\})\colon 2^{[k]} \to  \threebar$ 
    defined by 
    $$\sigma_{\bullet_i(u)}(\{u\})(C)= \left\{ \begin{array}{cc}
         1 & \text{ if }C=\{i\}  \\
         0 & \text{ otherwise}
    \end{array} \right. $$}

    \rev{and
    $\sigma_{\bullet_i(u)}(\emptyset)
\colon 2^{[k]}  \to  \threebar$ defined by 
$$\sigma_{\bullet_i(u)}(\emptyset)(C)= 0.$$
 Since $\{u\}$ and $\emptyset$ are the only two subsets of $\{u\}$ ($\{u\}$ is the set of vertices of the graph expressed by the $k$-expression $\bullet_i(u)$), we have handled all cases. In other words, we have proven for all $(\sigma,D)\in \threebar^{2^{[k]}}\times 2^{[k]}$:}

    \rev{$Tr^{-1}_{\bullet_i(u)}(\sigma,D) = \left\{ \begin{array}{cc}
         \{\{u\}\} & \text{ if }(\sigma,D)=(\sigma_{\bullet_i(u)}(\{u\}),[k])\\
         \{\emptyset\} & \text{ if }(\sigma,D)=(\sigma_{\bullet_i(u)}(\emptyset),[k]\setminus\{i\}) \\
         \emptyset & \text{ otherwise.}
    \end{array} \right.$}

    \rev{Note that the indicator function $\mathbf{1}_{\{u\}}^{\{u\}}$ (respectively $\mathbf{1}_{\emptyset}^{\{u\}}$) of the set $\{u\}$ (respectively $\emptyset$) over the domain $\{u\}$ is $(u\mapsto 1)$ (respectively $(u\mapsto 0)$). Thus, a join/union expression of the set of indicator functions of the subsets of $Tr^{-1}_{\bullet_i(u)}(\sigma,D)$ would be:}

    \rev{$E_{\bullet_i(u)}(\sigma,D) := \left\{ \begin{array}{cc}
         (u\mapsto 1) & \text{ if }(\sigma,D)=(\sigma_{\bullet_i(u)}(\{u\}),[k])\\
         (u\mapsto 0) & \text{ if }(\sigma,D)=(\sigma_{\bullet_i(u)}(\emptyset),[k]\setminus\{i\}) \\
         \emptyset & \text{ otherwise.}
    \end{array} \right.$}

    \item \textbf{Disjoint union:} Let $\varphi$ be a $k$-expression of the form $\varphi=\varphi_1\oplus\varphi_2$. Let $G=[\varphi]$, $G_1=[\varphi_1]$ and $G_2=[\varphi_2]$. Then, $V_{G_1}$ and $V_{G_2}$ are disjoint.
    Let $S\subseteq V_G$, decompose $S=S_1\uplus S_2$, with $S_1\subseteq V_{G_1}$ and $S_2\subseteq V_{G_2}$ or equivalently, $\mathbf{1}_S^{V_G}=\mathbf{1}_{S_1}^{V_{G_1}} \Join \mathbf{1}_{S_2}^{V_{G_2}}$. %
    Then, denoting $Tr_{\varphi_1}(S_1) = (\sigma_1,D_1)$ and $Tr_{\varphi_2}(S_2) = (\sigma_2,D_2)$, we have $Tr_{\varphi}(S) = (\sigma_1+\sigma_2,D_1\cap D_2)$
    with
    $\sigma_1+\sigma_2\colon 2^{[k]} \to \threebar$ given by  $$(\sigma_1+\sigma_2)(C)= \sigma_1(C)+\sigma_2(C).$$
    Indeed, the connected components of $G$ are exactly the connected components of $G_1$ and the connected components of $G_2$.
    Also, a label is dominated in $G$ if and only if it is dominated both in $G_1$ and in $G_2$.
    This proves that:
    
    \[Tr_{\varphi}^{-1}(\sigma,D) = \underset{ \begin{matrix} \sigma_1+ \sigma_2 = \sigma \\ D_1\cap D_2=D \end{matrix} }{\biguplus} \{ S_1\uplus S_2 \mid (S_1,S_2)\in Tr_{\varphi_1}^{-1}(\sigma_1,D_1) \times Tr_{\varphi_2}^{-1}(\sigma_2,D_2) \}.\]
 for all $(\sigma,D)\in \threebar^{2^{[k]}}\times 2^{[k]}$.    It follows a similar relation for the indicator functions:

     \[E_{\varphi}(\sigma,D) := \underset{ \begin{matrix} \sigma_1+ \sigma_2 = \sigma \\ D_1\cap D_2=D \end{matrix} }{\biguplus} E_{\varphi_1}(\sigma_1,D_1) \Join E_{\varphi_2}(\sigma_2,D_2).\]

    \item \textbf{Relabeling:} \rev{Let $\varphi$ be a $k$-expression of the form $\varphi=\rho_{i\rightarrow j}(\varphi')$, $G=[\varphi]$ and $G'=[\varphi']$. 
    Let $S\subseteq V_G$ and $(\sigma',D')=Tr_{\varphi'}(S)$.
    Note that the connected components of $G[S]$ are exactly the connected components of $G'[S]$. Moreover, if the set of labels of a connected component $C_t$ of $G'[S]$ is exactly $C'\in 2^{[k]}$, then the set of labels of $C_t$ in $G[S]$ will be exactly $\rho_{i\rightarrow j}(C')$ with:}

    \rev{$$\rho_{i\rightarrow j}(C')= \left\{ \begin{array}{cc}
         C' & \text{ if }i\notin C'  \\
         (C' \cup \{j\}) \setminus \{i\} & \text{ if }i\in C'.
    \end{array} \right.$$}

    \rev{It follows that the signature $\sigma_{\varphi}(S)$ is exactly $\sigma_{\varphi}(S)=\rho_{i\rightarrow j}(\sigma')$ with}
\begin{itemize}

    \item[(i)] \rev{$\rho_{i\rightarrow j}(\sigma')\colon  2^{[k]}  \to  \threebar$ 
    given by $$\rho_{i\rightarrow j}(\sigma')(C)= \sum\limits_{C'\in 2^{[k]}, \rho_{i\rightarrow j}(C')=C} \sigma'(C'), $$
    with the convention that the empty sum equals $0$.}
\end{itemize}

Also, as the domination of $S$ in $G'$ is $D'$, the domination of $S$ in $G$ is exactly $dom_{\varphi}(S)=\rho_{i\rightarrow j}(D')$ with

   \begin{itemize}
    \item[(ii)]  \rev{$$\rho_{i\rightarrow j}(D') = \left\{\begin{array}{cc}
        D' & \text{if }\{i,j\}\subseteq D' \\
        D'\cup\{i\}\setminus\{j\} & \text{otherwise.} 
    \end{array} \right.$$}
\end{itemize}

    This proves that:
    \[Tr_{\varphi}^{-1}(\sigma,D) = \underset{(\rho_{i\rightarrow j}(\sigma'),\rho_{i\rightarrow j}(D'))=(\sigma,D)}{\biguplus}Tr_{\varphi'}^{-1}(\sigma',D')\]
    for all $(\sigma,D)\in\threebar^{2^{[k]}}\times 2^{[k]}$.
    The same relation holds for the indicator function, leading to the join/union expression:

    \[E_{\varphi}(\sigma,D) = \underset{(\rho_{i\rightarrow j}(\sigma'),\rho_{i\rightarrow j}(D'))=(\sigma,D)}{\biguplus}E_{\varphi'}(\sigma',D').\]

    \item \textbf{Edge creation:} Let $\varphi$ be a $k$-expression of the form $\varphi=\eta_{i,j}(\varphi')$, $G=[\varphi]$ and $G'=[\varphi']$.
    Let $S\subseteq V_G$ and $(\sigma',D')=Tr_{\varphi'}(S)$.

    First, recall that all the vertices have the same label in $G$ and in $G'$, {\it  i.e.,} $l_G=l_{G'}$.
        Note that if the connected components of $G'[S]$ are $C_1,\dots,C_q$ with $q\ge 1$, and if the connected components of $G'[S]$ that contains either a $i$-vertex or a $j$-vertex are $C_{\iota+1},\dots,C_q$ with $\iota \in [q]$, then the connected components of $G[S]$ are $(i)
        C_1,\dots,C_q,  \text{ if } \{i,j\} \ \cancel{\subseteq}\ l_G(S)$, or       $C_1,\dots,C_{\iota}\text{ and } \bigcup\limits_{t=\iota+1}^q C_t,  \text{ otherwise. }$ 
        
    Let $C_0^{i,j}(\sigma'):=\bigcup\limits_{\begin{array}{c}
         C\in 2^{[k]},\sigma'(C)\neq 0 \\ \{i,j\}\cap C\neq\emptyset
    \end{array}} C \in 2^{[k]}$.
    
    By definition of the signature 
    $$l_G(S) = \bigcup\limits_{
         C\in 2^{[k]},\sigma'(C)\neq 0} C\in 2^{[k]}.$$
    Summing up, we have
    $\{i,j\}\cap C_0^{i,j}(\sigma') = \{i,j\}\cap l_G(S)$.
    
    Thus if $C_0^{i,j}(\sigma')$ does not contain both a $i$ and $j$, it means that $S$ does not contain either a $i$-vertex or a $j$-vertex. Then, since no edge is created in $G'[S]$ after performing $\eta_{i,j}$, we have $G[S]=G'[S]$, and the signature is thus the same in both graphs. 
    
    Otherwise, all connected components that contain either a $i$-vertex or a $j$-vertex are merged into a greater connected component $C_0$, and we have $l_G(C_0)=C_0^{i,j}(\sigma')$. The other connected components of $G'[S]$ (that contains neither $i$ nor $j$) are unchanged.
    
    It follows that the signature of $\sigma_{\varphi}(S)$ is exactly $\eta_{i,j}(\sigma')$ with:
    $$  \eta_{i,j}(\sigma'):=\sigma' \text{ if } \{i,j\}\cap C_0^{i,j}(\sigma') \neq \{i,j\},$$
    otherwise:
    $ \eta_{i,j}(\sigma')\colon  2^{[k]}  \to  \threebar$ is given by
       $$ \eta_{i,j}(\sigma') (C)= \left\{ \begin{array}{cc}
             \sigma'(C) & \text{ if } \{i,j\}\cap C =\emptyset \\
              & \\
             1 & \text{ if }C=C_0^{i,j}(\sigma') \\
              & \\
             0 &\text{ otherwise.}
        \end{array}\right.
        $$

    Also, the domination of $S$ in $G'$ is exactly $D_{i,j}(\sigma',D')$ with:
    $$ D_{i,j}(\sigma',D') := \left\{ \begin{array}{cc}
        D' & \text{ if } \{i,j\} \cap C_0^{i,j}(\sigma') = \emptyset \\
        D'\cup \{i\} & \text{ if } \{i,j\} \cap C_0^{i,j}(\sigma') = \{j\} \\
        D'\cup \{j\} & \text{ if } \{i,j\} \cap C_0^{i,j}(\sigma') = \{i\} \\
        D'\cup \{i,j\} & \text{ if } \{i,j\} \cap C_0^{i,j}(\sigma') = \{i,j\} \\
    \end{array} \right. $$

    Indeed, assume that $S$ contains a $i$-vertex. Since, in $G$, every $i$-vertex shares an edge with every $j$-vertex (because of $\eta_{i,j}$), then every $j$-vertex is dominated by $S$ in $G$. A symmetric argument applies if $S$ contains a $j$-vertex.

    Finally, letting
    $\eta_{i,j}(\sigma',D'):= (\eta_{i,j}(\sigma'),D_{i,j}(\sigma',D')) \in \threebar^{2^{[k]}}\times 2^{[k]},$
    we have:
    \[Tr_{\varphi}^{-1}(\sigma,D) = \underset{\eta_{i,j}(\sigma',D')=(\sigma,D)}{\biguplus}Tr_{\varphi'}^{-1}(\sigma',D').\]

    for all $(\sigma,D)\in\threebar^{2^{[k]}}\times 2^{[k]}$.
    The same relation is true for the indicator function, leading to the join/union expression:

    \[E_{\varphi}(\sigma,D) = \underset{\eta_{i,j}(\sigma',D')=(\sigma,D)}{\biguplus}E_{\varphi'}(\sigma',D').\]
    
\end{itemize}

This proves that during the execution of Algorithm~\ref{algo:semiring_connected_dom_set_cliquewidth}, at the end of each call of \textbf{CONNECTED-DOM-SET}$(\varphi)$, the variable $E_{\varphi}$ contains a join/union expression of the set $Tr_{\varphi}^{-1}(\sigma,D)$ for all $(\sigma,D)\in \threebar^{2^{[k]}}\times 2^{[k]}$. %

As Algorithm \ref{algo:semiring_connected_dom_set_cliquewidth} outputs $$\underset{\sigma\in \threebar^{2^{[k]}}, \|\sigma\|=1}{\biguplus} E_{\varphi_G}[\sigma,[k]] \quad\text{(with $\|\sigma\|=\sum\limits_{\sigma\in \threebar^{2^{[k]}}}\sigma(C)$),}$$ it thus outputs a join/union expression of the set of connected dominating sets of $G$. Indeed, for all $S\subseteq V_G$, letting $Tr_{\varphi_G}(S)=(\sigma,D)$, $S$ is a dominating-set if and only if $D=[k]$, and $G[S]$ is connected if and only if $\|\sigma\|=1$.

Moreover,  Algorithm \ref{algo:semiring_connected_dom_set_cliquewidth} runs in $(3^{2^k}\times 2^k)^2 = 3^{2^{k+1}}\times 4^k$ time, from the case $\varphi=\varphi_1\oplus\varphi_2$,
\end{proof}

It would be interesting to investigate whether one could even achieve a single exponential running time close to the $O( (2^{\omega+4})^k)$ time algorithm for {\sc Connected-Dominating-Set}~\cite{bergougnoux2019fast} (with $\omega<2.37188$ the exponent of the optimal time complexity of matrix multiplication \cite{duan2022faster}).

\begin{algorithm}
\label{algo:semiring_connected_dom_set_cliquewidth}
\KwData{A graph $G$, a $k$-expression $\varphi_G$ of $G$.}
\KwResult{A join/union expression of the set of connected dominating sets of $G$.}

$\empty$\\

Run \textbf{CONNECTED-DOM-SET}$(\varphi_G)$ and return $\underset{\sigma\in \threebar^{2^{[k]}}, \|\sigma\|=1}{\uplus} E_{\varphi_G}[\sigma,[k]]$, with \textbf{CONNECTED-DOM-SET}$(\varphi)$ being defined for all $k$-expression $\varphi$ as:

$\empty$\\

\textbf{CONNECTED-DOM-SET}$(\varphi)$:

\For{$(\sigma,D)\in \threebar^{2^{[k]}}\times 2^{[k]}$}
{
$E_{\varphi}[\sigma,D]:=\emptyset$
}

\If{$\varphi=\bullet_i(u)$:}
{
$E_{\varphi}[ \sigma_{\bullet_i(u)}(\emptyset) , [k]\setminus \{i\} ] \leftarrow (u\mapsto 0)$\\
$E_{\varphi}[ \sigma_{\bullet_i(u)}(\{u\}) , [k] ] \leftarrow (u\mapsto 1)$
}

\If{$\varphi=\rho_{i\rightarrow j}(\varphi')$:}
{
Run \textbf{CONNECTED-DOM-SET}$(\varphi')$
\\
\For{$(\sigma,D)\in \threebar^{2^{[k]}}\times 2^{[k]}$}
{
$E_{\varphi}[ \rho_{i\to j}(\sigma),\rho_{i\to j}(D) ] \leftarrow E_{\varphi}[ \rho_{i\to j}(\sigma),\rho_{i\to j}(D) ] \uplus E_{\varphi'}[\sigma,D] $
}
}

\If{$\varphi=\eta_{i,j}(\varphi')$:}
{
Run \textbf{CONNECTED-DOM-SET}$(\varphi')$
\\
\For{$(\sigma,D)\in \threebar^{2^{[k]}}\times 2^{[k]}$}
{
$E_{\varphi}[ \eta_{i,j}(\sigma,D) ] \leftarrow E_{\varphi}[ \eta_{i,j}(\sigma,D) ] \uplus E_{\varphi'}[\sigma,D]$
}

}

\If{$\varphi=\varphi_1\oplus\varphi_2$}
{
Run \textbf{CONNECTED-DOM-SET}$(\varphi_1)$ and \textbf{CONNECTED-DOM-SET}$(\varphi_2)$\\
\For{$(\sigma_1,D_1)\in \threebar^{2^{[k]}}\times 2^{[k]}$}
{
\For{$(\sigma_2,D_2)\in \threebar^{2^{[k]}}\times 2^{[k]}$}
{
$E_{\varphi}[\sigma_1 + \sigma_2, D_1\cap D_2] \leftarrow E_{\varphi}[\sigma_1 + \sigma_2, D_1\cap D_2] \uplus (E_{\varphi_1}[\sigma_1,D_1]\Join E_{\varphi_2}[ \sigma_2,D_2 ])$
}
}

}
\caption{An algorithm that solves {\sc Semiring-Connected-Dominating-Set} in time $O^*(3^{2^{k+1}} \times 4^k)$. The notations used are introduced in the proof of Theorem \ref{thm:semiring-connected-dominating-set-FPT}.}
\end{algorithm}

By performing a simpler variant of this idea, it is also possible to derive an algorithm solving {\sc Semiring-Dominating-Set} in time $O^*(16^k)$, assuming a $k$-expression of the input graph is given. The main difference with Algorithm \ref{algo:semiring_connected_dom_set_cliquewidth} is that we instead define the signature of $S\subseteq V_G$ simply as $l_G(S) \in 2^{[k]}$ (it is not necessary to distinguish the connected components of $G[S]$). Thus the signature takes its values in $2^{[k]}$ instead of $\threebar^{2^{[k]}}$, justifying that the complexity is $O^*(  (2^k\times 2^k)^2  ) = O^*(16^k)$ instead of $O^*(  (3^{2^k}\times 2^k)^2  ) = O^*(  (3^{2^{k+1}}\times 4^k)^2 )$ (the square comes from the case where the $k$-expression $\varphi$ is of the form $\varphi:=\varphi_1\oplus\varphi_2$).

If one is only interested in solving $\#${\sc Dominating-Set}, the running time $O^*(16^k)$ is not optimal since there is a faster $O^*(4^k)$ time algorithm~\cite{bodlaender2010faster}. However, this algorithm could still be generalized while keeping the complexity $O^*(4^k)$ if we only consider {\em rings} instead of all semirings. The issue of semirings is that the algorithm requires the use of subtractions, which is not always possible in semirings. If we allow a new operation $\setminus$ in the join/union expression, we could define the {\em ring-expressions}: the semantic $[E_1\setminus E_2]$ of $E_1\setminus E_2$ would be $[E_1]\setminus [E_2]$ if $[E_2]\subseteq [E_1]$ (assuming $E_1$ and $E_2$ are legal), and would fail otherwise. Note that then, for any measure $\mu$ taking its value into a ring, we would have the property that $\mu( [E_1\setminus E_2] ) = \mu([E_1]) - \mu([E_2])$ for all $E_1$ and $E_2$ (assuming the expressions involved are legal), thus Lemma \ref{lem:expression_computation} could easily be extended to ring-expressions. This motivates the introduction of the {\sc Ring-}$\mathbf{\Pi}$ problem with $\mathbf{\Pi}\in$ {\sf NP}, where one is asked to compute a ring-expression of the set of solutions of $\mathbf{\Pi}$. The {\sc Ring-}$\mathbf{\Pi}$ problem is at least not harder than {\sc Semiring-}$\mathbf{\Pi}$ since every join/union expression is in particular a ring-expression. However, it derives strictly less algorithmic applications, since Lemma \ref{lem:expression_computation} then only applies if the semiring considered is a ring.

To our knowledge, even if one is only interested in solving the decision problem {\sc Dominating-Set} parameterized by clique-width, the algorithm running in time $O^*(4^k)$ referred to earlier~\cite{bodlaender2010faster} reaches the best time complexity proposed in the literature. Note that the extension to {\sc Ring-Dominating-Set} that we propose would preserve this complexity.

\section{Semiring-CSP, Sum-Product-CSP and Primal Treewidth}\label{sec:Csp}

As a general application of our approach, we turn to {\sc Semiring-CSP} and the related problem {\sc Sum-Product-CSP} (see Section~\ref{sec:sum_product} for a comparison).
The goal of this section is to obtain a join/union expression for the set of solutions of an instance of {\sc Csp} in {\sf FPT} time parameterized by the so-called primal treewidth.

\begin{figure}[!ht]
\begin{center}
\scalebox{0.8}{
\begin{tikzpicture}

\tikzstyle{vertex}=[circle, draw, inner sep=1pt, minimum width=6pt]
\tikzstyle{sq}=[rectangle, draw, inner sep=1pt, minimum width=6pt]

\node[sq,color=blue] (NPproblem) at (0,5) {$\begin{matrix}\text{Parameterized problem } \\ (\mathbf{\Pi},\lambda)\text{ with }\mathbf{\Pi}\in{\sf NP}\end{matrix}$};

\node[sq,color=blue] (SemiringNP) at (0,2.5) {$\begin{matrix}{\sf FPT}\text{ algorithm that solves} \\ \textsc{Semiring-}\mathbf{\Pi}\end{matrix}$}
    edge[<-,color=blue] (NPproblem);

\node () at (-2.5,3.75) {$\begin{matrix} \text{See Section }\ref{sec:dom-set}\text{ for } (\mathbf{\Pi},\lambda)= \\ (\textsc{Connected-Dom-Set},\mathbf{cw}) \end{matrix}$};

\node[color=blue] () at (2.5,3.75) {$\begin{matrix} \text{See Section }\ref{sec:Csp}\text{ for } (\mathbf{\Pi},\lambda)= \\ (\textsc{CSP},\mathbf{tw}) \end{matrix}$};

\node () at (-7.5,0) {$\begin{matrix} \text{See Section }\ref{sec:semirings} \\ \text{for the} \\\text{construction} \\ \text{of measures}\end{matrix}$};

\node[sq] (B) at (-4.5,0) {$\B$-measures}
    edge (SemiringNP);

\node[sq] (N) at (-1.5,0) {$\N$-measures}
    edge (SemiringNP);

\node[sq] (Rmin) at (1.5,0) {$\R_{\min}$-measures}
    edge (SemiringNP);

\node[sq] (Delta) at (5,0) {$(\R_{\min}\Delta\N)$-measures}
    edge (SemiringNP);

\node[] () at (5,1) {$\begin{matrix} \text{See Section }\ref{sec:Delta-product} \\ \text{for the} \\ \Delta\text{-product}\end{matrix}$};

\node (midB) at (-2.58,1.2) {};
\node (midN) at (-0.96,1.2) {};
\node (midRmin) at (0.96,1.2) {};
\node (midDelta) at (2.85,1.2) {};

\node[sq] (Decision) at (-4.5,-2.5) {$\begin{matrix} \textsf{FPT}\text{ algorithm} \\ \text{that solves }\textsc{List-}\mathbf{\Pi}  \end{matrix}$}
    edge[<-,bend right] (midB);

\node[sq] (Counting) at (-1.5,-2.5) {$\begin{matrix} \textsf{FPT}\text{ algorithm} \\ \text{that solves \#}\mathbf{\Pi}  \end{matrix}$}
    edge[<-,bend right=50] (midN);

\node[sq] (Cost) at (1.5,-2.5) {$\begin{matrix} \textsf{FPT}\text{ algorithm} \\ \text{that solves }\textsc{Cost-}\mathbf{\Pi}  \end{matrix}$}
    edge[<-,bend left=50] (midRmin);

\node[sq] (CountCost) at (5,-2.5) {$\begin{matrix} \textsf{FPT}\text{ algorithm} \\ \text{that solves }\#\textsc{Cost-}\mathbf{\Pi}  \end{matrix}$}
    edge[<-,bend left] (midDelta);

\node () at (-7.5,-2.5) {$\begin{matrix}\text{See Lemma } \ref{lem:expression_computation} \\ \text{and Example }\ref{example:measures} \end{matrix}$};

\node at (5,-1.5) {See Section \ref{sec:count_cost_application}};

\end{tikzpicture}
}
\end{center}
\caption{A general overview of our contributions where the contribution in this section is colored in blue.}
\label{fig:plan_of_the_paper_csp}
\end{figure}

The contribution of this section to the semiring formalism presented in this paper is summarized in Figure \ref{fig:plan_of_the_paper_csp}.

\subsection{Primal tree-width}\label{sec:treewidth}

Recall that the Gaifman graph \cite{rossi2006handbook} of an instance $\I = (V,\mathcal{C})$ of {\sc Csp}$(\Gamma)$ is the graph $G=(V,E_G)$ where $\{u,v\} \in E_G$ if there exists a constraint in $\mathcal{C}$ containing both $u$ and $v$. 
 The {\em primal treewidth} of an instance of {\sc Csp} is the treewidth of its Gaifman graph. \rev{We recommend the reader to refresh the notion of a nice tree decomposition from Section~\ref{sec:width} and the associated notions of $bag(N)$ (the bag associated to a node $N$ of the rooted tree decomposition) and  $descbag(N)$ (the union of the bags of all descendents of $N$).}

We notice in Lemma \ref{lem:bagdescbag} that $descbag(N)$ and $bag(N)$ depend only on the type of the node $N$ and on the $descbag(N')$, and $bag(N')$ for the children $N'$ of $N$.

\begin{lemma}\label{lem:bagdescbag}

Let $T$ be a nice tree decomposition of a graph $G$, and $N\in V_T$.

\begin{enumerate}
    \item If $N$ is a \textbf{\em Leaf}: then $descbag(N)=bag(N)=\emptyset$
    \item If $N$ is of the type \textbf{\em Forget}$(v)$:  then, for $v\in V_G$,
    $$descbag(N)=descbag(child(N))\,\text{ and }\, bag(N)=bag(child(N))\setminus \{v\}.$$
    \item If $N$ is of the type \textbf{\em Introduce}$(v)$: then, for $v\in V_G$,  $v\notin descbag(child(N))$, and $$descbag(N)=descbag(child(N))\uplus \{v\} \text{ and } bag(N)=bag(child(N))\uplus \{v\}.$$ Moreover, for each edge $\{u,v\}\in E_G$ with $u\in descbag(N)$,  $u\in bag(N)$.
    \item If $N$ is of the type \textbf{\em Join}, and its two children of $N$ are $N_1\in N_T$ and $N_2\in N_T$, then $descbag(N)=descbag(N_1)\cup descbag(N_2)$ and $$bag(N)= descbag(N_1)\cap descbag(N_2) = bag(N_1)=bag(N_2).$$ Moreover, there is no edge in $G$ with an endpoint in $descbag(N_1) \setminus bag(N) $ and an endpoint in $descbag(N_2) \setminus bag(N) $.
\end{enumerate}

\end{lemma}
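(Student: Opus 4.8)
The plan is to treat the four node types separately, reading off the \emph{bag}-identities directly from the definition of a nice tree decomposition, while deriving the \emph{descbag}-identities from the elementary observation that
\[
descbag(N) = bag(N) \cup \bigcup_{N' \text{ child of } N} descbag(N'),
\]
which holds because the descendants of $N$ are $N$ itself together with the descendants of its children. The single tool I would lean on repeatedly is the following consequence of property~(3) of tree decompositions: for every vertex $w \in V_G$, the set $\{M \in N_T \mid w \in bag(M)\}$ induces a connected subtree of $T$. Indeed, if $w$ lies in $bag(M_1)$ and $bag(M_3)$ and $M_2$ is on the path between $M_1$ and $M_3$, then $w \in bag(M_1) \cap bag(M_3) \subseteq bag(M_2)$ by property~(3). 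I will refer to this as the \emph{connectedness of occurrences}.

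The Leaf and Forget cases are immediate. For a Leaf, $N$ has no proper descendant, so $descbag(N) = bag(N) = \emptyset$ by the definition of the Leaf type. For Forget$(v)$, the bag identity is the definition, and since $bag(N) = bag(child(N)) \setminus \{v\} \subseteq descbag(child(N))$, the displayed formula collapses to $descbag(N) = descbag(child(N))$. The first genuinely non-trivial point is the disjointness $v \notin descbag(child(N))$ in the Introduce case, which I would prove by contradiction: if $v \in bag(M)$ for some descendant $M$ of $child(N)$, then $child(N)$ lies on the path from $M$ to $N$, so property~(3) gives $v \in bag(M) \cap bag(N) \subseteq bag(child(N))$, contradicting $v \notin bag(child(N))$. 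With disjointness in hand, $descbag(N) = descbag(child(N)) \uplus \{v\}$ follows from the displayed formula.

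For the ``moreover'' clause of Introduce, I would fix an edge $\{u,v\}$ with $u \in descbag(N)$ and use property~(2) to pick a node $M_0$ with $\{u,v\} \subseteq bag(M_0)$. By the disjointness just proved, no proper descendant of $N$ contains $v$, so by connectedness of occurrences $M_0$ is either $N$ (and then $u \in bag(N)$, done) or lies outside the subtree rooted at $N$. In the latter case, choosing a descendant $N'$ of $N$ with $u \in bag(N')$, the node $N$ lies on the path from $M_0$ to $N'$, so property~(3) yields $u \in bag(M_0) \cap bag(N') \subseteq bag(N)$. For the Join case, the identity $bag(N) = bag(N_1) = bag(N_2)$ is the definition, and $descbag(N) = descbag(N_1) \cup descbag(N_2)$ again follows from the displayed formula. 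The inclusion $bag(N) \subseteq descbag(N_1) \cap descbag(N_2)$ is clear; for the reverse inclusion I would take $w \in descbag(N_1) \cap descbag(N_2)$, realize it in the bags of descendants $M_1$ of $N_1$ and $M_2$ of $N_2$, note that $N$ lies on the path between $M_1$ and $M_2$, and conclude $w \in bag(M_1) \cap bag(M_2) \subseteq bag(N)$ by property~(3).

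Finally, the no-crossing-edge clause of the Join case is where the argument is most delicate. Suppose toward a contradiction that $\{u, u'\} \in E_G$ with $u \in descbag(N_1) \setminus bag(N)$ and $u' \in descbag(N_2) \setminus bag(N)$. The key observation is that, since $u \notin bag(N)$ and $u$ occurs in the subtree rooted at $N_1$, the connectedness of occurrences forces \emph{every} node containing $u$ to lie in that subtree (an occurrence outside it would, by connectedness, drag $N$ into the occurrence set of $u$, contradicting $u \notin bag(N)$); symmetrically every node containing $u'$ lies in the subtree rooted at $N_2$. As these two subtrees are disjoint, no single node can contain both $u$ and $u'$, contradicting property~(2). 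I expect this last clause, together with the analogous use of connectedness of occurrences in the Introduce ``moreover'' part, to be the main obstacle, since both require combining occurrence-connectedness with the separator structure of the rooted tree rather than a direct appeal to the defining identities.
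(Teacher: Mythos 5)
Your proposal is correct and follows essentially the same route as the paper's proof: the same case analysis over the four node types, the same use of property (3) of tree decompositions (which you package as ``connectedness of occurrences'') to establish the disjointness and separator claims, and the same combination of properties (2) and (3) to rule out crossing edges in the Introduce and Join cases. The only differences are expository --- e.g., in the Join case you confine the occurrence sets of $u$ and $u'$ to the two disjoint subtrees before invoking property (2), where the paper instead does a case analysis on the location of the witness node containing both endpoints, and you explicitly handle the case where the witness node equals $N$ in the Introduce case, which the paper passes over silently.
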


\begin{proof}

We prove each item individually, depending on the type of $N$.

\begin{enumerate}
    
    \item If $N$ is a leaf,  then $descbag(N)=bag(N)=\emptyset$ by definition of a nice tree decomposition.
    
    \item If $N$ is of the type \textbf{Forget}$(v)$, the result is clear.
    
    \item If $N$ is of the type \textbf{Introduce}$(v)$, we fist prove that $v\notin descbag(child(N))$. Otherwise, there exists $N_v\in desc(child(N))$ that contains $v$, and $child(N)$ is thus on the path between $N$ and $N_v$. By definition of a tree decomposition, we would have $bag(N)\cap bag(N_v)\subseteq bag(child(N))$ and $v\in bag(child(N))$ which is false. Thus, $v\notin descbag(child(N))$.

    We now prove that for every edge $\{u,v\}\in E_G$ with $u\in descbag(N)$, we have $u\in bag(N)$. Let $N_u$ be a descendant of $N$ with $u\in bag(N_u)$. By definition of a tree decomposition, there exists a node $N_{u,v}$ whose bag contains both $u$ and $v$. If $N_{u,v}$ is a descendant of $N$ then $child(N)$ is on the path between $N$ and $N_{u,v}$, and we have that $bag(N) \cap bag(N_{u,v}) \subseteq  bag(child(N))$. This is a contradiction since $v\in bag(N) \cap bag(N_{u,v})$ but $v\notin bag(child(N))$, so this case is impossible. Thus, since $N_{u,v}$ is not a descendant of $N$, $N$ is on the path between $child(N)$ and $N_{u,v}$ so $bag(N_u)\cap bag(N_{u,v})\subseteq bag(N)$, which implies that $u\in bag(N)$.

    \item If $N$ is of the type \textbf{Join}, and its two children of $N$ are $N_1\in N_T$ and $N_2\in N_T$, then $bag(N)=bag(N_1)=bag(N_2)$ by definition of a nice tree decomposition. It follows $descbag(N)=descbag(N_1) \cup descbag(N_2)$. Moreover, by definition of a tree decomposition, we get that $descbag(N_1) \cap descbag(N_2) \subseteq bag(N)$ through a similar reasoning as in the ``Introduce'' case. Moreover, $bag(N)=bag(N_1) \subseteq descbag(N_1)$, and $bag(N)=bag(N_2) \subseteq descbag(N_2)$. It follows that $bag(N) \subseteq descbag(N_1) \cap descbag(N_2)$, and the equality follows.

    We now prove that an edge $\{u,v\}\in E_G$ of $G$ can not have an endpoint $u\in descbag(N_1)\setminus bag(N)$ and the other endpoint $v\in descbag(N_2)\setminus bag(N)$. Assume by contradiction that this is the case. Let $N_u$ a descendant of $N_1$ with $u\in bag(N_u)$ and symmetrically $N_v$ a descendant of $N_2$ with $v\in bag(N_v)$. By definition of a tree decomposition, there exists a node $N_{u,v}$ containing both $u$ and $v$. We remark that $N$ is either on the path between $N_u$ and $N_{u,v}$ contradicting that $u\notin bag(N)$ or on the path between $N_v$ and $N_{u,v}$, contradicting that $v\notin bag(N)$.
    
\end{enumerate}

\end{proof}

It can be useful to interpret a nice tree decomposition $(T,bag)$ of a graph $G$ as a construction of a graph with two labels on its vertices, say black and white. For every node $N$, we can associate a graph $G_N$ built at the node $N$, whose vertices satisfy:

\begin{itemize}

    \item The set of vertices of $G_N$ is $descbag(N)$.

    \item The set of black vertices of $G_N$ is $bag(N)$. The other vertices in $descbag(N)\setminus bag(N)$ are white.

\end{itemize}

With this interpretation, the treewidth of the tree decomposition is the maximum number of black vertices in a graph $G_N$ (for a node $N$ of $T$) minus 1. 
The interpretation $G_N$ of a node $N$ can be recursively derived from its children, depending on the four possible types of $N$.

\begin{enumerate}
    \item \textbf{Leaf:} Construct the empty graph.

    \item \textbf{Forget$(v)$:} The black vertex $v$ in $G_{child(N)}$ is now white in $G_N$.

    \item \textbf{Introduce$(v)$:} To build $G_N$, we add a new black vertex $v$ in $G_{child(N)}$, and we can build an edge between $v$ and any black vertex.

    \item \textbf{Join:} Letting $N_1$ and $N_2$ be the two children of $N$, $G_{N_1}$ and $G_{N_2}$ must have the same black graphs (i.e., their black vertices must induce the same graph). We then obtain $G_N$ from $G_{N_1}$ and $G_{N_2}$ by identifying their black graphs. %
    
\end{enumerate}

\begin{example}

\rev{In Figure \ref{fig:join_node}, an example of $N$ being a Join node with two children $N_1$ and $N_2$. Here, $descbag(N_1)=\{a,b,c,d,e,h\}, descbag(N_2)=\{e,f,g,h,i\}$ and thus $descbag(N)=descbag(N_1)\cup descbag(N_2) = \{a,b,c,d,e,f,g,h,i\}$ and $bag(N)=bag(N_1)=bag(N_2) = descbag(N_1)\cap descbag(N_2) = \{e,h\}$.} %

\end{example}

\begin{center}
\begin{figure}

\begin{tikzpicture}[scale=1]
\tikzstyle{vertex}=[draw,shape=circle];

\begin{scope}
    \node[vertex] (u) at (0,0) {$a$};
    
    \node[vertex] (v) at (-1,1) {$b$}
        edge (u);
        
    \node[vertex] (w) at (1,1) {$c$}
        edge (u)
        edge (v);

    \node[vertex] (r) at (0,-1) {$d$}
        edge (u);

    \node[vertex,fill=black] (y) at (2,0) {$\mathcolor{white}{e}$}
        edge (u)
        edge (w);

    \node[vertex,fill=black] (p) at (2,-1) {$\mathcolor{white}{h}$}
        edge (r)
        edge (y);

    \node () at (1,-2) {$G_{N_1}$};

\end{scope}

\begin{scope}[xshift=2cm]

    \node[vertex,fill=black] (y) at (2,0) {$\mathcolor{white}{e}$}
;

    \node[vertex] (z) at (3.5,0) {$f$}
        edge (y);

    \node[vertex] (q) at (3.5,-1) {$g$}
        edge (z);

    \node[vertex,fill=black] (p) at (2,-1) {$\mathcolor{white}{h}$}
        edge (q)
        edge (y);

    \node[vertex] (i) at (3.5,1) {$i$}
        edge (z);

    \node () at (2.75,-2) {$G_{N_2}$};

\end{scope}

\begin{scope}[xshift=8cm]
    \node[vertex] (u) at (0,0) {$a$};
    
    \node[vertex] (v) at (-1,1) {$b$}
        edge (u);
        
    \node[vertex] (w) at (1,1) {$c$}
        edge (u)
        edge (v);

    \node[vertex] (r) at (0,-1) {$d$}
        edge (u);

    \node[vertex,fill=black] (y) at (2,0) {$\mathcolor{white}{e}$}
        edge (u)
        edge (w);

    \node[vertex] (z) at (3.5,0) {$f$}
        edge (y);

    \node[vertex] (q) at (3.5,-1) {$g$}
        edge (z);

    \node[vertex,fill=black] (p) at (2,-1) {$\mathcolor{white}{h}$}
        edge (r)
        edge (q)
        edge (y);

    \node[vertex] (i) at (3.5,1) {$i$}
        edge (z);

    \node () at (2,-2) {$G_N$};

\end{scope}

\end{tikzpicture}
\caption{An example of $N$ being a Join node with two children $N_1$ and $N_2$.}
\label{fig:join_node}
\end{figure}
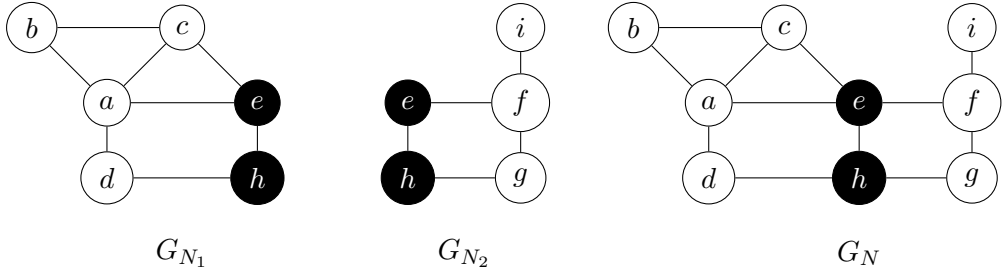
\end{center}

\subsection{FPT algorithm}

We now give a concrete application of the semiring formalism to the constraint satisfaction problem, by giving an {\sf FPT} algorithm when parameterized by the primal tree-width. Note that \textsc{Semiring-Csp} subsumes \textsc{Semiring-Sat}. \rev{It is furthermore known that the $\mathcal{R}$-semiring extensions of \textsc{Sat} are hard in the setting of the complexity class \textsf{NP}($\mathcal{R}$) defined via {\em semiring Turing machines}~\cite{eiter2023semiring}.}
Thus, we should not expect to solve these problems in polynomial time and an \textsf{FPT} algorithm is then highly desirable.

\begin{theorem}\label{thm:semiring_csp}

For every finite set $\Gamma$ of relation over a finite domain $D$, {\sc Semiring-Csp}$(\Gamma)$ is solvable in time $O^*(|D|^{\mathbf{tw}(G)})$ on any instance $\I$, with $G$ the Gaifman graph of $\I$ (assuming that a tree decomposition of $G$ is given).

\end{theorem}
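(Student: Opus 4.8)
The plan is to run a bottom-up dynamic program over a nice tree decomposition $(T,bag)$ of the Gaifman graph $G$ of $\I=(V,C)$, which we may assume is given and (in linear time) made nice with an empty bag at the root. Following the black/white interpretation of Section~\ref{sec:treewidth}, at a node $N$ the graph $G_N$ lives on $descbag(N)$ and its white (forgotten) vertices are $descbag(N)\setminus bag(N)$. The crucial design choice, and the one that makes the $\Join$ operator usable at join nodes, is to index traces by the bag assignment while storing an expression only over the \emph{white} part. Concretely, for every node $N$ and every $\tau\in D^{bag(N)}$ I maintain a legal join/union expression $E_N[\tau]$ over $(V,D)$ with $dom(E_N[\tau])=descbag(N)\setminus bag(N)$ that represents the set of all $g\colon descbag(N)\setminus bag(N)\to D$ such that the combined assignment $g\Join\tau$ satisfies every constraint of $C$ whose variables all lie in $descbag(N)$. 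Since the variables of a constraint form a clique of $G$, the standard property that every clique is contained in some bag guarantees each constraint is eventually taken into account.

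I would then realize the four transitions of Lemma~\ref{lem:bagdescbag}. At a \textbf{Leaf} node $bag(N)=descbag(N)=\emptyset$, so $\tau$ is the empty map and $E_N[\tau]=D^{\emptyset}$. At an \textbf{Introduce}$(v)$ node with child $N'$ the white part is unchanged; writing $\tau=\tau'\Join(v\mapsto d)$ with $\tau'=\tau|_{bag(N')}$, Lemma~\ref{lem:bagdescbag}(3) shows that every newly completed constraint involving $v$ has all its variables in $bag(N)$, so its satisfaction depends only on $\tau$, whence $E_N[\tau]=E_{N'}[\tau']$ if $\tau$ satisfies those constraints and $E_N[\tau]=\emptyset$ otherwise. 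At a \textbf{Forget}$(v)$ node the white part gains $v$ and no new constraint is completed, so $E_N[\tau]=\biguplus_{d\in D}\big((v\mapsto d)\Join E_{N'}[\tau\Join(v\mapsto d)]\big)$, which is a legal disjoint union because distinct $d$ fix $v$ differently. At a \textbf{Join} node with children $N_1,N_2$ the two white parts $descbag(N_1)\setminus bag(N)$ and $descbag(N_2)\setminus bag(N)$ are disjoint by Lemma~\ref{lem:bagdescbag}(4), and the same item shows that no constraint has variables in both parts; hence $E_N[\tau]=E_{N_1}[\tau]\Join E_{N_2}[\tau]$ is both legal and correct.

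Finally, at the root $r$ one has $bag(r)=\emptyset$ and $descbag(r)=V$, so $E_r[\tau]$ (for the unique empty trace $\tau$) is a join/union expression for the entire set of solutions, which the algorithm outputs. Correctness follows by induction on $T$ from the four identities above; the only global point to verify is that at a join node $C_N=C_{N_1}\cup C_{N_2}$ with every constraint checked exactly on the side(s) containing all of its variables, which is precisely where the absence of edges between the two private parts is invoked. For the running time, each $E_N[\tau]$ is assembled in polynomial time per trace, and the number of traces at $N$ is $|D|^{|bag(N)|}\le |D|^{\mathbf{tw}(G)+1}$; since $|D|$ is a constant fixed by $\Gamma$ and a nice tree decomposition has $\|\I\|^{O(1)}$ nodes, the total time is $O^*(|D|^{\mathbf{tw}(G)})$. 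I expect the main obstacle to be exactly the join node: the naïve invariant that stores partial solutions over all of $descbag(N)$ would require a $\Join$ of two expressions whose domains overlap on $bag(N)$, which is illegal; restricting the stored expression to the white part removes this overlap and is the key idea that makes the whole construction go through.
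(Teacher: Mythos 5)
Your proposal is correct, and it follows the same overall strategy as the paper: a bottom-up dynamic program over a nice tree decomposition with an empty root bag, traces given by bag assignments $\tau\in D^{bag(N)}$, and correctness resting on exactly the two consequences of Lemma~\ref{lem:bagdescbag} that you invoke (constraints completed at an \textbf{Introduce} node lie entirely inside the bag, and no constraint straddles the two private parts of a \textbf{Join}). Where you genuinely diverge is in the mechanism for keeping the $\Join$ at join nodes legal. The paper fixes an auxiliary function $\lambda$ assigning each vertex $v$ to one designated \textbf{Introduce}$(v)$ node, stores expressions over $dom(N)=\{v \mid \lambda(v)\in desc(N)\}$ (all forgotten vertices plus those bag vertices designated below $N$), joins in the value $(v\mapsto f(v))$ at the node $\lambda(v)$, and makes \textbf{Forget} a pure disjoint union; disjointness of $dom(N_1)$ and $dom(N_2)$ at a join then follows from uniqueness of $\lambda(v)$. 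You instead store expressions over exactly the white part $descbag(N)\setminus bag(N)$, make \textbf{Introduce} a pure constraint filter, and defer joining in a vertex's value to its \textbf{Forget} node via $\biguplus_{d}\bigl((v\mapsto d)\Join E_{N'}[\tau\Join(v\mapsto d)]\bigr)$; disjointness at a join is then immediate from $descbag(N_1)\cap descbag(N_2)=bag(N)$. Your variant buys simplicity: no arbitrary choice of $\lambda$ and no connectivity argument showing that a forgotten vertex already lies in $dom(child(N))$ (which the paper must prove). The paper's variant buys earlier incorporation of values into expressions, which matches its general trace framework and its \textsc{Sum-Product-Csp} generalization (Algorithm~\ref{algo:tree-width_sum_product}), where constraints must be charged to a unique \textbf{Introduce} node in any case. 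One point you glossed over but which does hold: in your \textbf{Forget} union, the syntactic domains of the nonempty branches agree (each equals the white part of the child plus $v$, since a nonempty semantics forces the syntactic domain to equal the functions' common domain), so the $\uplus$ is legal under Definition~\ref{def:semantic}.
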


\begin{proof}
Let $(T, bag)$ be a nice tree decomposition of the Gaifman graph $G = (V_G, E_G)$ of an instance $\I$ of {\sc Semiring-Csp}$(\Gamma)$ \rev{of treewidth \textbf{tw}$(G)$} \rev{(recall that we can convert a given tree decomposition to a nice tree decomposition in linear time~\cite{bodlaender1991better}).}

\rev{Notice that we can assume, without loss of generality, that $bag(\mathrm{root}(T)) = \emptyset$. This can be achieved by adding \textbf{Forget} nodes above the original root, one for each vertex in $bag(\mathrm{root}(T))$, until all these vertices have been forgotten. Crucially, this transformation does not affect the treewidth, as the newly added bags are strictly smaller than the original root's bag.}

Let $\lambda \colon V_G\to N_T$ be a function which maps every $v\in V_G$ to an arbitrary \textbf{Introduce}$(v)$ node $\lambda(v)\in N_T$ with $v\in\lambda(v)$ (clearly, such a node exists for all $v\in V_G$). The interest of the function $\lambda$ is to prevent the following issue:

In the example of Figure \ref{fig:join_node}, if we had join/union expressions $E_{N_1}$ and $E_{N_2}$ of the set of solutions over $descbag(N_1)$ and $descbag(N_2)$, the join/union expression $E_{N_1}\Join E_{N_2}$ would not be legal, as the domains of $E_{N_1}$ and $E_{N_2}$, which are $descbag(N_1)$ and $descbag(N_2)$, would not be disjoint since their intersection is $bag(N)= \{e,h\}$. To solve the problem, we only express restrictions of such solutions, and $\lambda$ serves to decide (arbitrarly) who (if any) between $N_1$ and $N_2$ takes $e$ and/or $h$ in its domain. \rev{Naturally, any concrete choice of $\lambda$ (e.g., lexicographical) would also work here.}

Keeping this technical issue in mind, we introduce  the following definitions for each node $N\in N_T$:

\begin{itemize}

\item $dom(N):=\{ v\in V_G\mid \lambda(v)\in desc(N) \} \subseteq descbag(N)$.

\item For each $f \colon bag(N)\to D$, 
$$\F_N(f):=
\{ F: descbag(N)\to D \, \mid \,F\text{ is a partial solution  s.t. } F|_{bag(N)} = f\},$$ {\it  i.e.,} for every constraint $C$ that involves only variables in $descbag(N)$, each $F\in\F_N(f)$ satisfies $C$ and

$$\E_N(f):=\{ F|_{dom(N)} \mid F\in\F_N(f) \}.$$

\end{itemize}

According to Remark \ref{rem:trace}, when treating a node $N$, the "trace" of a partial solution $F:descbag(N)\to D$ will be its restriction to the current $bag(N)$ (i.e., its restriction to the ``black vertices'' of $G_N$). Then, $\F_N(f)$ is the set of partial solutions that have the trace $f$. We provide a join/union expression for $\E_N(f)$ instead, as discussed above.

The first item of Remark \ref{rem:trace} is indeed respected, as there are at most $|D|^{\mathbf{tw}(G)}$ possible traces at every node (which is \textsf{FPT} with respect to the primal tree-width).
Since we give a join/union expression of the sets of partial {\em solutions}, the second item of Remark \ref{rem:trace} is irrelevant.

We now justify the third item of Remark \ref{rem:trace} stating that  $\E_N(f)$ can be inductively computed, when given a nice tree decomposition.

\begin{claim}
\rev{
We prove by induction (on the structure of $T$) that every node $N\in N_T$ satisfies the property:}

\hfill

\rev{
$\mathcal{P}(N)$: ``At the end of the treatment of the node $N$ in Algorithm \ref{algo:tree-width_semiring}, for each $f \colon bag(N)\to D$, we have $[E_N(f)]=\E_N(f)$, \rev{where $[E_N(f)]$ refers to the semantic of the expression $E_N(f)$}.''
}
\end{claim}

\begin{proofclaim}
    
\begin{itemize}

    \item[1.] If $N$ is a \textbf{Leaf}, $bag(N)=descbag(N)=dom(N)=\emptyset$: the only function $f$ to consider is $f:\emptyset \to D$. Clearly $\F_N(f)=\E_N(f)= D^{\emptyset}$. Hence, $\mathcal{P}(N)$ is true. 

    \item[2.] If $N$ is of the form \textbf{Forget}($v$) with $v\in bag(child(N))$, assume that $\mathcal{P}(child(N))$ is true.
Recall that $bag(N)=bag(child(N))\setminus \{v\}$ and
note that $descbag(N)=descbag(child(N))$ and $dom(N)=dom(child(N))$. We prove that, for each $f:bag(N)\to D$, we have

$$ \E_N(f) = \underset{\begin{matrix} f' : bag(N)\uplus\{v\} \to D \\ f'|_{bag(N)}=f \end{matrix}}{\biguplus} \E_{child(N)}(f') $$

which implies $\mathcal{P}(N)$ by $\mathcal{P}(child(N))$ and the definition of $E_N(f)$ in Algorithm~\ref{algo:tree-width_semiring}.

Firstly, notice that $v\in dom(N)$ (i.e., $\lambda(v)\in desc(N)$), because otherwise $N$ is on the path between $child(N)$ and $\lambda(v)$ in $T$, and we obtain by definition of a tree decomposition that $bag(child(N)) \cap bag(\lambda(v)) \subseteq bag(N)$, which is a contradiction, since $v\in (bag(child(N)) \cap bag(\lambda(v))) \setminus bag(N)$.

Then, the partitioning of $\E_N(f)$ into equivalence classes of $\sim$ defined by 
$$\forall (F_1,F_2)\in (\E_N(f))^2, F_1\sim F_2 \iff F_1(v)=F_2(v),$$  gives the desired equality.

\item[3.] If $N$ is of the form \textbf{Introduce}($v$) with $v\notin bag(child(N))$, assume that $\mathcal{P}(child(N))$ is true.
Recall that $bag(N)=bag(child(N))\uplus \{v\}$ and that $descbag(N)=descbag(child(N))\uplus \{v\}$, and that

$dom(N)=\left \{\begin{array}{cc}
     dom(child(N)) & \text{ if }\lambda(v)\neq N  \\
     dom(child(N))\uplus \{v\} & \text{ if }\lambda(v)=N 
\end{array} \right.$

Indeed, $v\notin descbag(child(N))$ by Lemma \ref{lem:bagdescbag}, which implies in particular that $v\notin dom(child(N))$.

Recall that proving $\mathcal{P}(N)$  knowing $\mathcal{P}(child(N))$, requires a relation between $\E_N(f)$ and $\E_{child(N)}(f|_{bag(child(N))})$. Since $\E_N(f)$ is defined using $\F_N(f)$, it will be useful to compute $\F_N(f)$.

We prove:

$$\F_N(f) = \left\{ \begin{array}{cc}
     \emptyset & \text{if }f\text{ violates some constraint}  \\
     \F_{child(N)}(f|_{bag(child(N))})\Join \{ v\mapsto f(v)\} & \text{if }f\text{ respects every constraint}
\end{array} \right. $$

Firstly, notice that $\F_{child(N)}(f|_{bag(child(N))})\Join \{v\mapsto f(v)\}$ is a valid operation, since $v\notin descbag(N)$.

If $f$ violates a constraint, every function that has $f$ as a restriction also violates a constraint, and is thus not a partial solution. This justifies that $\F_N(f) = \emptyset$ if $f$ violates a constraint.

If $f$ respects every constraint, notice that every element $F:descbag(N)\mapsto D$ of $\F_N(f)$ is in particular in $\F_{child(N)}(f|_{bag(child(N))})\Join \{ v\mapsto f(v) \}$. Indeed, since $F$ coincides with $f$ on $bag(N)=bag(child(N))\uplus \{v\}$, we have $F(v)=f(v)$, and $F|_{bag(child(N))}$ is a partial solution (it is a restriction of the partial solution $F$) and coincides with $f|_{bag(child(N))}$ on $bag(child(v))$.

Conversely, we prove that every map $F:descbag(N)\to D$ of $$\F_{child(N)}(f|_{bag(child(N))})\Join \{ v\mapsto f(v) \}$$ is a partial solution. 

Notice first that $F|_{bag(child(N))}$ is in $\F_{child(N)}(f|_{bag(child(N))})$, which is a solution. Therefore, $F$ respects every constraint that is not involving $v$. We now prove that $F$ respects every constraint involving $v$.

Let $u\in descbag(N)$ be such that $\{u,v\}$ is an edge of the Gaifman graph $G$. Let $N_u$ be a descendant of $N$ with $u\in bag(N_u)$. By Lemma \ref{lem:bagdescbag}, we have $u\in bag(N)$ and, by definition of the Gaifman graph $G$ of the instance $\I$ of {\sc Csp}, this proves that any constraint over $descbag(N)$ that involves $v$ involves only variables of $bag(N)$.

Since $f \colon bag(N)\to D$ respects the constraints over $bag(N)$ (by assumption), and since $F$ coincides with $f$ on $bag(N)$, this shows that $F$ respects every constraint involving $v$, and thus $F$ respects every constraint over $descbag(N)$. Hence, $F$ is a  partial solution. This proves that $$\F_N(f) = \F_{child(N)}(f|_{bag(child(N))})\Join \{ v\mapsto f(v) \}.$$ 
By restricting to $dom(N)$, we get that:

$$\E_N(f) = \left\{ \begin{array}{cc}
     \emptyset & \text{if }f\text{ violates a constraint}  \\
      & \\
     \E_{child(N)}(f|_{bag(child(N))}) & \begin{matrix} \text{if }f \text{ respects every} \\ \text{constraint and }\lambda(v)\neq N \end{matrix} \\
      & \\
     \E_{child(N)}(f|_{bag(child(N))})\Join \{ v\mapsto f(v)\} & \begin{matrix} \text{if }f\text{ respects every} \\\text{constraint and }\lambda(v) = N \end{matrix}
\end{array} \right. $$

and thus $\mathcal{P}(N)$ conforms to $\mathcal{P}(child(N))$ and the definition of $E_N(f)$ in Algorithm \ref{algo:tree-width_semiring}.

\item[4.] If $N$ is of the form \textbf{Join}($N_1,N_2$), assume that $\mathcal{P}(N_1)$ and $\mathcal{P}(N_2)$ are true.
Recall that $bag(N)=bag(N_1)=bag(N_2)$ and  that $descbag(N) = descbag(N_1) \cup descbag(N_2)$, $bag(N)= descbag(N_1) \cap descbag(N_2)$ by Lemma \ref{lem:bagdescbag}. Note also that $dom(N)=dom(N_1)\uplus dom(N_2)$.

Let $f \colon bag(N)\to D$. We prove that
$\F_N(f) = \F_{N_1,N_2}(f)$,
where
\begin{eqnarray*}
 \F_{N_1,N_2}(f)&:=& \{ F_1|_{descbag(N_1)\setminus bag(N)} \mid F\in \F_{N_1}(f) \}\\
 &\Join& \{ F_2|_{descbag(N_2)\setminus bag(N)} \mid F\in \F_{N_2}(f) \} \Join \{f\}\\
 &=&\F_{N_1}(f) \Join \{ F_2|_{descbag(N_2)\setminus bag(N)} \mid F\in \F_{N_2}(f) \}\\
 &=&\{ F_1|_{descbag(N_1)\setminus bag(N)} \mid F\in \F_{N_1}(f) \} \Join \F_{N_2}(f).
\end{eqnarray*}

The idea is that every pair $(F_1,F_2)\in \F_{N_1}(f)\times \F_{N_2}(f)$ coincides on the intersection of their domain, since by definition, they both coincide with $f$ on $bag(N)=descbag(N_1)\cap descbag(N_2)$. Note that we cannot use the notation $F_1\Join F_2$ since their domain is not disjoint but that the sets $ \E_N(f), \E_{N_1}(f)$ and $\E_{N_2}(f)$ are more convenient to use since $dom(N_1)$ and $dom(N_2)$ are disjoint.

Since a restriction of a partial solution is still a partial solution, every element of $\F_N(f)$ is in $\F_{N_1,N_2}(f)$\rev{, ie. $\F_N(f) \subseteq \F_{N_1,N_2}(f)$ }. We now prove that $\F_{N_1,N_2}(f)\subseteq \F_N(f)$.

By Lemma \ref{lem:bagdescbag}, every edge $\{u,v\}\in E_G$ in $descbag(N)$ can not have an endpoint in $descbag(N_1)\setminus bag(N)$ and the other endpoint in $descbag(N_2)\setminus bag(N)$. So, by definition of the Gaifman graph, every constraint in $descbag(N)$ is either a constraint over $descbag(N_1)$ or over $descbag(N_2)$ (recall that $bag(N)=descbag(N_1)\cap descbag(N_2)$ by Lemma \ref{lem:bagdescbag}. 

Take $F:descbag(N)\to D$ of $\F_{N_1,N_2}(f)$. We have that $F$ coincides with $f$ on $bag(N)=bag(N_1)=bag(N_2)$, and thus $F$ coincides with a partial solution $F_1\in \F_{N_1}(f)$ on $descbag(N_1)$. This proves that $F$ respects every constraint over $descbag(N_1)$. Symmetrically, $F$ respects every constraint over $descbag(N_2)$. We have proven that $F$ respects all the constraints, {\it  i.e.,} $F$ is a partial solution, and thus $F$ is in $\F_N(f)$.
\rev{This proves that $\F_{N_1,N_2}(f) \subseteq \F_N(f)$, and then that $\F_{N_1,N_2}(f) = \F_N(f)$}

We get as a corollary (by restricting to $dom(N)=dom(N_1)\uplus dom(N_2)$) that:

$$ \E_N(f) = \E_{N_1}(f) \Join \E_{N_2}(f) $$

and $\mathcal{P}(N)$ follows.

\end{itemize}

This proves $\mathcal{P}(N)$, for every node $N\in N_T$.

\end{proofclaim}

Now, since $bag(root(T))=\emptyset$, $dom(root(T))=descbag(root(T))=V_G$, $\mathcal{P}(root(T))$ justifies that $[E_{root(T)}(\emptyset\to D)]$ encodes the set of solutions of $\I$.

Lastly, it is straightforward to see that the complexity of Algorithm~\ref{algo:tree-width_semiring} is indeed $O^*(|D|^{\mathbf{tw}(G)})$\rev{, as the only non-polynomial part of the algorithm is the loop iterating over ``\textbf{for} $f: bag(N)\to D$'', for each node $N$. The number of functions to explore for each node $N$ is $|D|^{|bag(N)|} \le |D|^{\textbf{tw}(G)+1}$, since $(T,bag)$ has treewidth \textbf{tw}(G). Recall that $|D|$ is considered to be a constant in regard to the complexity of Algorithm \ref{algo:tree-width_semiring}}. %
\end{proof}

\begin{algorithm}
\caption{Algorithm to construct an expression of the set of solutions on any instance of {\sc Csp}($\Gamma$) in time $|D|^{\mathbf{tw}(G)}$ (with $D$ being the domain of the relations in $\Gamma$).} 
\label{algo:tree-width_semiring}
\KwData{An instance $\I$ of {\sc Csp}($\Gamma$), a nice tree decomposition $T=(N_T,E_T)$ of its Gaifman graph $G$, whose root has empty bag, and a function $\lambda:V_G\to N_T$ where $\forall v\in V_G$, $\lambda(v)$ is an Introduce node with $v\in bag(\lambda(v))$.}
\KwResult{A join/union expression $E_{SOL}$ over $(V_G,D)$ encoding the set of solutions of $\I$.}

$\empty$\\

Run \textbf{Expressions}($root(T)$) and return $E_{root(T)}(\emptyset\to D)$ with \textbf{Expressions}($N$) being defined for each node $N$ of $T$ as:

$\empty$\\

\textbf{Expressions}($N$):\\

\If{$N$ is a leaf ($bag(N)=\emptyset$)}
{
$E_N(\emptyset\to D):= D^{\emptyset}$
}

\If{$N$ is Forget$(v)$}
{
Run $\textbf{Expressions}(child(N))$\\
\For{$f:bag(N)\to D$}
{
$E_N(f):= \emptyset$
}
\For{$f':bag(N)\cup\{v\}\to D$}
{
$E_N(f'|_{bag(N)})\leftarrow E_N(f'|_{bag(N)})\uplus E_{child(N)}(f')$
}
}

\If{$N$ is Introduce$(v)$}
{
Run $\textbf{Expressions}(child(N))$\\
\For{$f:bag(N)\to D$}
{
\If{$f$ respects every constraint and $\lambda(v) = N$}
{
$E_N(f) := E_{child(N)}(f|_{bag(N)\setminus\{v\}}) \Join \{ v\mapsto f(v) \}$
}
\If{$f$ respects every constraint and $\lambda(v) \neq N$}
{
$E_N(f) := E_{child(N)}(f|_{bag(N)\setminus\{v\}})$
}
\If{$f$ violates a constraint}
{
$E_N(f) := \emptyset$
}
}
}
\If{$N$ is Join$(N_1,N_2)$}
{
Run $\textbf{Expressions}(N_1)$ and $\textbf{Expressions}(N_2)$\\
\For{$f:bag(N)\to D$}
{$E_N(f):=E_{N_1}(f) \Join E_{N_2}(f)$
}
}
\end{algorithm}

In particular, for {\sc Semiring}-$H$-{\sc Coloring} we obtain the following corollary.

\begin{corollary}\label{cor:H-coloring_treewidth}

For every graph $H$, {\sc Semiring-}$H$-{\sc Coloring} is solvable in time $O^*(|V_H|^{\mathbf{tw}(G)})$.

\end{corollary}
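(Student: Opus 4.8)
The plan is to derive this corollary directly from Theorem~\ref{thm:semiring_csp} by recognizing $H$-{\sc Coloring} as a concrete instance of the {\sc Csp} framework. First I would recall the observation made just before the statement: the $H$-{\sc Coloring} problem is exactly {\sc Csp}$(\Gamma)$ with the single-relation constraint language $\Gamma = \{E_H\}$ over the finite domain $D = V_H$. An input graph $G$ is translated into the {\sc Csp} instance $\I = (V_G, C)$ whose variables are the vertices of $G$ and whose constraints are $E_H(u,v)$ for each edge $\{u,v\} \in E_G$; a function $f \colon V_G \to V_H$ satisfies all constraints precisely when it is a homomorphism from $G$ to $H$. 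In particular, the set of solutions of this {\sc Csp} instance (as functions from $V_G$ to $V_H$) coincides with the set of certificates of the $H$-{\sc Coloring} instance, so a join/union expression for one is a join/union expression for the other, and {\sc Semiring-}$H$-{\sc Coloring} on $G$ is literally {\sc Semiring-Csp}$(\{E_H\})$ on $\I$.

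The second key point I would invoke is the earlier remark that the Gaifman graph of the {\sc Csp} instance arising from $H$-{\sc Coloring} on input $G$ is $G$ itself: two variables $u,v$ appear together in a constraint exactly when $\{u,v\} \in E_G$. Hence the primal treewidth of $\I$ equals $\mathbf{tw}(G)$. Applying Theorem~\ref{thm:semiring_csp} to $\Gamma = \{E_H\}$ then yields an algorithm solving {\sc Semiring-Csp}$(\{E_H\})$ in time $O^*(|D|^{\mathbf{tw}(G)}) = O^*(|V_H|^{\mathbf{tw}(G)})$, which is exactly the claimed bound for {\sc Semiring-}$H$-{\sc Coloring}.

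There is essentially no technical obstacle here, since all the work is carried by Theorem~\ref{thm:semiring_csp}; the corollary is a specialization rather than a new argument. The only minor point worth stating explicitly is that Theorem~\ref{thm:semiring_csp} assumes a tree decomposition of the Gaifman graph is provided, and since that Gaifman graph is $G$ itself, the hypothesis is simply that a tree decomposition of $G$ is given as part of the input (consistent with the standard convention, noted earlier in the paper, that the decomposition accompanies the instance when parameterizing by treewidth). With this understood, the running time follows immediately and no separate analysis is required.
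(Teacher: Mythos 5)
Your proof is correct and follows exactly the route the paper intends: the corollary is stated as an immediate specialization of Theorem~\ref{thm:semiring_csp}, using the facts (both noted earlier in the paper) that $H$-{\sc Coloring} is {\sc Csp}$(\{E_H\})$ over domain $D = V_H$ and that the Gaifman graph of the resulting instance is the input graph $G$ itself. Your additional remark about the tree decomposition being supplied with the input is consistent with the paper's stated convention and completes the argument.
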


In particular, Corollary \ref{cor:H-coloring_treewidth} implies that $H$-{\sc Coloring}, $\#H$-{\sc Coloring}, {\sc List}-$H$-{\sc Coloring} and {\sc \#Cost}-$H$-{\sc Coloring} can be solved in $O^*(|V_H|^{\mathbf{tw}(G)})$ time. To the best of our knowledge, no general algorithm for counting $H$-colorings of minimal cost was previously known in the literature.

We now indicate how to generalize this algorithm to solve the even more general problem {\sc Sum-Product-Csp} (Algorithm \ref{algo:tree-width_sum_product}). If we restrict Algorithm \ref{algo:tree-width_sum_product} to solve only {\sc \#Csp}, this algorithm is a particular case of an already known algorithm \cite{ganian2016combining}. 
In Theorem \ref{thm:sum-product_CSP_treewidth}, we require that each constraint $(C,x_1,\dots,x_{ar(C)})$ is mapped to a unique \textbf{Introduce} node \rev{$\lambda(C,x_1,\dots,x_{ar(C)})$} such that every vertex of the Gaifman graph (i.e., variable) involved in constraint $C$ belongs to $bag(\rev{\lambda(C,x_1,\dots,x_{ar(C)})})$. This property can always be ensured because such a set of vertices is a clique (complete graph), and because a clique in a graph is necessarily contained in the bag of one node in every tree decomposition.

\begin{theorem}\label{thm:sum-product_CSP_treewidth}

For every finite set $\Gamma$ of constraints over a finite domain $D$, {\sc Sum-Product-Csp}$(\Gamma)$ is solvable in time $O^*(|D|^{\mathbf{tw}(G(\I))})$ on any instance $\I$, where $\mathbf{tw}(G(\I))$ is the treewidth of the Gaifman graph of $\I$ (assuming that a tree decomposition is given).

\end{theorem}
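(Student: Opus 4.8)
The plan is to mirror the dynamic programming of Theorem~\ref{thm:semiring_csp}, but since in {\sc Sum-Product-Csp} the constraints are already semiring-valued we no longer build a join/union expression: we directly accumulate the relevant value of $\mathcal{A}$. Fix a nice tree decomposition $(T,bag)$ of the Gaifman graph $G(\I)$ with $bag(root(T))=\emptyset$, and fix the assignment $\lambda$ sending each constraint $(C_i,x_i^1,\dots,x_i^{ar(C_i)})$ to an \textbf{Introduce} node $\lambda(C_i)$ whose bag contains all of $x_i^1,\dots,x_i^{ar(C_i)}$; such a node exists because the variables of a constraint form a clique of $G(\I)$ and every clique is contained in some bag of any tree decomposition. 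For a map $g\in D^W$ whose domain $W$ contains every variable of $C_i$, I write $C_i[g]:=C_i(g(x_i^1),\dots,g(x_i^{ar(C_i)}))$ for the induced value in $A$. The key quantity I would compute, for every node $N\in N_T$ and every $f\colon bag(N)\to D$, is
\[
v_N(f)\;:=\;\sum_{F\colon (descbag(N)\setminus bag(N))\to D}\ \prod_{\,i\,:\,\lambda(C_i)\in desc(N)} C_i\big[f\Join F\big]\ \in A,
\]
the sum over all assignments of the already-forgotten vertices of the product of exactly the constraints assigned (via $\lambda$) to the subtree rooted at $N$. Each factor is well defined because $\lambda(C_i)\in desc(N)$ forces every variable of $C_i$ into $descbag(N)$.

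Then I would establish four recurrences, one per node type, and prove them simultaneously by structural induction on $T$. For a \textbf{Leaf} the forgotten set and the constraint set are empty, so $v_N(\emptyset)=1_A$. For \textbf{Introduce}$(v)$ the forgotten set is unchanged while the constraints with $\lambda(C_i)=N$ become evaluable and, having all their variables in $bag(N)$, contribute a factor depending only on $f$:
\[
v_N(f)=v_{child(N)}\!\big(f|_{bag(child(N))}\big)\times_A \prod_{\,i\,:\,\lambda(C_i)=N} C_i[f].
\]
For \textbf{Forget}$(v)$ the vertex $v$ migrates from the bag into the forgotten set, so we sum over its possible values, $v_N(f)=\sum_{d\in D} v_{child(N)}\big(f\cup(v\mapsto d)\big)$. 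For \textbf{Join}$(N_1,N_2)$ I claim $v_N(f)=v_{N_1}(f)\times_A v_{N_2}(f)$. Evaluating at the root, where $bag(root(T))=\emptyset$ and $descbag(root(T))=V_G$, yields $v_{root(T)}(\emptyset)=\sum_{F\in D^{V_G}}\prod_{i} C_i[F]$, which is precisely the requested output of {\sc Sum-Product-Csp}$(\Gamma)$.

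The main obstacle is justifying the \textbf{Join} recurrence, and this is exactly where the last assertion of Lemma~\ref{lem:bagdescbag} is indispensable. I must show that the summation index $F$ over $descbag(N)\setminus bag(N)$ factors as a pair over the two disjoint sets $descbag(N_1)\setminus bag(N)$ and $descbag(N_2)\setminus bag(N)$ (disjointness follows from $descbag(N_1)\cap descbag(N_2)=bag(N)$), and that the constraint product factors correspondingly. A factorization failure could only occur if some constraint had variables on both sides; but a constraint's variables form a clique of $G(\I)$, and Lemma~\ref{lem:bagdescbag} guarantees that no edge of $G(\I)$ joins $descbag(N_1)\setminus bag(N)$ to $descbag(N_2)\setminus bag(N)$, so every constraint lives entirely within $descbag(N_1)$ or entirely within $descbag(N_2)$, and $\lambda$ places it in a single subtree. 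Commutativity and distributivity of $\mathcal{A}$ then deliver the product $v_{N_1}(f)\times_A v_{N_2}(f)$. Finally, the running time is $O^*(|D|^{\mathbf{tw}(G(\I))})$: over the $O(|V_G|)$ nodes the only superpolynomial cost is the loop over the at most $|D|^{|bag(N)|}\le |D|^{\mathbf{tw}(G(\I))+1}$ maps $f\colon bag(N)\to D$, each recurrence evaluation using only polynomially many operations of $\mathcal{A}$.
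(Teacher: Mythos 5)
Your proposal is correct and follows essentially the same route as the paper: your quantity $v_N(f)$ is exactly the paper's inductive claim $\mathcal{P}(N)$ (summing over extensions of $f$ to $descbag(N)$ of the product of constraints whose $\lambda$-node lies below $N$), and your four recurrences are precisely the update rules of Algorithm~\ref{algo:tree-width_sum_product}, with the same $O^*(|D|^{\mathbf{tw}(G(\I))})$ accounting. The only difference is that you spell out the inductive justifications (notably the \textbf{Join} factorization via disjointness of $descbag(N_1)\setminus bag(N)$ and $descbag(N_2)\setminus bag(N)$ and distributivity) that the paper leaves implicit in its one-line proof sketch.
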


\begin{proof}
Similarly to Theorem \ref{thm:semiring_csp} we give a proof by induction on every node $N$ of the rooted tree decomposition via the claim:

\begin{claim}

$\mathcal{P}(N):$``For every $f:bag(N)\to D$, in the execution of Algorithm \ref{algo:tree-width_sum_product}, \rev{the variable} $E_N(f)$ contains %
$$\sum\limits_{\tiny\begin{matrix}F:descbag(N)\to D \\ F|_{bag(N)}=f\end{matrix}}\prod\limits_{\tiny\begin{matrix}(C,x_1,\dots,x_{\footnotesize ar(C)}) \text{ constraint with} \\ \{x_1,\dots,x_{\footnotesize ar(C)}\} \subseteq descbag(N) \text{ and} \\ \rev{\lambda(C,x_1,\dots,x_{ar(C)})} \text{ is a descendant of }N\end{matrix}} C(F(x_1),\dots,F(x_{ar(C)})).\text{''}\hfill \qedhere$$

\end{claim}

\end{proof}

\begin{algorithm}
\caption{Algorithm to solve an instance of {\sc Sum-Product-Csp}($\Gamma$) in time $|D|^{\mathbf{tw}(G)}$ (with $D$ being the domain of the relations of $\Gamma$).}\label{algo:tree-width_sum_product}
\KwData{An instance $\I$ of {\sc Sum-Product-Csp}($\Gamma$), a nice tree decomposition $T$ of its Gaifman graph $G$, whose root has empty bag, and such that every constraint \rev{$(C,x_1,\dots,x_{ar(C)})$} is mapped to a unique \textbf{Introduce} node $\rev{\lambda(C,x_1,\dots,x_{ar(C)})}$.}
\KwResult{An element $E_{SOL}$ of the semiring containing $\sum\limits_{F:V_G\to D}\,\, \prod\limits_{(C,x_1,\dots x_{ar(C)}) \text{ constraint}} C(F(x_1),\dots,F(x_{ar(C)} ))$ 
}

$\empty$\\

Run \textbf{Sum-Product}($root(T)$) and return $E_{root(T)}(\emptyset\to D)$ with \textbf{Sum-Product}($N$) being defined for every node $N$ of $T$ as:

$\empty$\\

\textbf{Sum-Product}($N$):\\
\If{$N$ is a leaf ($bag(N)=\emptyset$)}
{
$E_N(\emptyset\to D):=1$
}
\If{$N$ is Forget$(v)$}
{
Run $\textbf{Sum-Product}(child(N))$\\
\For{$f:bag(N)\to D$}
{
$E_N(f):= 0$
}
\For{$f':bag(N)\cup\{v\}\to D$}
{
$E_N(f'|_{bag(N)})\leftarrow E_N(f'|_{bag(N)}) + E_{child(N)}(f')$
}
}

\If{$N$ is Introduce$(v)$}
{
Run $\textbf{Sum-Product}(child(N))$\\
\For{$f:bag(N)\to D$}
{
$E_N(f) := E_{child(N)}(f|_{bag(N)\setminus\{v\}})$

\For{\rev{$(C,x_1,\dots,x_{ar(C)})$} constraint with $\rev{\lambda(C,x_1,\dots,x_{ar(C)})}=N$}
{
$E_N(f) \leftarrow E_N(f)\times \rev{C(f(x_1),\dots,f(x_{ar(C)}))}$
}
}
}
\If{$N$ is Join$(N_1,N_2)$}
{
Run $\textbf{Sum-Product}(N_1)$ and $\textbf{Sum-Product}(N_2)$\\
\For{$f:bag(N)\to D$}
{
$E_N(f)\leftarrow E_{N_1}(f) \times E_{N_2}(f)$
}
}
\end{algorithm}

Moreover, this running time seems to be optimal at least in the case of $k$-{\sc Coloring} and in most cases of $H$-{\sc Coloring}, while giving access to every semiring extension. Indeed, we even have a lower bound for the $k$-{\sc Coloring} problem, that relies on the {\em strong exponential time hypothesis} ({\sf SETH})~\cite{impagliazzo2009}. The {\sf SETH} is a well-known conjecture within complexity theory which states that for every $\varepsilon>0$, there exists $k\ge 3$ such that  {\sc $k$-Sat} problem can not be solved in time $O^*((2-\varepsilon)^n)$, where $n$ is the number of variables.

\begin{theorem}\cite{okrasa2020finegrained}
For all $k\ge 3$ and $\varepsilon>0$, $k$-{\sc Coloring} is not solvable in time $O^*((k-\varepsilon)^{\mathbf{tw}(G)})$ under the {\sf SETH}.
\end{theorem}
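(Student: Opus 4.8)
The plan is to establish this lower bound by a fine-grained reduction from CNF-\textsc{Sat}, following the block-encoding strategy of Lokshtanov, Marx and Saurabh for $q$-\textsc{Coloring} (which is exactly the case $H=K_k$ treated more generally by Okrasa and Rz\c ażewski~\cite{okrasa2020finegrained}). Suppose, for contradiction, that for some fixed $k\ge 3$ and $\varepsilon>0$ there is an algorithm $\mathcal{A}$ solving $k$-\textsc{Coloring} in time $O^*((k-\varepsilon)^{\mathbf{tw}(G)})$. Fixing a target $\delta>0$ depending on $k$ and $\varepsilon$, I would show that $\mathcal{A}$ yields an $O^*((2-\delta)^n)$ algorithm for \textsc{Sat} on $n$ variables, contradicting the {\sf SETH}.

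Given a CNF formula $\varphi$ on variables $x_1,\dots,x_n$ and clauses $C_1,\dots,C_m$, I would build a graph $G$ whose proper $k$-colorings correspond to satisfying assignments, while keeping $\mathbf{tw}(G)$ close to $n/\log_2 k$. The core idea is a \emph{block encoding}: partition the variables into $g=\lceil n/b\rceil$ groups of size $b$, and encode the $2^b$ possible assignments of each group by the colors of a bundle of $\ell$ vertices, choosing $b$ and $\ell$ so that $k^\ell\ge 2^b$ with the rate $b/\ell$ approaching $\log_2 k$. This lets the encoding rate get arbitrarily close to $\log_2 k$ bits per state-vertex, which is what is needed to beat any fixed $\varepsilon$ even when $k$ is not a power of two. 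To emulate inside plain $k$-\textsc{Coloring} the list restrictions the gadgets require, attach a reference clique $K_k=\{z_1,\dots,z_k\}$ to the whole construction: in any proper coloring its vertices receive all $k$ colors, so up to renaming $z_i$ gets color $i$, and any vertex can be confined to a list $L\subseteq[k]$ by joining it to every $z_i$ with $i\notin L$. This clique inflates the treewidth by only the constant $k$.

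The construction then lays out a grid-like structure: for each group a ``state chain'' of bundles, linked by equality gadgets so that every bundle carries the same encoded assignment throughout the graph, and for each clause $C_j$ a verification gadget that reads the relevant state vertices and enforces, via the list mechanism above, that at least one literal of $C_j$ is satisfied. The treewidth is governed by how much state must be remembered simultaneously while sweeping through the clauses: a path decomposition moving clause by clause keeps only the $g$ state bundles in its bags, giving $\mathbf{tw}(G)\le g\ell + O(k) = n\cdot(\ell/b)+O(k) \le \tfrac{n}{\log_2 k}(1+o(1))$. Correctness is routine in both directions: a proper $k$-coloring decodes to an assignment satisfying every clause gadget and hence $\varphi$, and conversely any satisfying assignment extends to a proper coloring.

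Finally, running $\mathcal{A}$ on $G$ takes time $O^*\big((k-\varepsilon)^{\mathbf{tw}(G)}\big)=O^*\big((k-\varepsilon)^{\,n/\log_2 k\,(1+o(1))}\big)=O^*\big(2^{\,n\,\log_2(k-\varepsilon)/\log_2 k\,(1+o(1))}\big)$; since $\log_2(k-\varepsilon)/\log_2 k<1$, taking $\ell$ large enough makes the $o(1)$ small enough that this is $O^*((2-\delta)^n)$ for some $\delta>0$, contradicting the {\sf SETH}. The main obstacle is the gadget engineering in the block-encoding step: one must make the decoding and clause-checking gadgets simultaneously correct while (i) achieving an encoding rate arbitrarily close to $\log_2 k$ for non-power-of-two $k$, and (ii) ensuring the state bundles are the only unbounded contribution to the treewidth. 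This delicate balance is precisely what is carried out in Okrasa and Rz\c ażewski~\cite{okrasa2020finegrained}.
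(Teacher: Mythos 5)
There is nothing in the paper to compare your argument against: Theorem~\cite{okrasa2020finegrained} is imported as a black box (the $k$-\textsc{Coloring} case is originally the lower bound of Lokshtanov, Marx and Saurabh, which Okrasa and Rz\c ażewski generalize to projective cores $H$), so the paper contains no proof of this statement at all. Judged against the known argument in the cited literature, your sketch reconstructs it faithfully: grouping the $n$ variables into blocks of size $b$, encoding the $2^b$ assignments of a block by the $k^{\ell}$ colorings of a bundle of $\ell$ vertices with rate $b/\ell$ approaching $\log_2 k$, simulating list constraints inside plain $k$-\textsc{Coloring} by attaching a reference clique $K_k$, propagating block states with equality gadgets and checking clauses with gadgets that only read state vertices, bounding the width by $g\ell + O(k) \le \tfrac{n}{\log_2 k}(1+o(1))$, and closing with the computation $(k-\varepsilon)^{n/\log_2 k\,(1+o(1))} = O^*\bigl(2^{\,n\log_2(k-\varepsilon)/\log_2 k\,(1+o(1))}\bigr) = O^*((2-\delta)^n)$, which is precisely how the contradiction with the {\sf SETH} is obtained (choosing $\ell$ as a large enough constant depending on $k$ and $\varepsilon$ so the construction stays polynomial). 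The one thing your proposal does not supply is the actual gadget constructions (decoding, equality, clause verification) and the verification that they work while keeping the state bundles as the only unbounded contribution to the width; you explicitly defer this, and it is where all the technical work of the original proof lies. So your text is a correct and well-calibrated plan rather than a self-contained proof, which is an appropriate level of detail given that the paper itself uses the result purely as a citation.
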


This lower bound applies to $H$-{\sc Coloring} problems for a broad class of graphs. Precisely, Theorem \ref{thm:lower_bound_H} applies as long as $H$ is a so-called {\em projective core} on at least $3$ vertices \cite{okrasa2020finegrained}. %

\begin{theorem}\cite{okrasa2020finegrained}\label{thm:lower_bound_H}
If $H$ is a projective core on at least three vertices, then, for every $\varepsilon>0$, $H$-{\sc Coloring} is not solvable in time $O^*((|V_H|-\varepsilon)^{\mathbf{tw}(G)})$ under the {\sf SETH}.
\end{theorem}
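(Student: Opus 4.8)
The plan is to prove this fine‑grained lower bound exactly as one proves the $k$‑\textsc{Coloring} bound stated just above (which is the special case $H=K_{|V_H|}$, since complete graphs on at least three vertices are projective cores): by a reduction from \textsc{Cnf-Sat} that is tight under the \textsf{SETH}~\cite{impagliazzo2009}. Concretely, I would assume for contradiction that for some $\varepsilon>0$ there is an algorithm solving $H$‑\textsc{Coloring} in time $O^*((|V_H|-\varepsilon)^{\mathbf{tw}(G)})$, and use it to decide satisfiability of an $n$‑variable formula in time $O^*((2-\delta)^n)$ for some $\delta>0$, contradicting the \textsf{SETH}. Since $\mathbf{tw}(G)\le\mathbf{pw}(G)$, it suffices to produce instances of small \emph{pathwidth}, which yields the stronger statement and hence the claimed bound.

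For the construction I would, given a formula $\phi$ on variables $x_1,\dots,x_n$ and clauses $c_1,\dots,c_m$, represent a full Boolean assignment by the colouring of a single ``column'' of distinguished \emph{state vertices} of $G$, each ranging over the $|V_H|$ vertices of $H$. The key quantitative point is a packing argument: a group of $\ell$ state vertices carries $\lfloor\ell\log_2|V_H|\rfloor$ bits, so by letting groups grow the number of state vertices needed to encode all $n$ bits tends to $n/\log_2|V_H|$, which is what makes $|V_H|$ itself (rather than a power of two) the correct base of the exponent. The column would then be propagated along a long path by \emph{copy gadgets}, so that $G$ is path‑like and every bag contains essentially only the $\approx n/\log_2|V_H|$ state vertices of one column together with $O(1)$ auxiliary vertices; this gives $\mathbf{pw}(G)\le n/\log_2|V_H|+o(n)$. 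To each clause $c_j$ I would attach a \emph{verification gadget} that is $H$‑colourable precisely when the assignment read off the state vertices satisfies $c_j$.

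The entire reduction rests on realizing the selection, copy, and disjunction relations by bounded‑size $H$‑\textsc{Coloring} gadgets, and this is exactly where the hypothesis that $H$ is a \emph{projective core} enters. Being a core gives no nontrivial endomorphisms, so a vertex can be pinned to a prescribed image; projectivity — every idempotent polymorphism of $H$ is a projection — forces the gadgets to be \emph{rigid}, so that their $H$‑colourings are in bijection with the intended combinatorial objects and no spurious homomorphism collapses the encoding. In algebraic terms, projectivity guarantees that essentially arbitrary relations over $V_H$ are gadget‑definable, which is what lets bounded‑width gadgets simulate arbitrary clauses.

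Putting the pieces together, $\phi$ is satisfiable iff $G\to H$, while $\mathbf{tw}(G)\le\mathbf{pw}(G)\le n/\log_2|V_H|+o(n)$; the hypothetical algorithm would then run in $O^*((|V_H|-\varepsilon)^{n/\log_2|V_H|+o(n)})=O^*((2-\delta)^n)$, because $(|V_H|-\varepsilon)^{1/\log_2|V_H|}<2$, contradicting the \textsf{SETH}. I expect the main obstacle to be the gadget layer itself: one must design, using only the edge relation of $H$, gadgets simulating selection, copying and disjunction, and then prove via projectivity of $H$ that they admit no unintended $H$‑colouring. This rigidity analysis is the delicate, $H$‑specific heart of the argument, carried out in full in~\cite{okrasa2020finegrained}.
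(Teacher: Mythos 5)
This theorem is not proved in the paper at all: it is imported verbatim from Okrasa and Rz\c{a}\.zewski~\cite{okrasa2020finegrained}, so there is no internal proof to compare against. Your outline does follow the actual strategy of that cited work --- a SETH-tight reduction from \textsc{Cnf-Sat}, bit-packing so that roughly $n/\log_2|V_H|$ state vertices per column yield the exponent calculation $(|V_H|-\varepsilon)^{n/\log_2|V_H|+o(n)}=O^*((2-\delta)^n)$, low-pathwidth propagation of the column, and the use of core-ness (pinning) plus projectivity (gadget-definability of relations, via the polymorphism/pp-definability connection) to build rigid clause and copy gadgets. However, as you yourself acknowledge, the construction and correctness of those gadgets is the entire mathematical substance of the theorem, and your proposal defers exactly that part to~\cite{okrasa2020finegrained}; so what you have is a correct and well-calibrated roadmap of the known proof, not an independent proof of the statement.
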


Thus running time obtained in Corollary \ref{cor:H-coloring_treewidth} (and even more so in Theorem \ref{thm:semiring_csp} and Theorem \ref{thm:sum-product_CSP_treewidth}) cannot not be asymptotically improved under the {\sf SETH}. Note also that, asymptotically, almost all graphs are projective cores \cite{hell1992core,luczak2004note,okrasa2020finegrained}, and thus Theorem \ref{thm:lower_bound_H} applies to almost all graphs.

\section{Conclusion}\label{sec:discussion}

In this article we explored semiring extensions for computational problems. We proposed a general approach (Section~\ref{sec:semirings}) that does not require particular a semiring structure, and that is applicable to any problem with a reasonable notion of a certificate (e.g., any problem in ${\sf NP}$). In Section~\ref{sec:Delta-product} we proceeded to defining a novel operation on semirings, the $\Delta$-product, which increases the scope of semiring algorithms and, in particular, solves the problem of counting solutions of minimal cost. To illustrate our framework, we turned our attention to two well-known problems with a vast number of applications:  {\sc Connected-Dominating-Set} and finite-domain \textsc{Csp}s. For the former, we successfully gave an algorithm that outputs a join/union expression in \textsf{FPT} time when parameterized by the clique-width of the input graph (Section~\ref{sec:dom-set}). Despite being a problem with many practical applications~\cite{10.5555/2412083},  the counting extension of this problem was not known to be \textsf{FPT}. Moreover, we proved a similar result for the \textsc{Csp} problem parameterized by primal treewidth, and showed how to extend the algorithm to the more general \textsc{Sum-Product-Csp} problem (Section~\ref{sec:Csp}). 

Let us now discuss some potential directions of future research. 

\subsection{Algorithmic applications}

We have seen that existing algorithms in the literature (e.g., the algorithm by Ganian et al.~\cite{ganian2016combining} for \#{\sc Csp}) can be generalized (under some assumptions) to semiring extensions. Is this a common behaviour? Are there more examples of algorithms in the literature, whether it is for the counting, list, or cost version, which can be easily adapted to solving the semiring extension? Or are there problems where, e.g., counting the number of solutions of minimal cost is prohibitively more computationally expensive than merely counting solutions?

As discussed at the end of Section \ref{sec:dom-set}, our work opens up the study of a novel {\em ring formalism} built on {\em ring-expressions}, where a new operation $\setminus$ corresponding to the set difference is introduced. Thus, the semantics of $[E_1\setminus E_2]$ for two legal ring expressions $E_1$ and $E_2$ such that  $[E_2]\subseteq [E_1]$, would be $[E_1]\setminus [E_2]$. Indeed, it is not hard to verify that if a semiring measure takes its value in a ring instead, Lemma~\ref{lem:expression_computation} still applies even if  a ring-expression  is given (by mapping $\setminus$ to the additive inverse $-_A$ of the ring $A$), which enables similar algorithmic applications. 

From a complexity perspective, it would be interesting to unify problems of the form \textsc{Semiring}-$\mathbf{\Pi}$ with $\mathbf{\Pi}$ in a new complexity class and, similarily, counting, optimizing, and counting solutions of optimal costs to \textsf{NP} problems belonging to the classes \#\textsf{P}, {\sf OptP} and {\sf \#$\cdot$OptP}. Moreover, a convenient notion of ``reduction'' preserving this complexity class would be desirable, as well as a completeteness notion for \textsc{Semiring} problems. Note that some \#\textsf{P}-complete  problems of these classes are extensions of \textsf{NP} problems that are not \textsf{NP}-complete (unless \textsf{P=NP}). For instance, counting the number of perfect matchings of a bipartite graph is known to be \#\textsf{P}-complete~\cite{DBLP:journals/tcs/Valiant79}, whereas deciding if one exists is in \textsf{P}. Does there also exist easy problems whose semiring extensions become hard?

\subsection{Sum-Product CSP}

Even though our formalism extends decision problems (corresponding to the Boolean semiring $\B$) to arbitrary commutative semirings, our formalism does not subsume ``soft constraints'' allowed, e.g., in the {\sc Sum-Product-Csp} problem. Is there a possibility that any candidate solution $f$, instead of being either a solution ($f\in SOL$) or not  a solution ($f\notin SOL$), could be mapped to a semiring value instead? Then, the formalism of ``hard constraints'' presented in this article would correspond to the particular case where $f\in SOL$ if and only if $f$ is mapped to $\top$ (and $f\notin SOL$ if and only if $f$ is mapped to $\bot$).

\subsection{Combinatoric and algebraic results}

The definition of a join/union expression and its algorithmic applications raises the question of which sets of functions can be expressed efficiently \rev{via a join/union expression}. In particular, Theorem \ref{thm:no_semiring_permutations} and Theorem \ref{thm:no_semiring_cycles} suggest that the sets $\mathfrak{S}_n$ and $\mathfrak{C}_n$ can not be represented by an expression of polynomial size. Proving or disproving Conjecture \ref{conj:no_semiring_permutations} and Conjecture \ref{conj:no_semiring_cycles} would likely provide useful tools to better understand join/union expressions.
As a first step, it would be interesting to see whether we could determine at least whether Conjecture~\ref{conj:no_semiring_permutations} implies Conjecture \ref{conj:no_semiring_cycles}, and vice versa. Note that we indeed have the bijection:

$$\Phi:\begin{matrix} \mathfrak{S}_n & \to & \mathfrak{C}_{n+1} \\ \sigma & \mapsto & (\sigma(1)\ \sigma(2)\ \dots\ \sigma(n)\ n+1)
\end{matrix}$$

for all $n\ge 1$, whose inverse is 

$$\Psi:\begin{matrix} \mathfrak{C}_{n+1} & \to & \mathfrak{S}_n \\ c & \mapsto & \begin{pmatrix}
    1 & 2 & \dots & n \\ c(n+1) & c^2(n+1) & \dots & c^n(n+1)
\end{pmatrix}.
\end{matrix}$$

However, it is not clear if the bijection $\Phi$ (respectively $\Psi$) can be used to derive a join/union expression of $\mathfrak{C}_{n+1}$ (respectively $\mathfrak{S}_n$) from a join/union expression of $\mathfrak{S}_n$ (respectively $\mathfrak{C}_{n+1}$).

\subsection{Sequential join/union expressions}
It may be useful to introduce {\em sequences of join/union expressions}, {\it  i.e.,} sequences of the form $(\mathcal{E}_0,\dots,\mathcal{E}_m)$ with $m\ge 0$,  and where $\mathcal{E}_0$ is a set of join/union expressions and, for all $i\in [m]$, $\mathcal{E}_i$ is a set of join/union expressions whose leaves can also be expressions of $\mathcal{E}_j$ with $j<i$. We may then obtain an  result analogous  to Lemma \ref{lem:expression_computation} by iteratively computing  $\mu([E])$ for $E\in\mathcal{E}_0$, and then $\mu([E])$ for $E\in\mathcal{E}_1,\dots$. This may become relevant since, even if a join/union expression $E$ is repeated multiple times, it would not be necessary to compute $\mu([E])$ several times, similarly to dynamic programming. 

Considering that Theorem \ref{thm:no_semiring_permutations} and Theorem \ref{thm:no_semiring_cycles} also apply to sequential join/union expressions (since their proofs only exploit algorithmic applications), it could be interesting to answer Conjecture \ref{conj:no_semiring_permutations} and Conjecture \ref{conj:no_semiring_cycles} when sequential join/union expressions are considered.

\begin{reve}
\subsection{Harder problems}
Last, although we have concentrated on problems in NP, there is nothing stopping us from defining semiring extensions for richer complexity classes. As a starting point one could consider algorithms in the literature which are able to count solutions of minimal cost, since such algorithms could conceivably be generalized to the semiring setting. For example, the \textsc{projected SAT} problem mentioned in Example~\ref{ex:projected} is \#NP-hard and it is likely straightforward to extend the treewidth algorithm by Fichte et al.~\cite{FICHTE2023103810} to take minimality into account. Similarly, beyond NP we find the polynomial hierarchy culminating in PSpace, where the canonical problems is the {\em quantified Boolean formula} problem. Here, there are efficient model counting algorithms based on knowledge compilation~\cite{DBLP:conf/stacs/CapelliM19}, and it would be interesting to see if similar techniques are applicable to semiring extensions for PSpace-complete problems.
\end{reve}

\section*{Acknowledgement}

The authors wish to thank 
Pierre Charbit, Eun Jung Kim and Noleen Köhler
for the  many fruitful discussions. We are particularly thankful for their ideas and  suggestions for improving the contents of the paper. We also thank the two anonymous reviewers for their extensive feedback.

The first author is partially supported by the ENS de Lyon.

The second author was partially supported by TAILOR, a project funded by EU
Horizon 2020 research and innovation program under GA No 952215.

Additionally, the third author was partially supported by the Swedish Research Council (VR) under grants VR-2019-03690 and VR-2022-03214.

\bibliographystyle{elsarticle-harv} 
\bibliography{biblio}

\appendix

\section{Proofs of Section \ref{sec:Delta-product}}\label{app:ProofsDeltaProduct}

We recall and prove Theorem \ref{thm:DeltaProductSemiring}.

\DeltaProductSemiring*

\begin{proof}

Note that we can reformulate the definition of $\oplus$ as:

$\oplus: \begin{matrix} (D\Delta A)^2 & \to & D\Delta A \\ ((d_1,a_1),(d_2,a_2)) & \mapsto & \left\{ \begin{array}{cc}
     (d_1,a_1) & \text{ if } d_1 <_D d_2  \\
     (d_2,a_2) & \text{ if } d_2 <_D d_1 \\
     (d_1,a_1+_A a_2) & \text{ if } d_1=d_2
\end{array} \right\} \end{matrix}$

by Property \ref{prop:minimum}.

First, we need to prove that $D\Delta A$ is indeed {\em stable} by $\oplus$ and $\otimes$, \rev{i.e., that applying the operations to two elements of the set gives an element of the set.}

\begin{itemize}
    
    \item We prove that $D\Delta A$ is stable by $\oplus$.
    
    Let $((d_1,a_1),(d_2,a_2))\in (D\Delta A)^2$.
    
    \begin{itemize}
    
        \item Assume that $d_1\neq d_2$. Then $(d_1,a_1)\oplus (d_2,a_2)\in\{(d_1,a_1),(d_2,a_2)\}$. It follows that $(d_1,a_1)\oplus (d_2,a_2)\in D\Delta A$.
        
        \item Assume that $d_1=d_2$ is multiplicatively regular. Then, $(d_1,a_1)\oplus (d_2,a_2)=(d_1,a_1+_Aa_2) \in D\Delta A$ because $d_1$ is regular.
    
        \item Assume that $d_1=d_2$ is not multiplicatively regular. Then, since $((d_1,a_1),(d_2,a_2))\in (D\Delta A)^2$, it follows that $a_1=a_2=0_A$. Then, $(d_1,a_1)\oplus (d_2,a_2)=(d_1,a_1+_Aa_2) \in D\Delta A$ because $a_1+_Aa_2=0_A+_A0_A = 0_A$.
    
    \end{itemize}
    
    $D\Delta A$ is stable by $\oplus$.
    
    \item We prove that $D\Delta A$ is stable by $\otimes$.
    
    Let $((d_1,a_1),(d_2,a_2))\in (D\Delta A)^2$.
    
    \begin{itemize}
        
        \item Assume that $d_1$ and $d_2$ are multiplicatively regular. Then, by Property \ref{prop:mult regular}, $d_1+_Dd_2$ is multiplicatively regular, and thus $(d_1,a_1)\otimes (d_2,a_2)=(d_1+_Dd_2,a_1\times_Aa_2)\in D\Delta A$.
        
        \item Assume that $d_1$ is not multiplicatively regular. Then since $(d_1,a_1)\in D\Delta A$, $a_1=0_A$, and thus $(d_1,a_1)\otimes (d_2,a_2) = (d_1+_Dd_2,a_1\times a_2) \in D\Delta A$ because $a_1\times_Aa_2=0_A\times a_2=0_A$.
        
        \item Similar reasoning and conclusion if $d_2$ is not multiplicatively regular.
        
    \end{itemize}
    
    $D\Delta A$ is stable by $\otimes$.

    \item $\oplus$ and $\otimes$ are clearly commutative by the commutativity of $\min_D,+_D,+_A$ and $\times_A$.

    \item We now prove the associativity of $\oplus$:
    
    Let $((d_1,a_1),(d_2,a_2),(d_3,a_3))\in(D\Delta A)^3$. We prove that $(d_1,a_1)\oplus((d_2,a_2)\oplus(d_3,a_3)) = ((d_1,a_1)\oplus(d_2,a_2))\oplus (d_3,a_3)$.
    
    We only prove the result when $d_1\leq_Dd_2\leq_Dd_3$. \rev{The other cases can then be obtained using the commutativity.} %
    
    \begin{itemize}
    
    \item Assume that $d_1<_Dd_2$.
Then there exists $a$ such that $(d_2,a_2)\oplus(d_3,a_3)=(d_2,a)$. Indeed, we take $a:=a_2$ if $d_2<d_3$ and $a:=a_2+_Aa_3$ if $d_2=d_3$.
       Hence, $$(d_1,a_1)\oplus((d_2,a_2)\oplus(d_3,a_3)) = (d_1,a_1)\oplus (d_2,a) = (d_1,a_1)$$
and    
    $$((d_1,a_1)\oplus(d_2,a_2))\oplus (d_3,a_3) = (d_1,a_1)\oplus (d_3,a_3) = (d_1,a_1),$$ because $d_1<_Dd_3$.
     
    \item Assume that $d_1=d_2<_Dd_3$.   Then $$(d_1,a_1)\oplus((d_2,a_2)\oplus(d_3,a_3)) = (d_1,a_1)\oplus (d_2,a_2) = (d_1,a_1+_Aa_2)$$
    and
    $$((d_1,a_1)\oplus(d_2,a_2))\oplus (d_3,a_3) = (d_1,a_1+_Aa_2)\oplus (d_3,a_3) = (d_1,a_1+_Aa_2).$$
    
    \item Assume that $d_1=d_2=d_3$.
    Then, \begin{eqnarray*}(d_1,a_1)\oplus((d_2,a_2)\oplus(d_3,a_3)) &=& (d_1,a_1)\oplus (d_2,a_2+_Aa_3)\\ 
    &=& (d_1,a_1+_A(a_2+_Aa_3))
    \end{eqnarray*} 
    and 
    \begin{eqnarray*}
      ((d_1,a_1)\oplus(d_2,a_2))\oplus (d_3,a_3) &=& (d_1,a_1+_Aa_2)\oplus (d_3,a_3)\\
      &=& (d_1,a_1+_A(a_2+_Aa_3))\\
      &=& (d_1,(a_1+_Aa_2)+_Aa_3),  
    \end{eqnarray*} by the associativity of $+_A$.
    \end{itemize}

    \item We now prove the associativity of $\otimes$.    
    Let $((d_1,a_1),(d_2,a_2),(d_3,a_3))\in(D\Delta A)^3$. Then
    \begin{eqnarray*} (d_1,a_1)\otimes((d_2,a_2)\otimes(d_3,a_3))&=&(d_1,a_1)\otimes(d_2+_Dd_3,a_2\times_Aa_3)\\
    &=&(d_1+_D(d_2+_Dd_3),a_1\times_A(a_2\times_Aa_3))\\ 
    &=& ((d_1+_Dd_2)+_Dd_3,(a_1\times_Aa_2)\times_Aa_3) \\
    &=& (d_1,a_1)\otimes((d_2,a_2)\otimes(d_3,a_3))\\
    &=&(d_1,a_1)\otimes(d_2+_Dd_3,a_2\times_Aa_3)\\
    &=&(d_1+_D(d_2+_Dd_3),a_1\times_A(a_2\times_Aa_3))\\ &=& ((d_1+_Dd_2)+_Dd_3,(a_1\times_Aa_2)\times_Aa_3)\\
    &=&((d_1,a_1)\otimes(d_2,a_2))\otimes(d_3,a_3),
    \end{eqnarray*}
    by associativity of $+_D$ and $\times_A$.
    
    \item $(\infty_D,0_A)$ is clearly neutral for $\oplus$ because $\infty_D$ is neutral for $\min_D$, and $0_A$ is neutral for $+_A$.
    
    \item $(0_D,1_A)$ is clearly neutral for $\otimes$, because $0_D$ is neutral for $+_D$, and because $1_A$ is neutral for $\times_A$.
    
    \item $(\infty_D,0_A)$ is clearly absorbing for $\otimes$, because $\infty_D$ is absorbing for $+_D$ and because $0_A$ is absorbing for $\times_A$.
    
    \item We now prove that $\otimes$ is distributive over $\oplus$.
    
    Let $((d_1,a_1),(d_2,a_2),(d_3,a_3))\in (D\Delta A)^3$. On the one hand,
    \begin{eqnarray*}
       ((d_1,a_1)\oplus(d_2,a_2))&\otimes& (d_3,a_3) \\
       &=& \left\{\begin{array}{cc}
         (d_1,a_1) & \text{ if }d_1<_Dd_2 \\
         (d_2,a_2) & \text{ if }d_2<_Dd_1 \\
         (d_1,a_1+_Aa_2) & \text{ if }d_1=d_2
    \end{array} \right\} \otimes (d_3,a_3)\\
    &=&\left\{\begin{array}{cc}
         (d_1+_Dd_3,a_1\times_Aa_3) & \text{ if }d_1<_Dd_2 \\
         (d_2+_Dd_3,a_2\times_Aa_3) & \text{ if }d_2<_Dd_1 \\
         (d_1+_Dd_3,(a_1+_Aa_2)\times_A a_3) & \text{ if }d_1=d_2
    \end{array} \right\}.
    \end{eqnarray*}
    On the other hand,
\begin{eqnarray*}((d_1,a_1)&\otimes&(d_3,a_3))\oplus((d_2,a_2)\otimes(d_3,a_3))\\
&=& (d_1+_Dd_3,a_1\times_1a_3)\oplus (d_2+_Dd_3,a_2\times_Aa_3)\\
&=& \left\{\begin{array}{cc}
         (d_1+_Dd_3,a_1\times_Aa_3) & \text{ if }d_1+_Dd_3<_Dd_2+_Dd_3 \\
         (d_2+_Dd_3,a_2\times_Aa_3) & \text{ if }d_2+_Dd_3<_Dd_1+_Dd_3 \\
         (d_1+_Dd_3,(a_1\times_Aa_3)+_A(a_2\times a_3)) & \text{ if }d_1+_Dd_3=d_2+_Dd_3
    \end{array} \right\}.
    \end{eqnarray*}
    \begin{itemize}
        
        \item Assume that $d_3$ is multiplicatively regular. Then, by Property \ref{prop:multiplicatively regular order} (second item) \rev{and the fact that the three conditions stated on the right-hand sides below are mutually exclusive}, 
        \begin{eqnarray*}
        d_1<_Dd_2&\iff& d_1+_Dd_3<_Dd_2+_Dd_3,\\ 
        d_2<_Dd_1&\iff& d_2+_Dd_3<_Dd_1+_Dd_3, \text{ and}\\ d_1=d_2&\iff& d_1+_Dd_3=d_2+_Dd_3.
        \end{eqnarray*}
        Since $\times_A$ is distributive over $+_A$, we also have $$(a_1+_Aa_2)\times_A a_3 = (a_1\times_Aa_3)+_A(a_2\times a_3).$$ 
        This proves that $((d_1,a_1)\oplus(d_2,a_2))\otimes (d_3,a_3) = ((d_1,a_1)\otimes(d_3,a_3))\oplus((d_2,a_2)\otimes(d_3,a_3))$.
    
        \item Assume that $d_3$ is not multiplicatively regular. Since $(d_3,a_3)\in D\Delta A$, $a_3=0_A$.
    
        \rev{Up to using the commutativity,} we may assume that $d_1\leq_D d_2$. %
        Then, by Property \ref{prop:multiplicatively regular order} (first item), $d_1+_Dd_3\leq_Dd_2+_Dd_3$.
        Since $a_3=0_A$, we deduce that:
       \begin{eqnarray*} 
        ((d_1,a_1)\oplus(d_2,a_2))\otimes (d_3,a_3)&=& (d_1+_Dd_3,0_A)\\
    (d_1,a_1)\otimes(d_3,a_3))\oplus((d_2,a_2)\otimes(d_3,a_3))&=&(d_1+_Dd_3,0_A).
    \end{eqnarray*}
        Hence, we have proved that
        $$((d_1,a_1)\oplus(d_2,a_2))\otimes (d_3,a_3)=((d_1,a_1)\otimes(d_3,a_3))\oplus((d_2,a_2)\otimes(d_3,a_3)),$$
     thus showing that 
        $\otimes$ is indeed distributive over $\oplus$.\qedhere
    \end{itemize}
\end{itemize}

\end{proof}

\LexOrder*

\begin{proof}

Since $\D_1$ and $\D_2$ are semirings, $\D_1\Delta\D_2$ is a semiring. It is idempotent: indeed, let $(d_1,d_2)\in D_1\times D_2$, then $(d_1,d_2)\oplus (d_1,d_2) \underset{\text{Because }d_1=d_1}{=} (\min_1(d_1,d_1),\min_2(d_2,d_2))=(d_1,d_2)$ by idempotence of $\D_1$ and $\D_2$.

It only remains to prove that the relation $\leq_{\mathcal{D}_1\Delta\mathcal{D}_2}$ defined by:

$\forall ((a_1,a_2),(b_1,b_2))\in (D_1\times D_2)^2, (a_1,a_2)\leq_{\mathcal{D}_1\Delta\mathcal{D}_2} (b_1,b_2) \iff \exists (c_1,c_2)\in D_1\times D_2, (a_1,a_2)= (b_1,b_2)\oplus(c_1,c_2)$

is the same relation as $\leq_{\D_1 lex \D_2}$, defined as the lexicographical order with respects to $\leq_{\D_1}$ and $\leq_{\D_2}$.

Indeed, since $\leq_{\D_1 lex \D_2}$ is a total order (because $\leq_{\D_1}$ and $\leq_{\D_2}$ are), this will prove that $\D_1\Delta\D_2$ is an idempotent totally-ordered dioid.

Let $((a_1,a_2),(b_1,b_2))\in (D_1\times D_2)^2$.

\begin{itemize}

    \item Assume that $(a_1,a_2)\leq_{\mathcal{D}_1\Delta\mathcal{D}_2} (b_1,b_2)$, {\it  i.e.,} $\exists (c_1,c_2)\in D_1\times D_2, (a_1,a_2)= (b_1,b_2)\oplus(c_1,c_2)$.

    \begin{itemize}
    
        \item If $b_1<_{\D_1}c_1$, then $(a_1,a_2)=(b_1,b_2)$. In particular, $(a_1,a_2)\leq_{\mathcal{D}_1lex\mathcal{D}_2} (b_1,b_2)$.
    
        \item If $c_1<_{\D_1}b_1$, then $(a_1,a_2)=(c_1,c_2)<_{\D_1lex\D_2}(b_1,b_2)$.

        \item If $b_1=c_1$, then $(a_1,a_2)=(b_1,\min_2(b_2,c_2))\leq_{\D_1lex\D_2}(b_1,b_2)$

    \end{itemize}
    
    \item Assume that $(a_1,a_2)\leq_{\D_1 lex\D_2}(b_1,b_2)$.

    \begin{itemize}
    
        \item If $a_1<_{\D_1}b_1$, then $(a_1,a_2) = (b_1,b_2)\oplus (a_1,a_2)$, which proves that $(a_1,a_2)\leq_{\D_1\Delta\D_2}(b_1,b_2)$.
    
        \item Else, $a_1=b_1$ and $a_2\leq_{\D_2}b_2$. Then, $(a_1,a_2) = (b_1,b_2)\oplus (a_1,a_2)$, which proves that $(a_1,a_2)\leq_{\D_1\Delta\D_2}(b_1,b_2)$.

    \end{itemize}
    
\end{itemize}

$\leq_{\D_1\Delta\D_2}$ and $\leq_{\D_1lex\D_2}$ are the same relation, which concludes the proof.

\end{proof}

\begin{lemma}\label{lemma:tuple of regular}

Let $(a_1,a_2)\in D_1\Delta D_2$. Then, $(a_1,a_2)\in reg(\D_1\Delta\D_2)$ if and only if $a_1\in reg(\D_1)$ and $a_2\in reg(\D_2)$.

\end{lemma}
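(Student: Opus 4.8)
The plan is to exploit the fact that, by the definition of the $\Delta$-product, the multiplication $\otimes$ of $\mathcal{D}_1\Delta\mathcal{D}_2$ acts componentwise: for $(d_1,d_2),(e_1,e_2)\in D_1\Delta D_2$ we have $(d_1,d_2)\otimes(e_1,e_2)=(d_1+_1 e_1,\,d_2+_2 e_2)$, its zero is $(\infty_1,\infty_2)$ and its unit is $(0_1,0_2)$. Recall that, viewing a dioid $\mathcal{D}_i$ as a semiring, its zero is $\infty_i$, its unit is $0_i$, and $+_i$ is the semiring product, for which $\infty_i$ is absorbing. Thus regularity in $\mathcal{D}_1\Delta\mathcal{D}_2$ should reduce, via the componentwise product, to two separate cancellation conditions. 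The one genuine subtlety, which I will keep in mind throughout, is that $D_1\Delta D_2$ is only a \emph{subset} of $D_1\times D_2$, namely $\{(d_1,d_2): d_1\in reg(\mathcal{D}_1)\text{ or }d_2=\infty_2\}$, so the cancellation law for $(a_1,a_2)$ may only be tested against pairs lying in this subset.

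For the direction $(\Leftarrow)$ I would argue directly. Assuming $a_1\in reg(\mathcal{D}_1)$ and $a_2\in reg(\mathcal{D}_2)$, first $a_1\neq\infty_1$, so $(a_1,a_2)\neq(\infty_1,\infty_2)$ is nonzero. If $(a_1,a_2)\otimes(b_1,b_2)=(a_1,a_2)\otimes(c_1,c_2)$, then componentwise $a_1+_1 b_1=a_1+_1 c_1$ and $a_2+_2 b_2=a_2+_2 c_2$; applying regularity of $a_1$ in $\mathcal{D}_1$ and of $a_2$ in $\mathcal{D}_2$ gives $b_1=c_1$ and $b_2=c_2$, so $(a_1,a_2)\in reg(\mathcal{D}_1\Delta\mathcal{D}_2)$.

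The more delicate direction $(\Rightarrow)$ is where the subset constraint bites, and the idea is to produce explicit witnesses guaranteed to lie in $D_1\Delta D_2$ by fixing the ``other'' coordinate to an identity. Suppose $(a_1,a_2)\in reg(\mathcal{D}_1\Delta\mathcal{D}_2)$. To recover $a_1\in reg(\mathcal{D}_1)$: if $a_1=\infty_1$, then membership in $D_1\Delta D_2$ forces $a_2=\infty_2$, making $(a_1,a_2)$ the zero of $\mathcal{D}_1\Delta\mathcal{D}_2$, contradicting regularity; hence $a_1\neq\infty_1$. For cancellation, given $a_1+_1 b_1=a_1+_1 c_1$, I test against the pairs $(b_1,\infty_2)$ and $(c_1,\infty_2)$, both in $D_1\Delta D_2$ since their second coordinate is the semiring zero $\infty_2$. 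Because $\infty_2$ is absorbing for $+_2$, both products collapse to $(a_1+_1 b_1,\infty_2)=(a_1+_1 c_1,\infty_2)$, so regularity of $(a_1,a_2)$ yields $(b_1,\infty_2)=(c_1,\infty_2)$, i.e. $b_1=c_1$. Symmetrically, to recover $a_2\in reg(\mathcal{D}_2)$ I would test against pairs $(0_1,b_2),(0_1,c_2)$, which lie in $D_1\Delta D_2$ since $0_1\in reg(\mathcal{D}_1)$ (the multiplicative identity is always regular, as $0_1\neq\infty_1$ and it cancels trivially); fixing the first coordinate to $0_1$ keeps the second coordinate equal to $a_2+_2 b_2$, so cancellation in the product transfers to $a_2$. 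The same witnesses, taken with $b_2\neq c_2$ (which exist since $\infty_2\neq 0_2$), also rule out $a_2=\infty_2$, since in that case both products would equal $(a_1,\infty_2)$ regardless of $b_2,c_2$.

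The main obstacle is precisely the restriction to $D_1\Delta D_2$: one cannot invoke coordinatewise cancellation against arbitrary pairs, because a pair $(b_1,b_2)$ need not belong to the carrier. The resolution is to choose witnesses of the shape $(\,\cdot\,,\infty_2)$ and $(0_1,\,\cdot\,)$, which are always admissible and neutralize the untouched coordinate through the absorbing and identity laws. One could alternatively package the forward direction using Property~\ref{prop:mult regular} via the factorization $(a_1,a_2)=(a_1,0_2)\otimes(0_1,a_2)$, but since $(a_1,0_2)\in D_1\Delta D_2$ already presupposes $a_1\in reg(\mathcal{D}_1)$, the direct witness argument is cleaner and avoids circularity.
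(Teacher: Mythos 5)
Your proof is correct, and in the forward direction it is actually more careful than the paper's own argument. Both proofs follow the same basic strategy --- componentwise multiplication plus explicitly chosen test elements for cancellation --- but they differ in the choice of witnesses, and the difference matters. For recovering $a_1 \in reg(\D_1)$, the paper tests $(a_1,a_2)$ against the pairs $(b_1,0_2)$ and $(c_1,0_2)$, whose second coordinate is the multiplicative unit of $\D_2$; since $0_2 \neq \infty_2$, such a pair belongs to the carrier $D_1 \Delta D_2$ only when $b_1 \in reg(\D_1)$, which is precisely what is being proved --- so the paper's cancellation step is applied to elements that need not lie in the semiring at all. Your witnesses $(b_1,\infty_2)$ and $(c_1,\infty_2)$ avoid this defect: they lie in $D_1 \Delta D_2$ unconditionally (their second coordinate is the zero of $\D_2$ viewed as a semiring), and absorption of $\infty_2$ under $+_2$ still transfers the hypothesis $a_1 +_1 b_1 = a_1 +_1 c_1$ into equality of the two products, so cancellation applies legitimately. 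For the second coordinate, your choice $(0_1,\cdot)$ coincides with the natural reading of the paper's ``similar argument,'' and your observation that $0_1 \in reg(\D_1)$ is exactly what makes those witnesses admissible. Finally, you verify the nonzero side conditions that the definition of $reg$ demands --- $(a_1,a_2) \neq (\infty_1,\infty_2)$ in one direction, and $a_1 \neq \infty_1$, $a_2 \neq \infty_2$ in the other (the latter via the nice trick of two distinct second coordinates collapsing to the same product) --- whereas the paper passes over these in silence. In short: same skeleton, but your witness selection respects the constraint that test elements must lie in $D_1 \Delta D_2$, and thereby repairs a genuine, if small, gap in the published proof.
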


\begin{proof}

\begin{itemize}
    
    \item Assume that $a_1$ and $a_2$ are regular.
    
    Let $((b_1,b_2),(c_1,c_2))\in (D_1\times D_2)^2$ be such that $$(a_1,a_2)\otimes(b_1,b_2) = (a_1,a_2)\otimes(c_1,c_2).$$
    By definition of $\otimes$, $$(a_1+_1b_1,a_2+_2b_2) = (a_1+_1c_1,a_2+_2c_2).$$ From the fact that $a_1$ and $a_2$ are regular, it follows that $b_1=c_1$ and $b_2=c_2$, {\it  i.e.,} $(b_1,b_2)=(c_1,c_2)$. This proves that $(a_1,a_2)$ is regular.
    
    \item Assume that $(a_1,a_2)$ is regular. Let $(b_1,c_1)\in A^2$ be such that $a_1+_1b_1=a_1+_1c_1$. Then, $(a_1,a_2)\otimes (b_1,0_2) = (a_1+_1b_1,a_2) = (a_1+_1c_1,a_2) = (a_1,a_2)\otimes (c_1,0_2)$. Since $(a_1,a_2)$ is regular, we have that $(b_1,0_2)=(c_1,0_2)$. In particular, $b_1=c_1$, which proves that $a_1$ is regular. A similar argument shows that $a_2$ is also regular.
    
\end{itemize}

\end{proof}

\begin{lemma}\label{lem:successive_delta}

For all $(d_1,d_2,a)\in D_1\times D_2\times A$, $(d_1\Delta d_2)\Delta a = d_1\Delta (d_2\Delta a)$.

\end{lemma}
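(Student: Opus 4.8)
The plan is to prove the identity by a direct case analysis on the multiplicative regularity of $d_1$ in $\D_1$ and of $d_2$ in $\D_2$, while carefully tracking the additive identities that appear in each application of $\Delta$. Both sides are elements of $(D_1\times D_2)\times A$, respectively of $D_1\times (D_2\times A)$, and I would compare them under the canonical identification $((d_1,d_2),a)\leftrightarrow(d_1,(d_2,a))$ of these two sets.

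First I would record exactly which zero plays the role of $0_A$ in each of the four $\Delta$-operations involved. On the left-hand side, the inner product $d_1\Delta d_2$ treats $\D_1$ as the dioid and $\D_2$ as the commutative semiring, so the relevant additive identity is the neutral element of $\D_2$ for $\min_2$, namely $\infty_2$; the outer product $(d_1\Delta d_2)\Delta a$ is legitimate because, by Theorem \ref{thm:Delta lex order}, $\D_1\Delta\D_2$ is again a totally ordered idempotent dioid, and here the additive identity is $0_A$. On the right-hand side, $d_2\Delta a$ uses additive identity $0_A$, whereas the outer product $d_1\Delta(d_2\Delta a)$ uses the additive identity of the semiring $\D_2\Delta\A$, which by the definition of the $\Delta$-product is $(\infty_2,0_A)$. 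The other ingredient I would invoke is Lemma \ref{lemma:tuple of regular}: the pair $(d_1,d_2)$ lies in $reg(\D_1\Delta\D_2)$ if and only if $d_1\in reg(\D_1)$ and $d_2\in reg(\D_2)$.

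With these facts in hand the computation splits into three exhaustive cases. If $d_1\in reg(\D_1)$ and $d_2\in reg(\D_2)$, then $d_1\Delta d_2=(d_1,d_2)$ is regular in $\D_1\Delta\D_2$, so the left side equals $((d_1,d_2),a)$; on the right, $d_2\Delta a=(d_2,a)$ and then $d_1\Delta(d_2,a)=(d_1,(d_2,a))$, which matches. If $d_1\in reg(\D_1)$ but $d_2\notin reg(\D_2)$, then $d_1\Delta d_2=(d_1,d_2)$ fails to be regular by Lemma \ref{lemma:tuple of regular}, so the left side is $((d_1,d_2),0_A)$; on the right, $d_2\Delta a=(d_2,0_A)$ and $d_1\Delta(d_2,0_A)=(d_1,(d_2,0_A))$, again a match. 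Finally, if $d_1\notin reg(\D_1)$, then $d_1\Delta d_2=(d_1,\infty_2)$ (using that the semiring zero $0_A$ of $\D_2$ is $\infty_2$) and this is not regular, so the left side is $((d_1,\infty_2),0_A)$; on the right, $d_1\notin reg(\D_1)$ forces $d_1\Delta(d_2\Delta a)=(d_1,(\infty_2,0_A))$ irrespective of the value of $d_2\Delta a$, which matches once more.

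The only genuine obstacle is the bookkeeping: one must not conflate the additive identity $0_A$ of $\A$ with the additive identity $\infty_2$ of $\D_2$ nor with the additive identity $(\infty_2,0_A)$ of $\D_2\Delta\A$, and one must check that each intermediate element genuinely belongs to the relevant $\Delta$-product set — for instance, $(d_1,\infty_2)$ is a legal element of $D_1\Delta D_2$ precisely because its second coordinate is the zero $\infty_2$ of $\D_2$. Once the role of every zero is pinned down, the three cases are immediate and cover all possibilities, giving the claimed equality.
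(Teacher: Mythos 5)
Your proof is correct and takes essentially the same approach as the paper: a three-way case analysis on the regularity of $d_1$ in $\D_1$ and $d_2$ in $\D_2$, resolved via Lemma~\ref{lemma:tuple of regular}, yielding identical values $((d_1,d_2),a)$, $((d_1,d_2),0_A)$, and $((d_1,\infty_2),0_A)$ in the respective cases. Your explicit bookkeeping of which additive identity ($\infty_2$, $0_A$, or $(\infty_2,0_A)$) serves as the ``zero'' in each of the four $\Delta$-operations is left implicit in the paper's proof, but it is exactly the reasoning the paper relies on.
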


\begin{proof}

\begin{itemize}

\item Assume that $d_1\notin reg(\D_1)$.    
Then, 
$$(d_1\Delta d_2)\Delta a = (d_1,\infty_2)\Delta a = ((d_1,\infty_2),0_A)$$ by Lemma \ref{lemma:tuple of regular},    
and
$$d_1\Delta (d_2\Delta a) = (d_1,(\infty_2,0_A)),$$ because $d_1\notin reg(\D_1)$.
    
\item Assume that $d_1\in reg(\D_1)$ and $d_2\notin reg(\D_2)$.
Then, $$(d_1\Delta d_2)\Delta a = (d_1,d_2)\Delta a = ((d_1,d_2),0_A)$$ by Lemma \ref{lemma:tuple of regular}, and     
$d_1\Delta (d_2\Delta a) = (d_1,d_2\Delta a) = (d_1,(d_2,0_A))$.
    
\item Assume that $d_1\in reg(\D_1)$ and $d_2\in reg(\D_2)$.   
Then, 
$$(d_1\Delta d_2)\Delta a = (d_1,d_2)\Delta a = ((d_1,d_2),a)$$
by Lemma \ref{lemma:tuple of regular}, and    
$d_1\Delta (d_2\Delta a) = (d_1,d_2\Delta a) = (d_1,(d_2,a))$. 
This shows that $(d_1\Delta d_2)\Delta a = d_1\Delta (d_2\Delta a)$.

\end{itemize}

\end{proof}

\Associativity*

\begin{proof}

Notice that the product, zero element, and unity elements of the two structures $(\D_1\Delta\D_2)\Delta \A$ and $\D_1\Delta(\D_2\Delta \A)$ correspond through the identification $(D_1\times D_2)\times A=D_1\times(D_2\times A)$. We only have to verify that the two structures $(\D_1\Delta\D_2)\Delta \A$ and $\D_1\Delta(\D_2\Delta \A)$ have the same sets and the same addition.

\begin{itemize}
    
    \item To prove that $(D_1\Delta D_2)\Delta A = D_1\Delta (D_2\Delta A)$, it is sufficient to prove that $\forall (d_1,d_2,a)\in D_1\times D_2\times A$, $(d_1\Delta d_2)\Delta a = d_1\Delta (d_2\Delta a)$, which is true by Lemma \ref{lem:successive_delta}.
    
    \item We now show that the two structures have the same addition. \\Let $((d_1,d_2,a),(d'_1,d'_2,a')) \in (D_1\times D_2\times A)^2$.
    By Theorem \ref{thm:Delta lex order},

    \begin{eqnarray*}
        (d_1,(d_2,a))&\underset{\D_1\Delta(\D_2\Delta \A)}{\oplus}&(d'_1,(d'_2,a'))\\ &=&\left\{\begin{array}{cc}
         (d_1,(d_2,a)) & \text{ if }d_1<_{\D_1}d'_1 \\
         (d'_1,(d'_2,a')) & \text{ if } d'_1<_{\D_1}d_1 \\
        (d_1,(d_2,a) \underset{\D_2\Delta\A}{\oplus} (d'_2,a')) & \text{ if }d_1=d'_1
    \end{array} \right\}\\
    &=&
    \left\{\begin{array}{cc}
         (d_1,(d_2,a)) & \text{ if }d_1<_{\D_1}d'_1 \\
         (d'_1,(d'_2,a')) & \text{ if } d'_1<_{\D_1}d_1 \\
        (d_1,(d_2,a)) & \text{ if }d_1=d'_1\text{ and }d_2<_{\D_2}d'_2 \\
        (d'_1,(d'_2,a')) & \text{ if }d_1=d'_1\text{ and }d'_2<_{\D_2}d_2 \\
        (d_1,(d_2,a+_Aa')) & \text{ if }d_1=d'_1\text{ and }d_2=d'_2
    \end{array} \right\} \\
    &=&\left\{\begin{array}{cc}
         (d_1,(d_2,a)) & \text{ if }(d_1,d_2)<_{\D_1lex\D_2}(d'_1,d'_2) \\
         (d'_1,(d'_2,a')) & \text{ if }(d'_1,d'_2)<_{\D_1lex\D_2}(d_1,d_2) \\
         (d_1,(d_2,a+_Aa')) & \text{ if }(d_1,d_2)=(d'_1,d'_2)
    \end{array} \right\}.
    \end{eqnarray*}
    We see that the two structures $(\D_1\Delta\D_2)\Delta \A$ and $\D_1\Delta(\D_2\Delta \A)$ have the same addition, under the identification $D_1\times(D_2\times A) = (D_1\times D_2)\times A$.

\end{itemize}

\end{proof}

\DeltaProductMeasure*

\begin{proof}

\begin{itemize}

    \item \textbf{zero axiom:} $(w\Delta\mu)(\emptyset)=(w(\emptyset),0_A) = (\infty_D,0_A)$ because $w$ is a measure and $\infty_D$ is not regular.
    
    \item \textbf{unit axiom:} $(w\Delta\mu)(T^{\emptyset})=(w(T^{\emptyset}),\mu(T^{\emptyset})) = (0_D,1_A)$ because $w$ and $\mu$ are measures, and $0_D$ is regular.
    
    \item \textbf{additivity:} Let $\F_1,\F_2\subseteq T^{S'}$ be disjoint and  $S'\subseteq S$. Since $w$ is a measure, we have 
    \begin{equation}\label{eqn*}
    w(\F_1\uplus \F_2) = \min(w(\F_1),w(\F_2)).
    \end{equation}

 Suppose first that $w(\F_1)=w(\F_2)$. By idempotence of $\min$ and \eqref{eqn*}, $w(\F_1\uplus\F_2)=w(\F_1)=w(\F_2)$, and
$${\argmin}_w(\F_1\uplus\F_2)={\argmin}_w(\F_1)\uplus{\argmin}_w(\F_2).$$
Since $\mu$ is a measure,  
\begin{eqnarray*}
(w\Delta\mu)(\F_1\uplus\F_2)&=&
    (w(\F_1\uplus\F_2)) \Delta (\mu({\argmin}_w(\F_1\uplus\F_2)))\\ %
    &=&
    (\min(w(\F_1),w(\F_2))) \Delta  (\mu({\argmin}_w(\F_1) \uplus {\argmin}_w(\F_2))) 
    \\%(\text{by what directly precedes})\\
    &=& (\min(w(\F_1),w(\F_2))) \Delta  (\mu({\argmin}_w(\F_1)) +_A \mu({\argmin}_w(\F_2)))
    \\%(\text{Since $\mu$ is a $\A$-measure})\\
    &=& ((w(\F_1)) \Delta (\mu({\argmin}_w(\F_1)))) \oplus ((w(\F_2)) \Delta (\mu({\argmin}_w(\F_2))))\\% (\text{By definition of }\oplus)\\
    &=&(w\Delta\mu)(\F_1)\oplus (w\Delta\mu)(\F_2).
\end{eqnarray*}    
    
    Suppose now that $w(\F_1)<_Dw(\F_2)$. By Property \ref{prop:minimum} and \eqref{eqn*}, we have $w(\F_1\uplus \F_2)=w(\F_1)$, and thus ${\argmin}_w(\F_1\uplus\F_2)={\argmin}_w(\F_1)$, and:
 \begin{eqnarray*}
 (w\Delta\mu)(\F_1\uplus\F_2) &=& 
  (w(\F_1\uplus\F_2)) \Delta (\mu({\argmin}_w(\F_1\uplus\F_2)))\\% (\text{By definition of }w\Delta\mu)\\
  &=&(w(\F_1)) \Delta (\mu({\argmin}_w(\F_1))) %
  =(w\Delta\mu)(\F_1) \\%(\text{By definition of }w\Delta\mu)\\
  &=&((w(\F_1))\Delta \mu({\argmin}_w(\F_1)))\oplus ((w(\F_2))\Delta \mu({\argmin}_w(\F_2)))\\
 &=&
  (w\Delta\mu)(\F_1)\oplus(w\Delta\mu)(\F_2).
\end{eqnarray*}
    The proof in the case when $w(\F_2)<_Dw(\F_1)$ follows similarly.

    \item \textbf{elementary muliplicativity:} Let $f_1\in T^{S_1}$ and $f_2\in T^{S_2}$, for  disjoint subsets $S_1$ and $S_2$ of $S$.
By elementary multiplicativity of $w$,  
$$w(\{f_1\Join f_2\})=w(\{f_1\}) +_D w(\{f_2\}).$$
Also, ${\argmin}_w(\{f_1\})=\{f_1\}$ and ${\argmin}_w(\{f_2\})=\{f_2\}$.

Suppose that $w(\{f_1\})$ and $w(\{f_2\})$ are multiplicatively regular.
By Property \ref{prop:mult regular}, it follows that $w(\{f_1\Join f_2\})$ is multiplicatively regular, and
\begin{eqnarray*}
    (w\Delta\mu)(\{f_1\Join f_2\}) &=& (w(\{f_1\Join f_2\}),\mu(\{f_1\Join f_2\}))\\
    &=&  (w(\{f_1\})+_D w(\{f_2\}),\mu(\{f_1\})\times_A \mu(\{f_2\}))\\
    &=& (w(\{f_1\}),\mu(\{f_1\}))\otimes(w(\{f_2\}),\mu(\{f_2\}))\\
    &=& (w\Delta\mu)(\{f_1\}) \otimes (w\Delta\mu)(\{f_2\}),
\end{eqnarray*}    
    since $w(\{f_1\})\in reg(\D)$ and $w(\{f_2\})\in reg(\D)$.

 Suppose that one of $w(\{f_1\})$ and $w(\{f_2\})$ is not multiplicatively regular. Without loss of generality, assume that it is $w(\{f_1\})$.
 Then $$w(\{f_1\Join f_2\})=w(\{f_1\}) +_D w(\{f_2\})$$ is not multiplicatively regular by Property \ref{prop:mult regular}.

There exists $a_2\in A$ such that $(w\Delta\mu)(\{f_2\}) = (w(\{f_2\}),a_2)$. Indeed, we take $a_2:=\mu(\{f_2\})$ if $w(\{f_2\})\in reg(\D)$ and $a_2:=0_A$ if $w(\{f_2\}) \notin reg(\D)$.
 Then 
 \begin{eqnarray*}
(w\Delta\mu)(\{f_1\Join f_2\})&=& (w(\{f_1\Join f_2\}),0_A)\\
&=& (w(\{f_1\}+_D w(\{f_2\}),0_A\times a_2)\\ &=& (w(\{f_1\}),0_A)\otimes (w(\{f_2\}),a_2)\\
&=& (w\Delta\mu)(\{f_1\}) \otimes (w\Delta\mu)(\{f_2\}),
 \end{eqnarray*}
 by definition of $a_2$ and $w\Delta\mu$, since $w(\{f_1\})$ is not regular.

  \end{itemize}

\end{proof}

\end{document}